\definecolor{darkred}{rgb}{0.8,0.1,0.1}
\DeclareMathAlphabet{\mathbfsf}{\encodingdefault}{\sfdefault}{bx}{n}
\theoremstyle{plain}
\theoremstyle{definition}
\definecolor{NiColor}{RGB}{77,77,255}
\definecolor{NiColoRed}{RGB}{255,77,77}
\definecolor{NiCitation}{RGB}{0,181,26}
\newtheoremstyle{TheoremStyle}
{3pt}
{3pt}
{\slshape}
{}
{\bf}
{:}
{.5em}
{}
\newtheoremstyle{ExampleAndRemarkStyle}
{3pt}
{3pt}
{\slshape}
{}
{\bf}
{:}
{.5em}
{}
\newtheoremstyle{ProofStyle}
{3pt}
{3pt}
{}
{}
{\bf}
{:}
{.5em}
{}
\theoremstyle{TheoremStyle}
\newtheorem{theorem}{Theorem}
\newtheorem{corollary}[theorem]{Corollary}
\newtheorem{proposition}[theorem]{Proposition}
\newtheorem{lemma}[theorem]{Lemma}
\newtheorem{Definition}[theorem]{Definition}
\theoremstyle{ExampleAndRemarkStyle}
\newtheorem{remark}[theorem]{Remark}
\newtheorem{Example}[theorem]{Example} 
\theoremstyle{ProofStyle}
\newcommand{\calg}{\mathfrak{Alg}_c}
\newcommand{\alg}{\mathfrak{Alg}_c}
\newcommand{\bkgg}{\mathfrak{BkgG}}
\newcommand{\vect}{\mathfrak{Vec}}
\newcommand{\arr}{\mathsf{Arr}}
\newcommand{\obj}{\mathsf{Obj}}
\newcommand{\A}{\mathcal{A}} 
\newcommand{\ie}{\emph{i.e.}}
\newcommand{\wick}[1]{:\!{#1}\!:}
\newcommand{\C}{\mathbb{C}}
\newcommand{\R}{\mathbb{R}}
\newcommand{\N}{\mathbb{N}}
\title{%
	The algebra of Wick polynomials of a scalar field\\ on a Riemannian manifold
}
\author{%
	Claudio Dappiaggi$^{1,2,3,a}$, Nicol\`o Drago$^{4,5,b}$ 
	and Paolo Rinaldi$^{1,2,3,c}$\vspace{4mm}\\
	{\small $^1$ Dipartimento di Fisica -- Universit{\`a} di Pavia, Via Bassi 6, 27100 Pavia, Italy.}\vspace{2mm}\\
	{\small $^2$ INFN, Sezione di Pavia -- Via Bassi 6, 27100 Pavia, Italy.}\vspace{2mm}\\
	{\small $^3$ Istituto Nazionale di Alta Matematica -- Sezione di Pavia, Via Ferrata, 5, 27100 Pavia, Italy.}\vspace{2mm}\\
	{\small $^4$ Dipartimento di Matematica -- Universit{\`a} di Trento, via Sommarive 15, I-38123 Povo (Trento), Italy.}\vspace{2mm}\\
	{\small $^5$ INFN, TIFPA -- via Sommarive 15, I-38123 Povo (Trento), Italy.}\vspace{4mm}\\
	{\footnotesize ~$^a$ claudio.dappiaggi@unipv.it~,~$^b$ 
		nicolo.drago89@gmail.com~,~$^c$ paolo.rinaldi01@universitadipavia.it}
}
\date{\today}
\begin{document}
	
	\maketitle
	
	\begin{abstract}
		On a connected, oriented, smooth Riemannian manifold without boundary we consider a real scalar field whose dynamics is ruled by $E$, a second order elliptic partial differential operator of Laplace type. Using the functional formalism and working within the framework of algebraic quantum field theory and of the principle of general local covariance, first we construct the algebra of locally covariant observables in terms of equivariant sections of a bundle of smooth, regular polynomial functionals over the affine space of the parametrices associated to $E$. Subsequently, adapting to the case in hand a strategy first introduced by Hollands and Wald in a Lorentzian setting, we prove the existence of Wick powers of the underlying field, extending the procedure to smooth, local and polynomial functionals and discussing in the process the regularization ambiguities of such procedure. Subsequently we endow the space of Wick powers with an algebra structure, dubbed E-product, which plays in a Riemannian setting the same r\^ole of the time ordered product for field theories on globally hyperbolic spacetimes. In particular we prove the existence of the E-product and we discuss both its properties and the renormalization ambiguities in the underlying procedure. As last step we extend the whole analysis to observables admitting derivatives of the field configurations and we discuss the quantum M\o ller operator which is used to investigate interacting models at a perturbative level.
	\end{abstract}
	\paragraph*{Keywords:} locally covariant field theory, Euclidean algebraic quantum field theory, Wick polynomials
	
	\paragraph*{MSC 2010:} 81T20, 81T05

	
	\section{Introduction}\label{Section: Introduction}
	Algebraic quantum field theory is an axiomatic, mathematically rigorous framework which can be summarized as a two step approach \cite{Haag:1963dh}. In the first, one assigns to a physical system a $\ast$-algebra $\mathcal{A}$, whose elements are interpreted as observables, encompassing structural properties such as causality and the canonical commutation relations, see for example \cite{Benini:2013fia} for a review. In the second, one assigns to $\mathcal{A}$ a state, that is a positive and normalized linear functional, which allows via the GNS theorem to recover the standard probabilistic interpretation proper of quantum systems. This viewpoint has been very successful especially in the analysis of models of quantum field theories living on a globally hyperbolic Lorentzian spacetime, see for example \cite{Brunetti:2015vmh} for a recent collection of some notable results. In particular the algebraic approach has clarified and extended to curved backgrounds the analysis of interactions by means of perturbation theory and the associated renormalization ambiguities \cite{Brunetti-Duetsch-Fredenhagen-09,Rejzner:2016hdj}. The whole procedure is based on a few key ingredients. At the level of states, one needs to consider only those enjoying the so-called Hadamard condition, see for example \cite{Khavkine:2014mta}. This is a prescription on the form of the wavefront set of the two-point correlation functions of the underlying free field theory. It guarantees both that the ultraviolet behaviour of the quantum state coincides with that of the Poincar\'e vacuum and that the quantum fluctuations of all observables are finite. In addition one extends the collection of all observables to include also the Wick polynomials of the underlying fields endowed with a time-ordered product defining an algebra structure. This problem has been studied by several authors starting from \cite{Brunetti:1995rf} and particularly relevant are the seminal papers written by Hollands and Wald \cite{Hollands-Wald-01,Hollands-Wald-02}.
	We remark that, in these papers,
suitable analytic properties of the underlying structures have been assumed -- cf. \cite[Sec. 4.2]{Hollands-Wald-01,Hollands-Wald-02} --
	but such constraints have been recently weakened by Khavkine and Moretti in \cite{Khavkine:2014zsa, Khavkine-Melati-Moretti-17}.
	
	In almost all the analyses present in the literature, the problem of discussing interactions at a perturbative level in terms of Wick ordered quantum field has always been tackled under the assumptions that the underlying background is Lorentzian. Yet, in several instances it turns out that, if one considers models built on Riemannian manifolds, explicit calculations are often easier since one can use several tools and techniques coming from quantum statistical mechanics. In all these cases, these so-called Euclidean quantum field theories play only an auxiliary r\^ole and it is implicitly taken for granted that all results should be translated to a Lorentzian framework via a Wick rotation. This procedure is technically very delicate and it works only under very specific hypotheses, which have been investigated first by Osterwalder and Schrader \cite{Osterwalder:1973dx,Osterwalder:1974tc}. A further notable analysis in the algebraic framework can be found in \cite{Schlingemann:1998cw,Wald-79}.
	
	While the attitude of considering Euclidean quantum field theories only as an auxiliary tool is certainly justified in many instances, we are strongly advocating that this viewpoint is highly reductive. There exists a plethora of physically relevant models in quantum statistical mechanics, which are nothing but quantum field theories intrinsically defined on a Riemannian manifold. There are several examples ranging from Landau-Ginzburg theory to non-linear sigma models. The latter were recently studied within the framework of algebraic quantum field theory in connection to the derivation of Ricci flow \cite{CDDR18}.
    In all these cases there is no physical or mathematical reason to consider a Wick rotated version in a Lorentzian setting and therefore one needs to adopt an intrinsic viewpoint in which Euclidean field theories are studied independently from any Lorentzian counterpart.
    
    In this paper we adopt this perspective and we use the framework of algebraic quantum field theory considering a real scalar field on a smooth, oriented and connected Riemannian manifold of arbitrary dimension greater than $1$ and constructing the associated algebra of Wick polynomials. In our analysis we will be mainly inspired by \cite{Hollands-Wald-01,Hollands-Wald-02,Khavkine:2014zsa} who have solved completely this problem under the hypothesis that the underlying background is Lorentzian and globally hyperbolic. While we are strongly influenced by these papers, we stress that the problem that we are tackling is not a simple rewriting of these works in an Euclidean signature. Working on a Riemannian manifold leads to several structural and technical notable differences in comparison to the Lorentzian framework which we now highlight.
    
    As a matter of fact, let $(M,g)$ be a not necessarily compact, Riemannian, oriented, connected, smooth manifold of dimension $\dim M=D\geq 2$, such that $\partial M=\emptyset$.
    We consider on top of it a real scalar field $\phi:M\to\mathbb{R}$ whose dynamics is ruled by $E$, a second order, elliptic differential operator. Our first goal is to construct an algebra of observables associated to this system. To this end we employ the functional formalism, which has been successfully used in many instances in algebraic quantum field theory \cite{Brunetti-Duetsch-Fredenhagen-09,Rejzner:2016hdj}.
    Yet, contrary to the Lorentzian scenario, we do not consider the space of on-shell configurations and observables as functionals defined on this space, but we work only off-shell. The reason is two-fold. From the physical viewpoint, the lessons we learn from quantum statistical mechanics and from the state sum approach is that one needs to consider all accessible configurations and not only those selected by the equations of motion. From a mathematical and structural perspective, instead, information on $E$ is encoded in the associated fundamental solution $G$. Yet, working directly with it is problematic, since neither its existence nor its uniqueness are guaranteed, which is parameterized by the kernel of $E$, see \cite{Li_Tam}. For this reason one needs to consider in place of $G$ the collection of all parametrices associated to $E$, see {\it e.g.} \cite{Shubin,Wells}, which always exist yielding an inverse of $E$ up to smoothing operators. In sharp contrast with the Lorentzian framework, where an algebra of observables is constructed using the distinguished, uniquely defined, advanced and retarded fundamental solutions associated to a symmetric hyperbolic partial differential equation, our observables are constructed as equivariant sections of an affine bundle whose base space is the collection of all parametrices while the typical fiber is a vector space of regular and polynomial functionals. These are endowed with a fiberwise algebra structure induced by the parametrix of the operator $E$.
    
    The ensuing $\ast$-algebra, dubbed $\mathcal{A}_{\mathrm{reg}}(M;g)$ enjoys notable properties. Contrary to the Lorentzian counterpart, it is commutative as a consequences of the parametrices being symmetric. In addition the construction is functorial. Hence, following the same ideas of \cite{Brunetti:2001dx}, the assignment of $\mathcal{A}_{\mathrm{reg}}[M;h]$ to $(M;g)$ is local and covariant, thus allowing us to identify it as an Euclidean locally covariant quantum field theory. As a byproduct we can introduce the notion of a locally covariant observable, which includes as a special sub case that of a locally covariant quantum field.
    
    The subsequent goal of our investigation is to enlarge $\mathcal{A}_{\mathrm{reg}}[M;g]$ so as to include Wick polynomials while keeping the property that the construction is local and covariant. To this end we consider a larger class of functionals, namely those which are polynomial and local. The problem that we need to face is that the product defined on the algebra of regular functionals is not well-defined on this new class on account of the singular structure of the parametrices of $E$. In order to bypass this hurdle, we divide our analysis in two main steps. In the first one we focus on the so-called Wick powers, which are, roughly speaking, an integer power of a single quantum field configuration. We generalize the procedure outlined in the seminal papers \cite{Hollands-Wald-01,Hollands-Wald-02}, though our underlying framework follows that of \cite{Khavkine:2014zsa} in which it has been shown that one can consider smooth manifolds rather than those analytic, thanks to an application of the Peetre-Slov\'ak theorem, see \cite{Navarro-Sancho-14} and references therein. We prove existence of Wick ordered powers and we discuss and characterize the ambiguities in their construction. This part of our work generalizes to the case of a real scalar field the analysis for non-linear sigma models which appeared in \cite{CDDR18}. We mention that in \cite{Dang:2019byu,Dang-18} one can find a complementary analysis of the Wick squared scalar field on a compact Riemannian manifold.
    
    At this stage we can realize the second notable difference from the Lorentzian counterpart. In discussing the quantization of a field theory on a globally hyperbolic spacetime, one needs to deal with two distinguished algebra structures, the one induced by the so-called $\star$-product and that associated to the time-ordered product. The latter is the one relevant for endowing Wick polynomials with an algebra structure. In a Riemannian setting one deals with a single commutative product which is well defined on regular functionals while it needs to be extended also to Wick ordered powers, giving rise to what we refer to as E-product. Even in this case we prove existence of such product and we characterize the non-uniqueness of its definition which is the source of the renowned regularization ambiguities in the case in hand.  
    
    In the second main step of our paper, we extend our construction to account also for Wick polynomials containing derivatives of the field configurations, while keeping track of the covariance of the construction.
    In this procedure, following \cite{Hollands-Wald-05}, we need to add two further requirements in comparison to those needed to construct Wick powers.
    These are the Leibnitz rule and the principle of perturbative agreement (PPA), see also \cite{Drago-Hack-Pinamonti-2016} for a generalization.
    Both can be read as necessary consistency conditions and the second one entails heuristically that, in an interacting theory, every linear contribution to the equation of motion can be equivalently considered as part of the free theory or of the interaction without affecting the overall construction.
   	It is important to observe that, while implementing the Leibnitz rule appears to be harmless, in \cite{Hollands-Wald-05}, it has been shown that the PPA can fail for parity violating Lorentzian field theories -- actually, in the Lorentzian framework, it will always fail in two-dimensions.
    Such failure can be interpreted as an unavoidable ``anomaly" in the quantization procedure.
    Yet it is known that there exist instances where the PPA can be coherently implemented, {\it e.g.} charged Dirac fields, \cite{Zahn:2013ywa}.
    Finally it is worth mentioning that our results are complementary to those obtained by Keller in \cite{Keller-09,Keller:2010xq} in the analysis of Epstein-Glaser renormalization in an Euclidean framework.
    We remark that the net of algebras that we obtain at the end of our construction seems to bear similarities with factorization algebras and, in our opinion, it would be worth making a detailed comparison along the same lines of \cite{Gwilliam-Rejzner-17,Benini_Perin_Schenkel} in the Lorentzian setting.
    
    As last step, we investigate the structure of the $\ast$-algebra of observables when the underlying Lagrangian is not only quadratic in the field configurations but it contains also an interacting term. This is codified in a local perturbation, so that it can be analysed in the framework of pAQFT as described in \cite{Brunetti:2015vmh, Rejzner:2016hdj}. The key point of this approach is the possibility to realize every local and covariant observable of the interacting theory as a formal power series in the algebra of the underlying free field theory. This is encoded in a linear and covariant map $\mathsf{R}_V$, dubbed quantum M\o ller operator, whose construction is analysed in the framework of Euclidean locally covariant theories.
    The outcome is that $\mathsf{R}_V$ is both local and covariant only if one selects a fundamental solution $G$ of the underlying elliptic operator $E$ -- \textit{cf.} Section \ref{Section: Interacting models} for a more detailed discussion.
    Contrary to the Lorentzian scenario, where such selection is locally covariant when working with the category of globally hyperbolic spacetimes, this is not the case in the Euclidean regime. Hence, to bypass this hurdle, one must encode the choice of $G$ as a background datum in the underlying category in order to restore local covariance. This procedure generalizes a similar strategy followed in \cite[Sec. 6]{Benini:2013ita} when dealing with the failure of isotony in the analysis of the interplay between the principle of general local covariance and the quantization of Abelian gauge theories.
    
    \vskip .3cm
    
    \noindent The paper is organized as follows: in Section \ref{Section: General Setting} we fix notation and conventions, while in Section \ref{Section: Euclidean Locally Covariant Field Theories} we introduce the notion of an Euclidean locally covariant field theory, proving that a real scalar field on a smooth, connected, oriented Riemannian manifold, whose dynamics is ruled by a second order, elliptic differential operator can be described within this framework. In Section \ref{Section: Locally Covariant Observables and Quantum Fields} we introduce the notion of locally covariant observables as a preliminary step to discuss Wick ordered powers of quantum fields. This is the core of Section \ref{Section: Wick Ordered Powers of Quantum Fields} in which we discuss Wick powers, their existence and the ambiguities in their construction. Subsequently we investigate how to endow Wick powers with an algebra structure. In Section \ref{Section: E-Product of Wick Ordered Powers of Quantum Fields} we discuss the so-called E-product which is a local and covariant extension of the one introduced in Section \ref{Section: Euclidean Locally Covariant Field Theories} for regular functionals. Also in this case we prove existence of the E-product and we discuss the ambiguities in its construction. In Section \ref{Section: The Case of a Scalar Field with Derivatives} we extend our analysis to account also for Wick polynomials including derivatives of the field configurations. This forces us to introduce two new requirements, the Leibnitz rule and the principle of perturbative agreement which are discussed in detail. In Section \ref{Section: Interacting models} we discuss the $\ast$-algebra of observables of interacting field theories in the framework of perturbative algebraic quantum field theory. In particular we study the quantum M\o ller operator and its interplay with locality and covariance. Finally in Appendix \ref{Appendix: Peetre-Slovak} we recall one of the main results that we use, namely the Peetre-Slov\'ak theorem. 
    
\subsection{General Setting}\label{Section: General Setting}
	
Goal of this section is to fix notations and conventions, introducing the key geometric and analytic structures, which play a r\^ole in this work. With $(M,g)$ we denote a connected, oriented and smooth Riemannian manifold, with $\dim M=\mathrm{D}\geq 2$. In addition, for simplicity we assume that $M$ has empty boundary, {\it i.e.} $\partial M=\emptyset$.
Notice that we are not assuming that $M$ is compact.
On top of $M$, we consider a real scalar field $\varphi:M\to\mathbb{R}$, whose associated space of real-valued kinematic configurations is $\mathcal{E}(M)\equiv C^\infty(M;\mathbb{R})$. In this paper we shall adopt the notation $C^\infty(M)\equiv C^\infty(M;\mathbb{R})$.
Borrowing the nomenclature from the Lorentzian realm, dynamical configurations are the extrema of the Lagrangian density,
	\begin{align}\label{Eqn: Lagrangian Density}
		\mathcal{L}_E[\varphi]\vcentcolon=\langle\varphi,E\varphi\rangle\mu_g, \qquad\varphi\in \mathcal{E}(M),
	\end{align}
	where $\mu_g$ is the metric induced volume form, while $\langle,\rangle$ stands for the pointwise product between smooth functions.
	In addition $E:\mathcal{E}(M)\to\mathcal{E}(M)$ is a generic operator of Laplace type, that is a formally self-adjoint second order elliptic partial differential operator whose principal symbol is $g_{ij}(x)k^ik^j$ for every $x\in M$ and for every $k\in T^*_xM$.
	Hence, in every local chart, such operator reads
	\begin{equation}\label{Eqn: local_form_E}
		E\vcentcolon
		=-(\nabla_j-A_j)g^{jk}(\nabla_k+A_k)+c\,,
	\end{equation}
	where $\nabla$ stands for the covariant derivative, $A\in\Gamma(T^*M)$ while  $c\in C^\infty(M)$. 
	If both $A$ and $c$ vanish, $E$ coincides with the Laplace-Beltrami operator built out of $g$. In the following we shall consider $(g,A,c)$ as background structures and it is important to evaluate their so-called \emph{engineering dimensions} $d_\varphi, d_A,d_c\in\R$. These coefficients are determined by considering the scaling transformations  
	\begin{align*}
		g\mapsto g_\lambda\vcentcolon=\lambda^{-2}g,\quad\varphi\mapsto\varphi_\lambda\vcentcolon=\lambda^{d_\varphi}\varphi,\quad A\mapsto A_\lambda\vcentcolon=\lambda^{d_A}A\quad c\mapsto c_\lambda\vcentcolon=\lambda^{d_c}c,
	\end{align*}
	and requiring the Lagrangian density to be invariant under such transformations, namely
	\begin{align*}
		\mathcal{L}_{E_\lambda}[\varphi_\lambda]\equiv\mathcal{L}[\varphi_\lambda, g_\lambda, A_\lambda, c_\lambda]=\mathcal{L}[\varphi, g, A, c]\equiv\mathcal{L}_E[\varphi]. 
	\end{align*}
	Recalling Equation \eqref{Eqn: Lagrangian Density} and the scaling behaviour of the volume measure $\mu_{\lambda^{-2}g}=\lambda^{-\mathrm{D}}\mu_g$, a straightforward computation yields
	\begin{align}\label{Eqn: Engineering Dimension}
		d_\varphi=\frac{\mathrm{D}-2}{2},\qquad d_A=0,\qquad d_c=2.
	\end{align}
\begin{remark}\label{Remark:general_framework}
		We observe that an equivalent framework, which we could have considered, consists of picking as basic data a connected, oriented, smooth manifold $M$, still for simplicity with empty boundary, together with a generic second order elliptic differential operator $K$ acting on scalar function. In this context one can endow $M$ with a smooth Riemannian metric defined directly out of the principal symbol of $K$. Hence, while, on the one hand, opting for $E$ as in \eqref{Eqn: local_form_E} does not entail a loss of generality, on the other hand, it is a more convenient setting to emphasize and to analyze the r\^ole of general local covariance in the next sections.
\end{remark}
	
	\section{Euclidean Locally Covariant Field Theories}\label{Section: Euclidean Locally Covariant Field Theories}
	In this section, we have a twofold goal. First of all we define the notion of an Euclidean locally covariant field theory and secondly we prove that the model of a real scalar field as per \eqref{Eqn: Lagrangian Density} and \eqref{Eqn: local_form_E} fits in this scheme. To this end, we shall make use of the language of categories following the same ideas developed for the first time in the Lorentzian setting in the seminal work \cite{Brunetti:2001dx}. In this endeavour we follow in spirit and we extend partly the framework of \cite{CDDR18}. Hence we start by defining the basic ingredients:	

\begin{enumerate}
\item $\bkgg$ denotes the category of \emph{background geometries}, such that
	\begin{itemize}
		\item $\obj(\bkgg)$ is the collection of pairs $(M;h)$, where $M$ denotes a smooth, connected and oriented manifold with empty boundary and with $\dim M=\mathrm{D}\geq 2$, whereas $h\equiv (g, A, c)$ identifies the background data, that is $A\in\Gamma(T^*M)$, $c\in C^\infty(M)$ while $g\in \Gamma(S^2T^*M)$ is a Riemannian metric;
		\item $\arr(\bkgg)$ is the collection of morphisms between $(M;h),(M^\prime;h^\prime)\in\obj(\bkgg)$ which are specified by an orientation preserving isometric embedding between $\chi:M\to M^\prime$ such that $h=\chi^*h^\prime$ where $h^\prime\equiv(g^\prime,A^\prime,c^\prime)$.
	\end{itemize}
\item $\calg$ is the category whose objects are unital, commutative $^*$-algebras while the arrows are unit preserving, injective $^*$-homomorphisms.
\item $\vect$ is the category whose objects are real vector spaces whereas whose arrows are injective linear morphisms.
\end{enumerate}

\begin{remark}\label{Remark: Scaling transformation}
Notice that, similarly to \cite{Khavkine:2014zsa} and to \cite{CDDR18}, $\bkgg$ enjoys the property of being \emph{dimensionful}, \emph{i.e.}, in view of \eqref{Eqn: Engineering Dimension}, it is endowed with an action of $\R_+\vcentcolon=(0,\infty)$ on $\obj(\bkgg)$ 
\begin{align}\label{Eqn: Scaling Transformation}
(M;h)=(M;g,A,c)\mapsto(M;h_\lambda)\vcentcolon=(M; g_\lambda,A_\lambda,c_\lambda)\vcentcolon=(M; \lambda^{-2}g, A,\lambda^2c),
\end{align}
which is preserved by the arrows of $\bkgg$.
\end{remark}

\begin{Definition}\label{Def: Euclidean locally covariant theory}
A (scalar) \emph{Euclidean locally covariant field theory} is a pair $(\mathcal{A},\lbrace\varsigma_{\lambda,\mu}\rbrace_{\lambda,\mu\in(0,+\infty)})$ made of the following data:
\begin{enumerate}
	\item
	$\mathcal{A}$ is a covariant functor $\A\vcentcolon\bkgg\to\calg$ .
	For any $\lambda,\mu>0$, let $\A_\lambda\vcentcolon=\A\circ\rho_\lambda\vcentcolon\bkgg\to\calg$ be the covariant functor where $\rho_\lambda\vcentcolon\bkgg\to\bkgg$ is the functor acting as the identity on $\arr(\bkgg)$ and according to \eqref{Eqn: Scaling Transformation} on $\obj(\bkgg)$.
	\item Then for all $\lambda,\mu\in(0,+\infty)$, $\varsigma_{\lambda,\mu}$ is a natural isomorphism $\varsigma_{\lambda,\mu}\colon\colon\A_\mu\to\A_\lambda$ such that, for every $\lambda,\mu,\rho\in\mathbb{R}_+$, 
		\begin{align}
			\varsigma_{\lambda,\mu}[M;h]=
			\varsigma_{\lambda,\rho}[M;h]\circ\varsigma_{\rho,\mu}[M;h]\,,\qquad
			\varsigma_{\lambda,\lambda}[M;h]=\operatorname{Id}_{\mathcal{A}[M;h]}\,.
		\end{align}
		for any $(M;h)\in\obj(\bkgg)$.
		For the sake of brevity, in the following we shall write $\varsigma_\lambda[M;h]:=\varsigma_{1,\lambda}[M;h]$.
\end{enumerate}

\end{Definition}

\begin{remark}
	Observe that, in comparison to \cite{CDDR18}, we have strengthened the definition of an Euclidean locally covariant theory by hard coding the requirement that the $\ast$-algebra associated to each background geometry is commutative. As we will show, in the model in hand this requirement is a natural byproduct of the structural property of the elliptic operator $E$ defined in \eqref{Eqn: local_form_E}.
\end{remark}

\begin{remark}
Notice that $\varsigma_\lambda$ is such that the scaling transformation of Equation \eqref{Eqn: Scaling Transformation} is implemented coherently in the theory described by the functor $\mathcal{A}$, hence entailing that $\mathcal{A}_\lambda$ can be interpreted as the functor describing the theory $\mathcal{A}$ at the scale $\lambda$.
\end{remark}

\subsection{The scalar field as an Euclidean locally covariant field theory}

We are now in position to reformulate the model ruled by the Lagrangian density \eqref{Eqn: Lagrangian Density} as an Euclidean locally covariant theory. To this end, we start by considering an arbitrary but fixed background geometry $(M;h)\in\obj(\bkgg)$, showing how to build a unital, commutative $\ast$-algebra $\A[M;h]$ associated with the Lagrangian density \eqref{Eqn: Lagrangian Density} -- \textit{cf.} definition \ref{Def: locally covariant algebra of regular observables} and proposition \ref{Prop: local and covariant algebra of local polynomials}.

Hence, let $(M;h)\in\obj(\bkgg)$ and let $E$ be the operator \eqref{Eqn: Lagrangian Density}. Being elliptic and formally self-adjoint it admits \cite[Th. 4.4]{Wells} a symmetric operator $\widetilde{P}\vcentcolon \mathcal{D}(M)\to \mathcal{E}(M)$ which is unique up to smoothing operators such that 
\begin{align}\label{Eqn: defining property of parametrix}
\widetilde{P}E-\mathrm{Id}_{\mathcal{D}(M)}\in
\mathcal{E}(M\times M)\,,\qquad
E\widetilde{P}-\mathrm{Id}_{\mathcal{D}(M)}\in
\mathcal{E}(M\times M)\,.
\end{align}
In addition, observe that each operator $\widetilde{P}$ identifies an associated parametrix, that is a bi-distribution $P\in\mathcal{D}^\prime(M\times M)$, such that, for all $f,f^\prime\in \mathcal{D}(M)$, $P(f\otimes f^\prime)=P(f^\prime\otimes f)=\langle \widetilde{P}(f),f^\prime\rangle_g$ where $\langle,\rangle_g$ is the metric induced pairing between $\mathcal{E}(M)$ and $\mathcal{D}(M)$.
The singularities of $P$ are codified in its wavefront set, which, as a consequence of \cite[Corol. 8.3.2]{Hormander-83}, reads
\begin{align}\label{Eqn: Parametrix wave front set}
\mathrm{WF}(P)=\lbrace(x, k_1;x,k_2)\in T^*(M\times M)\setminus\lbrace0\rbrace\;|\; k_1+k_2=0\rbrace.
\end{align}
In the following, we will denote with $\mathrm{Par}[M;h]$ the set of symmetric parametrices associated with the theory on $(M;h)$. In view of Equation \eqref{Eqn: defining property of parametrix}, $\mathrm{Par}[M;h]$ is an affine space modeled on $\mathcal{E}(M\times M)$.

\begin{remark}\label{Remark: Hadamard representation}
	Recall that each parametrix $P\in\mathcal{D}^\prime(M\times M)$ associated to the elliptic operator $E$ admits a Hadamard representation \cite[Chap. 5]{Garabedian-98}.	For an arbitrary but fixed $x_0\in M$ let $O$ be a convex geodesic neighbourhood centered at $x_0$.
	Then for all $x,y\in O$, the associated integral kernel reads
	\begin{align}\label{Eqn: Hadamard representation}
		P(x,y)=H(x,y)+W_P(x,y), \quad H(x,y)=\frac{U(x,y)}{\sigma^\frac{\mathrm{D}-2}{2}(x,y)}+V(x,y)\log\left(\frac{\sigma(x,y)}{\nu^2}\right),
	\end{align}
	where $\nu\in\mathbb{R}$ is an arbitrary reference length, $\sigma(x,y)$ is the halved squared geodesic distance between $x$ and $y$ while $U,V,W_P\in \mathcal{E}(O\times O)$ are symmetric functions with $V=0$, if $\mathrm{D}$ is odd.
	The coefficients $U,V$ in \eqref{Eqn: Hadamard representation} are defined in terms of a formal power series in $\sigma$, that is, $V(x,y)=\sum_nV_n(x,y)\sigma(x,y)^n$, $U(x,y)=\sum_nU_n(x,y)\sigma(x,y)^n$.
	The functions $V_n(x,y), U_n(x,y)$ satisfy a hierarchical system of transport equations, built only out of the background geometric data $(M;h)$ and of the elliptic operator $E$.
	The series defining $U,V$ can be made convergent locally by introducing suitable cut-off functions which do not alter the singular behaviour in the limit $x\to y$ -- \textit{cf.} \cite[Sec. 5.2]{Hollands-Wald-01}.
	$H$ is also known as the Hadamard parametrix and it codifies locally the singular structure of $P$.
	Moreover notice that, although $W_P(x,y)$ is well-defined only for $x,y\in\mathcal{O}$, its coinciding point limit $[W_P](x):=W_P(x,x)$ can be extended, via a partition of unity argument, to a globally well-defined function $[W_P]\in\mathcal{E}(M)$. The procedure does not depend on the chosen partition of unity.
\end{remark}

Having introduced the key structures, our strategy is to consider an arbitrary but fixed parametrix $P\in\mathrm{Par}[M;h]$ building a unital $^*$-algebra associated to the theory ruled by the operator $E$ as in \eqref{Eqn: local_form_E}. At a later stage, we will show how to remove the dependence from the parametrix chosen. Therefore we need to define suitable classes of functionals -- see {\it e.g.} \cite{Brouder-Dang-Gengoux-Rejzner-18},

\begin{Definition}\label{Def: functionals}
A functional $F\vcentcolon \mathcal{E}(M)\to\C$ is called:
\begin{itemize}
\item \emph{smooth} if, for any $\varphi,\varphi_1,\dots\varphi_k\in \mathcal{E}(M)$, with $k\geqslant1$, the $k$-th functional derivative $F^{(k)}[\varphi]$, defined as
\begin{align*}
	\big\langle F^{(k)}[\varphi],\varphi_1\otimes\ldots\otimes\varphi_k\big\rangle\vcentcolon=
			\frac{\partial^k}{\partial s_1\dots\partial s_k}F\left(\varphi+\sum_{i=1}^{k}s_i\varphi_i\right)\bigg|_{s_1=\ldots s_k=0}\,,
\end{align*}
identifies a symmetric and compactly supported distribution, namely $F^{(k)}[\varphi]\in \mathcal{E}^\prime(M^k)$ where $M^k\vcentcolon=\underbrace{M\times...\times M}_k$,
\item \emph{regular} if,  for any $\varphi,\varphi_1,\dots,\varphi_k\in \mathcal{E}(M)$, with $k\geqslant1$,  $F^{(k)}[\varphi]\in \mathcal{D}(M^k)$;
\item \emph{polynomial} if it has only a finite number of non-vanishing functional derivatives;
\item \emph{compactly supported} if $\overline{\bigcup\limits_{\varphi\in\mathcal{E}(M)}\mathrm{supp}(F^{(1)}[\varphi])}$ is compact;
\item \emph{local} if, for all $k\in\N$,
	\begin{itemize}
	\item $\mathrm{supp}(F^{(k)}[\varphi])\subset \mathrm{Diag}(M^k)\vcentcolon=\lbrace(x_1,\dots, x_k)\in M^k\,|\,x_1=\dots=x_k\rbrace$, for all $\varphi\in \mathcal{E}(M)$;
	\item for all $\varphi\in\mathcal{E}(M)$, the wave front set $\operatorname{WF}(F^{(k)}[\varphi])$ is contained in $D_k$, the conormal of the thin diagonal, that is $D_k\vcentcolon=\left\{(x_1,\zeta_1,...,x_k,\zeta_k)\in T^*(M^k)\;|\;x_1=\ldots=x_k\,\mathrm{and}\,\sum\limits_{i=1}^k\zeta_i=0\right\}$.
	\end{itemize}
\end{itemize}
We denote with $\mathcal{P}_{\mathrm{reg}}[M;h]$ (resp. $\mathcal{P}_{\operatorname{loc}}[M;h]$) the set of polynomial and regular (resp. polynomial and local), compactly supported functionals on $M$.
We also denote with $\mathcal{P}[M;h]$ the commutative and associative, unital $\ast$-algebra generated by $\mathcal{P}_{\operatorname{loc}}[M;h]$ with respect to the pointwise product. The $\ast$-involution is induced by complex conjugation.
\end{Definition}

\begin{remark}\label{Remark: functional without dependence on derivative of the field configuration}
	In this section we are implicitly assuming that all functionals $F$ are such that $F(\varphi)$ does not depend on the derivatives of $\varphi$, being in addition polynomial.
	For example we are excluding functionals such as  $F(\varphi):=\int_M\mu_g(x)\mu_g(y)\,\omega^{ab}(x,y)\varphi(x)\partial_a\varphi(y)\partial_b\varphi(y)$ where $\omega\in\Gamma_{\mathrm{c}}(TM\boxtimes TM)$.
	We shall remove this limitation in Section \ref{Section: The Case of a Scalar Field with Derivatives}.
\end{remark}

\begin{proposition}\label{Prop: regular algebra}
The vector space $\mathcal{P}_{\mathrm{reg}}[M;h]$ of smooth, regular and polynomial functionals is an associative and commutative $^*$-algebra if endowed with the following product: for any $F,G\in\mathcal{P}_{\mathrm{reg}}[M;h]$,
\begin{align}\label{Eqn: Algebra product}
	(F\cdot_PG)(\varphi)=
	F(\varphi)G(\varphi)+\sum_{k\geqslant1}\frac{1}{k!}\langle F^{(k)}[\varphi], P^{\otimes k}G^{(k)}[\varphi]\rangle,
\end{align}
where
$P^{\otimes k}G^{(k)}[\varphi]\in \mathcal{E}(M^k)$ is the extension of $\underbrace{P\otimes...\otimes P}_k$ to $G^{(k)}[\varphi]$ according to \cite[Thm. 8.2.13]{Hormander-83}.
The $^*$-involution on $\mathcal{P}_{\mathrm{reg}}[M;h]$ is completely fixed by $F^*(\varphi)=\overline{F(\varphi)}$ for all $F\in\mathcal{P}_{\mathrm{reg}}[M;h]$. We denote with $\mathcal{F}_{\operatorname{reg},P}[M;h]$ the $^*$-algebra $(\mathcal{P}_{\mathrm{reg}}[M;h], \cdot_P,*)$.
\end{proposition}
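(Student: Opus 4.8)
The plan is to verify, in turn, that the product $\cdot_P$ is well defined, that it maps $\mathcal{P}_{\mathrm{reg}}[M;h]\times\mathcal{P}_{\mathrm{reg}}[M;h]$ back into $\mathcal{P}_{\mathrm{reg}}[M;h]$, and finally that it is commutative, associative and compatible with the $\ast$-involution. First I would address well-definedness. Since $F$ and $G$ are polynomial, only finitely many derivatives $F^{(k)}[\varphi]$, $G^{(k)}[\varphi]$ are non-vanishing, so the series in \eqref{Eqn: Algebra product} is a finite sum for each $\varphi$. Each summand makes sense because, by regularity, $F^{(k)}[\varphi],G^{(k)}[\varphi]\in\mathcal{D}(M^k)$ are smooth and compactly supported; hence $P^{\otimes k}G^{(k)}[\varphi]\in\mathcal{E}(M^k)$ (a parametrix maps test functions to smooth functions, which is precisely the extension of $P^{\otimes k}$ invoked in the statement), and pairing it with the compactly supported $F^{(k)}[\varphi]$ yields a well-defined number depending smoothly on $\varphi$.

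To see that $F\cdot_P G$ again lies in $\mathcal{P}_{\mathrm{reg}}[M;h]$, I would compute its functional derivatives. A Leibniz rule distributes each derivative either onto $F$, onto $G$, or onto a $P$-contraction (raising its order), so $(F\cdot_P G)^{(n)}[\varphi]$ is a finite sum of contractions of $F^{(i)}[\varphi]$ and $G^{(j)}[\varphi]$ through powers of $P$. Polynomiality is preserved because the orders appearing are bounded by the (finite) top degrees of $F$ and $G$, so only finitely many $(F\cdot_P G)^{(n)}$ are non-zero. Regularity is preserved because the $n$ free legs of $(F\cdot_P G)^{(n)}[\varphi]$ are inherited from derivatives of $F$ or $G$, which are test functions and thus carry compact support, while the contracted legs produce smooth factors; hence each $(F\cdot_P G)^{(n)}[\varphi]\in\mathcal{D}(M^n)$, and smoothness in $\varphi$ follows since these derivatives depend smoothly on $\varphi$.

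Commutativity is immediate: since $P$ is symmetric, $P(f\otimes f')=P(f'\otimes f)$, so the generic term satisfies $\langle F^{(k)}[\varphi],P^{\otimes k}G^{(k)}[\varphi]\rangle=\langle G^{(k)}[\varphi],P^{\otimes k}F^{(k)}[\varphi]\rangle$, whence $F\cdot_P G=G\cdot_P F$. For the $\ast$-structure I would use that $P$ is \emph{real}, being the symmetric parametrix of the real, formally self-adjoint operator $E$, together with $\overline{F^{(k)}[\varphi]}=(F^*)^{(k)}[\varphi]$ for real test directions; conjugating \eqref{Eqn: Algebra product} then gives $(F\cdot_P G)^*=F^*\cdot_P G^*$, which combined with commutativity shows that $\cdot_P$ is compatible with the involution.

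The main obstacle is associativity. I would prove $(F\cdot_P G)\cdot_P H=F\cdot_P(G\cdot_P H)$ by showing that both sides coincide with the totally symmetric triple contraction
\[
T(F,G,H)[\varphi]=\sum_{a,b,c\geqslant 0}\frac{1}{a!\,b!\,c!}\,C_{a,b,c}\big(F^{(a+b)}[\varphi],\,G^{(a+c)}[\varphi],\,H^{(b+c)}[\varphi]\big),
\]
where $C_{a,b,c}$ joins $a$ arguments of $F$ with $a$ of $G$, $b$ arguments of $F$ with $b$ of $H$, and $c$ arguments of $G$ with $c$ of $H$, each junction realized by a factor $P$. Starting from the left-hand side I would compute the derivatives of the inner product $F\cdot_P G$ by the Leibniz rule used above, then expand the outer $\cdot_P H$; a binomial reorganisation of the resulting factorials, together with the symmetry of $P$ which allows every contracted pair to be attached to either partner indifferently, collapses the expression to $T(F,G,H)$, and the left–right symmetry of the bookkeeping yields the same sum from $F\cdot_P(G\cdot_P H)$. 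This combinatorial resummation is the only genuinely delicate point; it is conveniently organised by writing $F\cdot_P G=m\circ e^{\Gamma_P}(F\otimes G)$, with $m$ the pointwise product and $\Gamma_P$ the operator contracting one derivative of each tensor factor through $P$, so that associativity reduces to the mutual commutativity of the pairwise contraction operators $\Gamma_P^{12},\Gamma_P^{13},\Gamma_P^{23}$ on $F\otimes G\otimes H$ and the consequent factorisation of their exponentials.
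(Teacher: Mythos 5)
Your proposal is correct and follows the same route as the paper's proof: well-definedness comes from regularity (each term pairs a test function with a smooth function) together with polynomiality (finiteness of the sum), and commutativity from the symmetry of the parametrix. The paper's own proof is in fact terser than yours --- it dismisses associativity as holding ``per construction'' --- whereas you supply the actual argument (writing $F\cdot_P G=m\circ e^{\Gamma_P}(F\otimes G)$ and reducing associativity to the mutual commutativity of the pairwise contractions $\Gamma_P^{12},\Gamma_P^{13},\Gamma_P^{23}$, which is the standard and correct way to make ``per construction'' precise), and you additionally verify closure of $\mathcal{P}_{\mathrm{reg}}[M;h]$ under $\cdot_P$ and the compatibility of $\cdot_P$ with the $\ast$-involution via the realness of $P$, two points the paper leaves implicit.
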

\begin{proof}
	First of all, notice that \eqref{Eqn: Algebra product} is well defined. On the one hand, the functional $F$ and $G$ being regular, their derivatives $F^{(k)}[\varphi]$ and $G^{(k)}[\varphi]$ identify smooth and compactly supported functions and thus every term $\langle F^{(k)}[\varphi], P^{\otimes k}G^{(k)}[\varphi]\rangle$ in the sum is well defined. On the other hand, $F$ and $G$ being polynomial, only a finite number of non vanishing terms appear in the sum, guaranteeing convergence. Finally, associativity holds per construction whereas commutativity is a by product of each parametrix of $E$ being symmetric.
\end{proof}

\noindent Notice that $(M;h)\mapsto\mathcal{F}_{\operatorname{reg},P}[M;h]$ falls short from identifying an Euclidean locally covariant field theory in the sense of Definition \ref{Def: Euclidean locally covariant theory} since this construction requires the choice of an arbitrary parametrix $P\in\operatorname{Par}[M;h]$. Our next goal is the removal of this arbitrariness. The first step consists of proving that different choices of parametrix lead to algebras which are $\ast$-isomorphic. The next proposition makes this statement precise and since its proof is identical, mutatis mutandis, to that of \cite[Prop. 1.4.7]{Lindner:2013ila}, \cite[Prop. II.4]{Keller-09}, we omit it.

\begin{proposition}\label{Prop: star-isomorphism between different algebras}
Consider an arbitrary but fixed $(M;h)\in\obj(\bkgg)$ and let $P,Q\in\operatorname{Par}[M;h]$. Then the $^*$-algebras $\mathcal{F}_{\operatorname{reg},P}[M;h]$ and $\mathcal{F}_{\operatorname{reg},Q}[M;h]$ are $^*$-isomorphic, the $^*$-isomorphism being 
\begin{align}\label{Eqn: star-isomorphism between different algebras}
\alpha_P^Q\colon\mathcal{F}_{\operatorname{reg},Q}[M;h]\to\mathcal{F}_{\operatorname{reg},P}[M;h]\,,\quad
(\alpha_P^Q F)(\varphi)\vcentcolon=\bigg[\exp\big[\Upsilon_{P-Q}\big]F\bigg](\varphi)\,,
\end{align}
where 
\begin{align}\label{Eqn: Gamma contraction operator}
\bigg[\exp\big[\Upsilon_{P-Q}\big]F\bigg](\varphi)&=\sum\limits_{n=0}^\infty\frac{1}{2^n n!}\langle(P-Q)^{\otimes n},F^{(2n)}[\varphi]\rangle
\end{align}
and where $\Upsilon_{P-Q}$ is such that
\begin{equation*}
(\Upsilon_{P-Q}F)(\varphi)\vcentcolon=\frac{1}{2}\big\langle P-Q,F^{(2)}[\varphi]\big\rangle\,.
\end{equation*}
\end{proposition}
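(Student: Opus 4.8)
The plan is to establish four properties of $\alpha_P^Q$: that it maps $\mathcal{P}_{\mathrm{reg}}[M;h]$ into itself, that it is a linear bijection, that it intertwines the products (\ie{} $\alpha_P^Q(F\cdot_Q G)=(\alpha_P^Q F)\cdot_P(\alpha_P^Q G)$), and that it respects the unit and the $*$-involution. The first point is immediate: a regular functional has all functional derivatives in $\mathcal{D}(M^k)$ and $P-Q\in\mathcal{E}(M\times M)$ by \eqref{Eqn: defining property of parametrix}, so each term $\langle(P-Q)^{\otimes n},F^{(2n)}[\varphi]\rangle$ is again a regular polynomial functional, the sum being finite because $F$ is polynomial. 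The unit, \ie{} the constant functional, has vanishing derivatives, hence $\alpha_P^Q$ fixes it. Bijectivity will follow once the homomorphism property is shown, since $\alpha_Q^P=\exp[\Upsilon_{Q-P}]$ is a two-sided inverse: the operators $\Upsilon_{P-Q}$ and $\Upsilon_{Q-P}$ commute and $K\mapsto\Upsilon_K$ is linear, so $\alpha_Q^P\circ\alpha_P^Q=\exp[\Upsilon_{(Q-P)+(P-Q)}]=\operatorname{Id}$.

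The heart of the argument is the intertwining identity, which I would prove in the language of exponentiated bidifferential operators. On functionals $\Psi$ depending on two independent configurations $\varphi_1,\varphi_2$ introduce, for $i,j\in\{1,2\}$ and a smooth symmetric kernel $K$, the commuting operators
\[
\Gamma^K_{ij}\Psi:=\Big\langle K,\ \tfrac{\delta}{\delta\varphi_i}\otimes\tfrac{\delta}{\delta\varphi_j}\Psi\Big\rangle,
\]
and let $m$ denote the restriction to the diagonal $\varphi_1=\varphi_2=\varphi$. With this notation the product \eqref{Eqn: Algebra product} reads $F\cdot_P G=m\circ\exp(\Gamma^P_{12})(F\otimes G)$, while the map \eqref{Eqn: Gamma contraction operator} is $\alpha_P^Q=\exp(\tfrac12\Gamma^{P-Q}_{11})$ acting on a single configuration. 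All the $\Gamma^K_{ij}$ commute with one another because functional derivatives commute and the kernels are fixed, and $K\mapsto\Gamma^K_{ij}$ is linear; moreover polynomiality makes every exponential a finite sum, so these formal manipulations are rigorous.

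The only computation that requires care is how $\Upsilon_{P-Q}$ interacts with the diagonal restriction. By the chain rule, $\tfrac{\delta^2}{\delta\varphi^2}$ applied to $m\Psi$ equals $m$ applied to $(\tfrac{\delta}{\delta\varphi_1}+\tfrac{\delta}{\delta\varphi_2})^{\otimes 2}\Psi$, whence
\[
\alpha_P^Q\circ m=m\circ\exp\!\big(\tfrac12\Gamma^{P-Q}_{11}+\Gamma^{P-Q}_{12}+\tfrac12\Gamma^{P-Q}_{22}\big).
\]
Applying this to $F\cdot_Q G=m\circ\exp(\Gamma^Q_{12})(F\otimes G)$, commuting all operators and using $\Gamma^{P-Q}_{12}+\Gamma^Q_{12}=\Gamma^P_{12}$, gives
\[
\alpha_P^Q(F\cdot_Q G)=m\circ\exp\!\big(\tfrac12\Gamma^{P-Q}_{11}+\tfrac12\Gamma^{P-Q}_{22}+\Gamma^P_{12}\big)(F\otimes G).
\]
On the other hand $(\alpha_P^Q F)\cdot_P(\alpha_P^Q G)=m\circ\exp(\Gamma^P_{12})\big(\exp(\tfrac12\Gamma^{P-Q}_{11})F\otimes\exp(\tfrac12\Gamma^{P-Q}_{22})G\big)$, which upon commuting the operators equals the same expression, establishing the intertwining identity. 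Finally, since $P-Q$ is a real symmetric kernel and $(F^*)^{(k)}[\varphi]=\overline{F^{(k)}[\varphi]}$ for real $\varphi$, complex conjugation commutes with each $\Gamma^{P-Q}_{11}$, giving $(\alpha_P^Q F)^*=\alpha_P^Q(F^*)$, so $\alpha_P^Q$ is a unital $*$-isomorphism. I expect the main obstacle to be purely bookkeeping, namely the factor-$\tfrac12$ combinatorics produced by restricting the second functional derivative to the diagonal, whereas convergence never intervenes thanks to polynomiality.
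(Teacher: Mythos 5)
Your proof is correct, and it is essentially the standard argument: the paper itself omits the proof of this proposition, deferring to \cite[Prop. 1.4.7]{Lindner:2013ila} and \cite[Prop. II.4]{Keller-09}, and those references prove it exactly as you do, by writing $\cdot_P$ as $m\circ\exp(\Gamma^P_{12})$ and $\alpha_P^Q$ as $\exp(\tfrac12\Gamma^{P-Q}_{11})$, computing how the diagonal restriction $m$ intertwines with the exponentiated bidifferential operators, and exploiting commutativity, linearity in the kernel, polynomiality (finiteness of all sums), and the smoothness, symmetry and reality of $P-Q$. Your bookkeeping of the factor $\tfrac12$ coming from $(\delta/\delta\varphi_1+\delta/\delta\varphi_2)^{\otimes 2}$ and the cancellation $\Gamma^{P-Q}_{12}+\Gamma^{Q}_{12}=\Gamma^{P}_{12}$ is exactly right, so nothing is missing.
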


\noindent The second and last step consists of recollecting all $^*$-algebras of Proposition \ref{Prop: star-isomorphism between different algebras} in a single structure.

\begin{Definition}\label{Def: bundle over parametrices}
	We call $\mathsf{E}_{\mathrm{reg}}[M;h]$ and $\mathsf{E}[M;h]$ the bundles
	\begin{align}\label{Eqn: definition of the bundle of fiber algebras}
		\mathsf{E}_{\mathrm{reg}}[M;h]\vcentcolon=\bigcup_{P\in\operatorname{Par}[M;h]}\mathcal{F}_{\operatorname{reg},P}[M;h],\quad\mathrm{and}\quad \mathsf{E}[M;h]\vcentcolon=\bigcup_{P\in\operatorname{Par}[M;h]}\mathcal{P}[M;h]
	\end{align}
	both with base space $\operatorname{Par}[M;h]$ and projection maps
		$\pi_{\mathsf{E}_{\mathrm{reg}}[M;h]}(F_P)\vcentcolon=P$ (\textit{resp}. $\pi_{\mathsf{E}[M;h]}((P,F))=P$) for all $F_P\in\mathcal{F}_{\operatorname{reg},P}[M;h]$ (\textit{resp}. $F\in\mathcal{P}[M;h]$).
\end{Definition}

\noindent Observe that each fibre of $\mathsf{E}[M;h]$ can be considered an algebra only with respect to the pointwise product and not with respect to $\cdot_P$ since Equation \eqref{Eqn: Algebra product} is in general ill-defined over $\mathcal{P}_{\operatorname{loc}}[M;h]$ on account of the singular structure of the parametrices of $E$.

\begin{Definition}\label{Def: locally covariant algebra of regular observables}
We call $\Gamma_{\mathrm{eq}}(\mathsf{E}_{\mathrm{reg}}[M;h])$ the complex vector space of equivariant sections of $\mathsf{E}_{\mathrm{reg}}[M;h]$, \ie,
\begin{align}\label{Eqn: definition of locally covariant algebra of interest}
	\Gamma_{\mathrm{eq}}(\mathsf{E}_{\mathrm{reg}}[M;h])\vcentcolon=\big\lbrace F\in\Gamma(\mathsf{E}_{\mathrm{reg}}[M;h])\;|\;F(P)=\alpha_P^Q F(Q)\quad\forall P,Q\in\mathrm{Par}[M;h]\big\rbrace\,.
	\end{align}
and we denote with $\mathcal{A}_{\mathrm{reg}}[M;h]\equiv(\Gamma_{\mathrm{eq}}(\mathsf{E}_{\mathrm{reg}}[M;h]), \cdot, ^*)$ the unital $^*$-algebra whose product and involution are the following: for all $F,G\in\Gamma_{\mathrm{eq}}(\mathsf{E}_{\mathrm{reg}}[M;h])$
\begin{align}\label{Eqn: product and involution on the locally covariant algebra of interest}
	(F\cdot G)(P)\vcentcolon=F(P)\cdot_P G(P)\,,\qquad
	F^*(P)\vcentcolon=F(P)^*\,,
	\end{align}
where $\cdot_P$ is defined in \eqref{Eqn: Algebra product}, while $F[P]^*$ is the $\ast$-operation introduced in Proposition \ref{Prop: regular algebra}. Similarly we define $\Gamma_{\mathrm{eq}}(\mathsf{E}[M;h])$ as the complex vector space of equivariant sections of $\mathsf{E}[M;h]$.
\end{Definition}

\noindent An important consequence of this definition is the following.

\begin{corollary}\label{Corol: Gamma functor}
	Let $\Gamma_{\mathrm{eq},\operatorname{reg}}:\bkgg\to\alg$ and let $\Gamma_{\mathrm{eq}}:\bkgg\to\vect$ be such that
	\footnote{In the following we shall use the symbol $\Gamma_{\mathrm{eq}}$ to refer to either the functor $\Gamma_{\mathrm{eq}}\colon\bkgg\to\alg$ or to the set of equivariant sections over a suitable bundle. There will be no risk of confusion since in the latter case $\Gamma_{\mathrm{eq}}$ will be always followed by the symbol referring to the relevant bundle.}, for every $(M;h)\in\obj(\bkgg)$ and for every $\chi\in\arr(\bkgg)$ mapping from $(M;h)$ to $(M^\prime,h^\prime)$
	$$\Gamma_{\mathrm{eq},\operatorname{reg}}[M;h]=\Gamma_{\mathrm{eq}}(\mathsf{E}_{\mathrm{reg}}[M;h]),\quad\mathrm{and}\quad \Gamma_{\mathrm{eq}}[M;h]=\Gamma_{\mathrm{eq}}(\mathsf{E}[M;h]),$$
	while
	$\Gamma_{\mathrm{eq},\operatorname{reg}}[\chi]$ and $\Gamma_{\mathrm{eq}}[\chi]$ are such that, for all $F\in\Gamma_{\mathrm{eq}}(\mathsf{E}_{\mathrm{reg}}[M;h])$ and for all $G\in\Gamma_{\mathrm{eq}}(\mathsf{E}[M;h])$
	$$(\Gamma_{\mathrm{eq},\operatorname{reg}}[\chi](F))(P^\prime,\varphi^\prime)=F(\chi^*P^\prime,\chi^*\varphi^\prime),\quad\mathrm{and}\quad (\Gamma_{\mathrm{eq}}[\chi](G))(P^\prime,\varphi^\prime)=G(\chi^*P^\prime,\chi^*\varphi^\prime),$$
	where $P^\prime\in\operatorname{Par}[M^\prime;h^\prime]$ and $\varphi^\prime\in\mathcal{E}(M^\prime)$. Then both $\Gamma_{\mathrm{eq},\operatorname{reg}}:\bkgg\to\alg$ and $\Gamma_{\mathrm{eq}}:\bkgg\to\vect$ are covariant functors.
\end{corollary}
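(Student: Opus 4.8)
The plan is to verify, for each of the two assignments, the three defining properties of a covariant functor: that it is well defined on objects, that it sends an arrow of $\bkgg$ to a morphism of the target category, and that it respects identities and composition. Well-definedness on objects is already secured, since Definition \ref{Def: locally covariant algebra of regular observables} exhibits $\Gamma_{\mathrm{eq}}(\mathsf{E}_{\mathrm{reg}}[M;h])$ as a unital, commutative $^*$-algebra and $\Gamma_{\mathrm{eq}}(\mathsf{E}[M;h])$ as a complex vector space. The functoriality axioms are the easiest part, and I would dispatch them first: since $\chi$ acts on sections purely through the pullbacks $\chi^*P^\prime$ and $\chi^*\varphi^\prime$, the contravariance $(\chi_2\circ\chi_1)^*=\chi_1^*\circ\chi_2^*$ and the identity $\operatorname{Id}^*=\operatorname{Id}$ yield at once $\Gamma_{\mathrm{eq}}[\chi_2\circ\chi_1]=\Gamma_{\mathrm{eq}}[\chi_2]\circ\Gamma_{\mathrm{eq}}[\chi_1]$ and $\Gamma_{\mathrm{eq}}[\operatorname{Id}_M]=\operatorname{Id}$, and likewise for $\Gamma_{\mathrm{eq},\operatorname{reg}}$.

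The substance lies in well-definedness on arrows, which I would split into three checks. First, one must verify that the prescription makes sense at all, namely that $\chi^*P^\prime\in\operatorname{Par}[M;h]$ whenever $P^\prime\in\operatorname{Par}[M^\prime;h^\prime]$. Here I would use that, the objects of $\bkgg$ sharing the fixed dimension $\mathrm{D}$, an orientation-preserving isometric embedding $\chi$ is automatically an open embedding, and that the Laplace-type operator is built naturally from the background datum $h$, so that $h=\chi^*h^\prime$ forces the intertwining $\chi^*\circ E^\prime=E\circ\chi^*$ on $\chi(M)$. From this the defining property \eqref{Eqn: defining property of parametrix} of a parametrix, its symmetry, and the wavefront set \eqref{Eqn: Parametrix wave front set} are all preserved under pullback, because $\chi$ is an open embedding and pullback along it carries smoothing kernels to smoothing kernels and the conormal of the diagonal to itself. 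Thus $F(\chi^*P^\prime,\chi^*\varphi^\prime)$ is well defined and yields a genuine section of $\mathsf{E}_{\mathrm{reg}}[M^\prime;h^\prime]$ (resp. $\mathsf{E}[M^\prime;h^\prime]$).

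Second, I would prove that this section is equivariant and, in the algebra case, that $\Gamma_{\mathrm{eq},\operatorname{reg}}[\chi]$ is a unital $^*$-homomorphism. Both facts reduce to a single naturality statement: writing $\varphi=\chi^*\varphi^\prime$, the chain rule for the linear map $\chi^*$ gives that the functional derivatives of the pulled-back section are the extensions by zero of those of the original, $\langle[(\Gamma_{\mathrm{eq}}[\chi]F)(P^\prime)]^{(k)}[\varphi^\prime],\varphi_1^\prime\otimes\cdots\otimes\varphi_k^\prime\rangle=\langle[F(\chi^*P^\prime)]^{(k)}[\varphi],\chi^*\varphi_1^\prime\otimes\cdots\otimes\chi^*\varphi_k^\prime\rangle$. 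Since these derivatives are supported in $\chi(M)$, since $\chi$ is an isometry (so $\mu_g=\chi^*\mu_{g^\prime}$ and the pairing $\langle\,,\,\rangle_g$ is preserved), and since pullback is linear so that $\chi^*(P^\prime-Q^\prime)=\chi^*P^\prime-\chi^*Q^\prime$, both the contraction operator $\Upsilon$ of \eqref{Eqn: Gamma contraction operator} and the product $\cdot_P$ of \eqref{Eqn: Algebra product} commute with the pullback. This yields simultaneously the identity $(\Gamma_{\mathrm{eq}}[\chi]F)(P^\prime)=\alpha_{P^\prime}^{Q^\prime}(\Gamma_{\mathrm{eq}}[\chi]F)(Q^\prime)$ from the equivariance of $F$ on $M$, and the homomorphism property $\Gamma_{\mathrm{eq},\operatorname{reg}}[\chi](F\cdot G)=\Gamma_{\mathrm{eq},\operatorname{reg}}[\chi](F)\cdot\Gamma_{\mathrm{eq},\operatorname{reg}}[\chi](G)$. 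Preservation of the unit (the constant functional $1$) and of the $^*$-involution (complex conjugation, which commutes with $\chi^*$) are immediate.

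Third, injectivity — required for arrows in both $\alg$ and $\vect$ — is where compact support of the functionals enters. Fix any $P^\prime\in\operatorname{Par}[M^\prime;h^\prime]$, which exists by \cite[Th. 4.4]{Wells}, and set $P_0\vcentcolon=\chi^*P^\prime\in\operatorname{Par}[M;h]$. If $\Gamma_{\mathrm{eq}}[\chi]F=0$, then $F(P_0)(\chi^*\varphi^\prime)=0$ for every $\varphi^\prime\in\mathcal{E}(M^\prime)$; since $F(P_0)$ is compactly supported, its value at an arbitrary $\varphi\in\mathcal{E}(M)$ depends only on $\varphi$ over a fixed compact $K\subset\chi(M)$, and any such $\varphi$ agrees on $K$ with $\chi^*\varphi^\prime$ for a suitable extension $\varphi^\prime$, whence $F(P_0)=0$ and, by the equivariance defining $\Gamma_{\mathrm{eq}}(\mathsf{E}_{\mathrm{reg}}[M;h])$, $F=0$. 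I expect the main obstacle to be the first check — confirming that $\chi^*P^\prime$ is a bona fide parametrix, with the correct wavefront set and with the smoothing remainder in \eqref{Eqn: defining property of parametrix} preserved — since this is the only step genuinely using the geometry of $\chi$; the naturality computations of the second and third steps, though notationally heavy, are formal consequences of the linearity and the isometric character of the pullback.
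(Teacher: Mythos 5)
Your proof is correct, and it is considerably more complete than the argument the paper itself gives. The paper's proof of this corollary consists solely of what you call the easiest part: it observes that identities are preserved ``per construction'' and that the pull-back identity $(\chi'\circ\chi)^*=\chi^*\circ\chi'^*$ yields compatibility with composition, and stops there. The substantive checks you carry out are deferred or absent in the paper: the facts that $\chi^*P'\in\operatorname{Par}[M;h]$ and that pulled-back sections remain equivariant appear only later, inside the proof of Theorem \ref{Theorem: locally covariant algebra of interest}, while the verification that $\Gamma_{\mathrm{eq},\operatorname{reg}}[\chi]$ is a unit-preserving $\ast$-homomorphism and, crucially, that both arrow maps are \emph{injective} (which the definitions of $\alg$ and $\vect$ demand of their arrows) is nowhere spelled out. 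Your third step is therefore genuine added value: the compact-support-plus-extension argument (cut off $\chi_*\varphi$ near the support of $F(P_0)$ to build $\varphi'$ with $\chi^*\varphi'=\varphi$ there, conclude $F(P_0)=0$, then propagate to every parametrix via linearity of $\alpha_Q^{P_0}$ and equivariance) is exactly what closes this gap, and it is the only place where compact support of the functionals is used. Two cosmetic points: the compact set $K$ supporting $F(P_0)$ lives in $M$, not in $\chi(M)$ (harmless if you tacitly identify $M$ with its image); and in your first step the precise mechanism transporting the two smoothing-remainder conditions of \eqref{Eqn: defining property of parametrix} is the naturality relation $E\circ\chi^*=\chi^*\circ E'$, valid because $h=\chi^*h'$ and $E$ is a differential operator built locally from $h$ — your appeal to $\chi$ being an open embedding of equidimensional manifolds is the right reason, and stating the intertwining explicitly, as you do, makes the argument airtight.
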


\begin{proof}
	It suffices to observe that per construction $\Gamma_{\mathrm{eq},\operatorname{reg}}[\operatorname{id}|_{(M;h)}]=\operatorname{id}|_{\Gamma_{\mathrm{eq},\operatorname{reg}}[M;h]}$ while, for any pair of morphisms $\chi:M\to M^\prime$ and $\chi^\prime:M^\prime\to\widetilde{M}$, the properties of the pull-back entail that $\Gamma_{\mathrm{eq},\operatorname{reg}}[\chi^\prime\circ\chi]=\Gamma_{\mathrm{eq},\operatorname{reg}}[\chi^\prime]\circ\Gamma_{\mathrm{eq},\operatorname{reg}}[\chi]$. The same statement holds true when considering $\Gamma_{\mathrm{eq}}$.
\end{proof}

\begin{remark}\label{Remark: scaling map}
	In order to investigate the scaling properties of $\Gamma_{\mathrm{eq},\mathrm{reg}}$, let $(M;h)\in\obj(\bkgg)$ and let, for any $\lambda>0$, $(M;h_\lambda)$ be as in Equation \eqref{Eqn: Scaling Transformation}.
	This will lead to a family of maps $\varsigma_{\lambda,\mu}$ satisfying the properties of Definition \ref{Def: Euclidean locally covariant theory} -- \textit{cf.} Theorem \ref{Theorem: locally covariant algebra of interest}.
	As discussed in Section \ref{Section: General Setting}, the scaling transformations are fully determined by the request of leaving the Lagrangian density invariant, namely $\mathcal{L}_\lambda=\mathcal{L}$.
		As a matter of fact, for all $\lambda>0$ the space of parametrices $\operatorname{Par}[M;h]$, $\operatorname{Par}[M;h_\lambda]$ are in bijection since $P_\lambda:=\lambda^{-2}P\in\operatorname{Par}[M;h_\lambda]$ for all $P\in\operatorname{Par}[M;h]$. This is a by product of $E\to \lambda^2E$ under scaling $g_{ab}\to\lambda^{-2}g_{ab}$.
		Moreover notice that also the local Hadamard representation -- \textit{cf.} Remark \ref{Remark: Hadamard representation} -- changes under scaling.
	Therefore we may define a linear map $\varsigma_\lambda\colon\Gamma_{\mathrm{eq}}[M;h_\lambda]\to\Gamma_{\mathrm{eq}}[M;h]$
	\begin{align}\label{Eqn: definition of scaling map}
		(\varsigma_\lambda F)(P,\varphi)\vcentcolon=F(\lambda^{-2}P,\lambda^{\frac{\mathrm{D}-2}{2}}\varphi)\,,
	\end{align}
	for all $F\in\Gamma_{\mathrm{eq}}[M;h]$.

	Notice that the engineering dimension of $\varphi$ has been inserted to match with the scaling dimension of the integral kernel with respect to the volume measure $\mu_g$ of the parametrix $P(x,y)$.
\end{remark}

\noindent We conclude the section proving a key result.

\begin{theorem}\label{Theorem: locally covariant algebra of interest}
Let $\mathcal{A}_{\mathrm{reg}}:\bkgg\to\alg$ be the covariant functor such that
\begin{itemize}
\item for any $(M;h)\in\obj(\bkgg)$, $\mathcal{A}_{\mathrm{reg}}[M;h]$ is the unital $^*$-algebra of Definition \ref{Def: locally covariant algebra of regular observables};
\item for any arrow $\chi:M\to\widetilde{M}$ and for any $F\in\Gamma_{\mathrm{eq}}(\mathsf{E}_{\mathrm{reg}}[M;h])$, $P\in\operatorname{Par}[\widetilde{M};\widetilde{h}]$ and $\varphi\in \mathcal{E}(\widetilde{M})$, $(\mathcal{A}_{\mathrm{reg}}(\chi)F)(P,\varphi)=F(\chi^*P,\chi^*\varphi)$.
\item for any $\lambda>0$ the scaling $\varsigma_\lambda$ is defined as in Remark \ref{Remark: scaling map}.
\end{itemize}
Then $\mathcal{A}_{\mathrm{reg}}:\bkgg\to\alg$ is a (scalar) Euclidean locally covariant field theory as per Definition \ref{Def: Euclidean locally covariant theory}.
\end{theorem}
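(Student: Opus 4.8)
The statement decomposes, following Definition \ref{Def: Euclidean locally covariant theory}, into two largely independent tasks: establishing that $\mathcal{A}_{\mathrm{reg}}$ is a covariant functor with values in $\alg$, and producing the family $\{\varsigma_{\lambda,\mu}\}$ of natural isomorphisms together with the cocycle and normalization identities. The plan is to treat functoriality first, where most of the work is already done. Corollary \ref{Corol: Gamma functor} furnishes preservation of identities and composition for $\Gamma_{\mathrm{eq},\operatorname{reg}}$, so on arrows I would only need to confirm that each $\mathcal{A}_{\mathrm{reg}}(\chi)$ genuinely lands in $\arr(\alg)$, \emph{i.e.} is a unit-preserving, injective $*$-homomorphism, and then identify $\mathcal{A}_{\mathrm{reg}}$ with $\Gamma_{\mathrm{eq},\operatorname{reg}}$ equipped with the fibrewise product and involution of Definition \ref{Def: locally covariant algebra of regular observables}.

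For an arrow $\chi\colon(M;h)\to(\widetilde M;\widetilde h)$ I would first check that the pullback sends $\operatorname{Par}[\widetilde M;\widetilde h]$ into $\operatorname{Par}[M;h]$: since $\chi$ is an orientation-preserving isometric embedding with $h=\chi^*\widetilde h$, the operator $E$ is intertwined by $\chi^*$, and the wavefront set \eqref{Eqn: Parametrix wave front set} avoids the conormal of the embedding, so $\chi^*P$ is a well-defined bidistribution obeying \eqref{Eqn: defining property of parametrix} on $M$. Then $\mathcal{A}_{\mathrm{reg}}(\chi)F$ is equivariant because $\chi^*$ intertwines the maps $\alpha^Q_P$ of Proposition \ref{Prop: star-isomorphism between different algebras}: the displacement $\chi^*(P'-Q')\in\mathcal{E}(M\times M)$ is exactly the model element of the pulled-back fibre, and the pairings defining $\Upsilon_{P-Q}$ commute with pullback. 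The homomorphism property reduces fibrewise to compatibility of $\chi^*$ with the product \eqref{Eqn: Algebra product}, for which I would use that functional derivatives transform as $(F\circ\chi^*)^{(k)}[\varphi]=\chi^*\big(F^{(k)}[\chi^*\varphi]\big)$ together with $\chi^*(P^{\otimes k})=(\chi^*P)^{\otimes k}$, so that every term in the series matches. Unitality is immediate, and injectivity follows from surjectivity of $\chi^*$ onto the compactly supported data determining the functionals, exactly as in the Lorentzian constructions.

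For the scaling structure I would exploit that the action \eqref{Eqn: Scaling Transformation} is multiplicative, $(h_\mu)_\nu=h_{\mu\nu}$, and that $\operatorname{Par}[M;h_\mu]$ and $\operatorname{Par}[M;h_\lambda]$ are related by the rescaling recorded in Remark \ref{Remark: scaling map}. I would set $\varsigma_{\lambda,\mu}\vcentcolon=\varsigma_{1,\lambda}^{-1}\circ\varsigma_{1,\mu}$ with $\varsigma_{1,\lambda}=\varsigma_\lambda$ as in \eqref{Eqn: definition of scaling map}, which unwinds to the explicit formula $(\varsigma_{\lambda,\mu}F)(P,\varphi)=F\big((\mu/\lambda)^{-2}P,(\mu/\lambda)^{\frac{\mathrm{D}-2}{2}}\varphi\big)$. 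From this presentation the cocycle relation $\varsigma_{\lambda,\mu}=\varsigma_{\lambda,\rho}\circ\varsigma_{\rho,\mu}$ and the normalization $\varsigma_{\lambda,\lambda}=\operatorname{Id}$ follow by a one-line computation from $(\mu/\rho)(\rho/\lambda)=\mu/\lambda$. Naturality in $(M;h)$ is then the assertion that $\varsigma_{\lambda,\mu}$ commutes with the arrow maps $\mathcal{A}_{\mathrm{reg}}(\chi)$, which holds because the scaling only rescales the parametrix and field entries while the pullback acts on the geometry, and the two operations manifestly commute.

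The main obstacle I anticipate is verifying that each $\varsigma_{\lambda,\mu}$ is a $*$-isomorphism, \emph{i.e.} that it intertwines the fibrewise products $\cdot_{P_\mu}$ and $\cdot_{P_\lambda}$. This is exactly where the engineering dimensions \eqref{Eqn: Engineering Dimension} are indispensable: expanding both sides through \eqref{Eqn: Algebra product}, each field rescaling by $\lambda^{\frac{\mathrm{D}-2}{2}}$ yields a factor $\lambda^{k\frac{\mathrm{D}-2}{2}}$ from the chain rule in the $k$-th derivative, the parametrix rescaling $P\mapsto\lambda^{-2}P$ contributes the compensating power, and the volume measures implicit in the pairings scale as $\mu_{g_\lambda}=\lambda^{-\mathrm{D}}\mu_g$; I expect the delicate step to be the term-by-term cancellation of these three contributions, which is precisely the balancing anticipated in Remark \ref{Remark: scaling map}. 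Once this cancellation is established, compatibility with the involution is immediate since $\varphi\mapsto\lambda^{\frac{\mathrm{D}-2}{2}}\varphi$ is real, and assembling the two parts yields the claim.
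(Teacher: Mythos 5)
Your proposal is correct and follows essentially the same route as the paper's proof: functoriality is obtained from the stability of parametrices under pullback together with the composition law $(\chi\circ\widetilde{\chi})^*=\widetilde{\chi}^*\circ\chi^*$, and the scaling maps are shown to be $*$-isomorphisms through precisely the engineering-dimension cancellation against the scaling of the parametrix kernel highlighted in Remark \ref{Remark: scaling map}. The only difference is one of detail: you spell out verifications (well-definedness of $\chi^*P$ as a parametrix, injectivity of $\mathcal{A}_{\mathrm{reg}}(\chi)$, and the cocycle identities for $\varsigma_{\lambda,\mu}$) that the paper compresses into ``a direct computation shows''.
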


\begin{proof}
First of all, notice that $\mathcal{A}_{\mathrm{reg}}$ is well defined since $\mathcal{A}_{\mathrm{reg}}[M;h]$ is per construction a unital, associative and commutative $^*$-algebra. In addition, for every arrow $\chi$ from $[M;h]$ to $[\widetilde{M};\widetilde{h}]$, it holds that $\chi^*P\in\operatorname{Par}[M;h]$ and thus $\mathcal{A}_{\mathrm{reg}}(\chi)F$ is still an equivariant section. In addition, for all $\chi,\widetilde{\chi}\in\arr\bkgg$, the properties of the pull-back entail that $(\chi\circ\widetilde{\chi})^*=\widetilde{\chi}^*\circ\chi^*$. It descends that $\mathcal{A}_{\mathrm{reg}}(\chi)\circ\mathcal{A}_{\mathrm{reg}}(\widetilde{\chi})=\mathcal{A}(\widetilde{\chi}\circ\chi)$ and $\mathcal{A}_{\mathrm{reg}}(\mathrm{Id}_M)=\mathrm{Id}_{\mathcal{A}_{\mathrm{reg}}[M;h]}$. This entails that $\mathcal{A}_{\mathrm{reg}}$ is a covariant functor.

Finally, a direct computation shows that the linear map $\varsigma_\lambda$ defined in Remark \ref{Remark: Scaling transformation} is a $\ast$-isomorphism between $\mathcal{A}_{\mathrm{reg}}[M;h]$ and $\mathcal{A}_{\mathrm{reg}}[M;h_\lambda]$. Notice the crucial r\^ole played by the engineer dimension of $\varphi$ in Equation \eqref{Eqn: definition of scaling map}.
This has been inserted to match with the scaling dimension of the integral kernel with respect to the volume measure $\mu_g$ of the parametrix $P(x,y)$.
\end{proof}
From now on, in this paper, with $\mathcal{A}_{\mathrm{reg}}$ we denote the Euclidean locally covariant field theory as per Definition \ref{Def: locally covariant algebra of regular observables} and Theorem \ref{Theorem: locally covariant algebra of interest}.

\begin{remark}\label{Remark: on-shell algebra}
	In the Lorentzian framework it is common to consider off-shell and on-shell algebras, the latter being obtained as a quotient between the first one and a suitable $\ast$-ideal encoding dynamically trivial observables. A similar procedure has no straightforward counterpart in the Riemannian setting due to the equations of motion being ruled by an elliptic operator. Nevertheless, we may identify a ``distinguished" algebra by considering the one constructed out of equivariant sections over a sub-bundle of $\mathsf{E}[M;h]$ whose base space is that of fundamental solutions $G\in\operatorname{SolFond}[M;h]$ of $E$.
	These are exact inverses of $E$ and, according to \cite{Li_Tam}, their existence is not guaranteed in general.
	Yet, assuming the space $\operatorname{SolFond}[M;h]\subset\operatorname{Par}[M;h]$ to be non-trivial, we may consider the algebra $\mathcal{A}_{\mathrm{reg},\mathrm{ex}}[M;h]\vcentcolon=\Gamma_{\mathrm{eq}}(\mathsf{E}_{\mathrm{ex}}[M;h])$, where $\mathsf{E}_{\mathrm{ex}}[M;h]\vcentcolon=\bigcup_{G\in\operatorname{SolFond}[M;h]}\mathcal{F}_{\operatorname{reg},G}[M;h]$.
	This algebra may be considered as an ``exact" version of $\mathcal{A}[M;h]$.
\end{remark}

\begin{remark}
	On account of Definition \ref{Def: locally covariant algebra of regular observables} and of Proposition	\ref{Prop: star-isomorphism between different algebras} it can proved that $\mathcal{A}_{\mathrm{reg}}[M;h]$ is in fact $*$-isomorphic to the algebra $\mathcal{P}_{\mathrm{reg}}[M;h]$ equipped with pointwise product -- the same holds true for the subsequent algebra $\mathcal{A}[M;h]$ introduced in Proposition \ref{Prop: local and covariant algebra of local polynomials}.
	It may appear more useful to deal directly with $\mathcal{P}_{\mathrm{reg}}[M;h]$, however, one should remember that the scaling map $\varsigma_\lambda$ introduced in Remark \ref{Remark: Scaling transformation} leads to a non-trivial scaling behaviour of elements $F\in\mathcal{A}[M;h]$ -- \textit{cf.} Definition \ref{Def: rescaled locally covariant observable} and subsequent discussion.
	This anomalous scaling is due to the scaling behaviour of the Hadamard parametrix $H$ introduced in remark \ref{Remark: Hadamard representation} and it is best seen when dealing with $\mathcal{A}_{\mathrm{reg}}[M;h]$.
\end{remark}

\section{Locally Covariant Observables and Quantum Fields}\label{Section: Locally Covariant Observables and Quantum Fields}
In this section we introduce the notion of locally covariant observables, as 
distinguished classes of natural transformation with value in $\alg$.
\begin{Definition}\label{Def: functor Gamma}
	For all $\ell\in\mathbb{N}$ we define $\Gamma_{\mathrm{c}}^\ell:\bkgg\to\vect$ as the functor such that, for any $(M;h)\in\obj(\bkgg)$ and $\chi\in\arr(\bkgg)$,
	\begin{align}\label{Eqn: definition of the functor Gamma order k}
	\Gamma_{\mathrm{c}}^\ell[M;h]\vcentcolon=
	\mathrm{S}C^\infty_{\mathrm{c}}(M^\ell)
	\,,\qquad
	\Gamma_{\mathrm{c}}^\ell[\chi]\vcentcolon=\chi_*\,.
	\end{align} 
	Similarly, we call $\Gamma_{\mathrm{c}}^\bullet:\bkgg\to\alg$ the covariant functor such that, for any $(M;h)\in\obj(\bkgg)$ and $\chi\in\arr(\bkgg)$,
	\begin{align}\label{Eqn: definition of the functor Gamma}
	\Gamma_{\mathrm{c}}^{\bullet}[M;h]\vcentcolon=\bigoplus_{\ell=0}^\infty
	\mathrm{S}C^\infty_{\mathrm{c}}(M^\ell)
	\,,\qquad
	\Gamma_{\mathrm{c}}^{\bullet}[\chi]\vcentcolon=\chi_*\,,
	\end{align}                         
	where $\mathrm{S}C^\infty_{\mathrm{c}}(M^\ell)$ denotes the collection of all smooth, compactly supported, complex valued functions symmetric in their argument, with the convention that $\mathrm{S}C^\infty_{\mathrm{c}}(M^0)\vcentcolon=\mathbb{C}$, while $\mathrm{S}C^\infty_{\mathrm{c}}(M)\equiv C^\infty_{\mathrm{c}}(M)$.
	The product structure on $\Gamma_{\mathrm{c}}^\bullet[M;h]$ is induced by the symmetric tensor product, namely if $f_\ell\in \mathrm{S}C^\infty_{\mathrm{c}}(M^\ell)$ and $f_n\in \mathrm{S}C^\infty_{\mathrm{c}}(M^n)$ then $f\cdot_{\Gamma_{\mathrm{c}}^\bullet[M;h]}g:=f\otimes g\in \mathrm{S}C^\infty_{\mathrm{c}}(M^{\ell+n})$.
	We shall denote with $\Gamma^{\bullet}\colon\bkgg\to\alg$ and $\Gamma^\ell\colon\bkgg\to\vect$ the analogous contravariant functors defined by dropping the subscript $_{\mathrm{c}}$ -- notice that $\Gamma^{\bullet}(\chi)\vcentcolon=\chi^*$.
\end{Definition}
\noindent In the spirit of \cite{Brunetti:2001dx}, we introduce \emph{locally covariant observables} as follows
\begin{Definition}\label{Def: locally covariant observable}
	Let $\mathcal{A}_{\mathrm{reg}}:\bkgg\to\alg$ and $\Gamma^{\bullet}_{\mathrm{c}}:\bkgg\to\alg$ be the functors respectively as per Definition \ref{Def: locally covariant algebra of regular observables} and \ref{Def: functor Gamma}.
	We define a \emph{locally covariant observable} as a natural transformation $\mathcal{O}:\Gamma_{\mathrm{c}}^{\bullet}\to\mathcal{A}_{\mathrm{reg}}$,  that is, $\forall(M;h)\in\obj(\bkgg)$, $\mathcal{O}[M;h]:\Gamma_{\mathrm{c}}^{\bullet}[M;h]\to\mathcal{A}_{\mathrm{reg}}[M;h]$ is an arrow in $\alg$ and it holds that, for any $\chi\in\arr(\bkgg)$ mapping $(M;h)$ to $(\widetilde{M};\widetilde{h})$, 
	\begin{align}\label{eq: comp_nat_trans}
		\mathcal{O}[\widetilde{M};\widetilde{h}]\circ\Gamma_{\mathrm{c}}^{\bullet}[\chi]=\mathcal{A}_{\mathrm{reg}}[\chi]\circ\mathcal{O}[M;h]\,.
	\end{align}
\end{Definition}

\begin{remark}\label{Remark: generalized locally covariant observable}
	The previous definition -- see also Definition \ref{Def: rescaled locally covariant observable} -- generalizes to any Euclidean locally covariant theory $\mathcal{A}\colon\bkgg\to\alg$ as per Definition \ref{Def: Euclidean locally covariant theory}, identifying the most general notion of locally covariant observable as a natural transformation $\mathcal{O}\colon\Gamma_{\operatorname{c}}^\bullet\to\mathcal{A}$.
	Notice that in \cite{Brunetti-Duetsch-Fredenhagen-09,Khavkine-Melati-Moretti-17,Hollands-Wald-01,Hollands-Wald-02,Hollands-Wald-03,Hollands-Wald-03,Hollands-Wald-05} local and covariant observables are defined as natural transformations $\mathcal{O}\colon\Gamma_{c}^{1}\to\mathcal{A}$.
	From this point of view, Definition \ref{Def: locally covariant observable} identifies a multilocal covariant observable, by incorporating also the structure of natural algebra homomorphism.
	This is useful for keeping track of the algebraic properties carried by local and covariant observables -- \textit{cf.} \ref{Prop: local and covariant algebra of local polynomials}.
\end{remark}

\begin{remark}
	Notice that, given any $(M;h)\in\obj(\bkgg)$, $\mathcal{O}[M;h]$ can be seen as an algebra-valued distribution, \ie, for any $P\in\operatorname{Par}[M;h]$ and for any $\varphi\in \mathcal{E}(M)$
	\begin{align*}
	\mathcal{O}[M;h](\cdot, P, \varphi):C_\mathrm{c}^\infty(M^\ell)\ni f\mapsto\mathcal{O}[M;h](f, P, \varphi)\in\C,
	\end{align*}
	is required to be a distribution, namely $\mathcal{O}[M;h](\cdot, P, \varphi)\in\mathcal{D}^\prime(M^\ell)$.
\end{remark}

\begin{remark}
	Notice that, being $\mathcal{O}[M;h]\in\mathsf{Arr}(\alg)$, for all $\ell\in\N$ and for all $f_1,\dots, f_\ell\in \mathcal{D}(M)$, it holds 
	\begin{align*}
			\mathcal{O}[M;h](f_1\otimes\dots\otimes f_\ell)=\mathcal{O}[M;h](f_1)\cdots\mathcal{O}[M;h](f_\ell)\,,
		\end{align*}
	where the product on the right-hand side is in $\mathcal{A}_{\mathrm{reg}}[M;h]$.
	This observation and the assumed continuity of $\mathcal{O}[M;h](\cdot,P,\varphi)$ imply
	that a locally covariant observable as per Definition \ref{Def: locally covariant observable} is completely determined on $\Gamma_\mathrm{c}^{\bullet}$ once its action on the degree $\ell=0,1$ and the product $\cdot$ of the algebra $\mathcal{A}_\mathrm{reg}$ are known.
	Notice furthermore that, since we are dealing with \emph{regular} functionals, the products involved in the previous equation are all well-defined.
	This will not be the case when dealing with \emph{local} functionals.
\end{remark}

\begin{Example}\label{Example: quantum fields}
Let $(M;h)\in\obj(\bkgg)$ and let $\Phi[M;h]:\Gamma_\mathrm{c}^{\bullet}[M;h]\to\mathcal{A}_\mathrm{reg}[M;h]$ be such that, given $f\in \mathcal{D}(M)$,  for any $P\in\operatorname{Par}[M;h]$ and for any $\varphi\in \mathcal{E}(M)$, $\Phi[M;h](f)$ is the linear functional 
\begin{align}
		\Phi[M;h](f,P,\varphi)\vcentcolon=\int_M\,\mu_g\,
		\textrm{$f\varphi$}\,,
	\end{align}
extended according to the equation in the preceding remark.
Consider now a morphism $\chi\in\arr(\bkgg)$ mapping $(M;h)$ to $(\widetilde{M};\widetilde{h})$. In order to prove that $\Phi$ is a locally covariant observable, we need to show that $\Phi[\widetilde{M};\widetilde{h}]\circ\Gamma_{\mathrm{c}}^{\bullet}[\chi]=\mathcal{A}_{\mathrm{reg}}[\chi]\circ\Phi[M;h]$. This follows from the definition since, for every $f\in C^\infty_\mathrm{c}(M)$ and for all $\widetilde{P}\in\mathrm{Par}(\widetilde{M};\widetilde{h})$, $\widetilde{\varphi}\in C^\infty(\widetilde{M})$,
	\begin{align*}
		\bigg[\mathcal{A}_{\operatorname{reg}}(\chi)\Phi[M;h](f)\bigg](\widetilde{P},\widetilde{\varphi})&=
		\Phi[M;h](f,\chi^*\widetilde{P},\chi^*\widetilde{\varphi})=
		\int_{\widetilde{M}}\,\mu_{\widetilde{g}}\,
		\textrm{$\widetilde{\varphi}\chi_*f$}
		\\&=
		\Phi[\widetilde{M},\widetilde{h}](\Gamma^\bullet_{\mathrm{c}}(\chi)f,\widetilde{P},\widetilde{\varphi})\,.
	\end{align*}
We conclude that $\Phi$ is a locally covariant observable, to which we will refer to as \emph{locally covariant quantum field}.
\end{Example}
Since we will be interested in the scaling behavior of locally covariant observables, we introduce the notion of \emph{rescaled locally covariant observable}.
\begin{Definition}\label{Def: rescaled locally covariant observable}
Let $\mathcal{O}:\Gamma_{\mathrm{c}}^{\bullet}\to\mathcal{A}_{\mathrm{reg}}$ be a locally covariant observable. For any $(M;h)\in\obj(\bkgg)$ we call $S_\lambda\mathcal{O}$ the \emph{rescaled locally covariant observable} at scale $\lambda>0$, defined by
\begin{align}\label{Eqn: rescaled locally covariant observable}
	\big(S_\lambda\mathcal{O}\big)[M;h](f)\vcentcolon=
	\varsigma_\lambda\big[\mathcal{O}[M;h_\lambda](\lambda^{\mathrm{D}}f)\big]\,,
\end{align}
for all $f\in C^\infty_\mathrm{c}(M)$ and where $(M;h_\lambda)$ is defined as per Equation \eqref{Eqn: Scaling Transformation}, while $\varsigma_\lambda$ has been defined in remark \ref{Remark: scaling map}.
Furthermore, on the one hand, we say that $\mathcal{O}$ has engineering dimension $\mathrm{d}_\mathcal{O}\in\R$ if, for any  $(M;h)\in\obj(\bkgg)$ and for any $f\in C^\infty_\mathrm{c}(M)$, it satisfies 
\begin{align}\label{Eqn: defining equation for engineer dimension}
		(S_\lambda\mathcal{O})[M;h](f)=\lambda^{\mathrm{d}_{\mathcal{O}}}\mathcal{O}[M;h](f)\,.
	\end{align}
On the other hand, we say that $\mathcal{O}$ scales \emph{almost homogeneously} with dimension $\kappa\in\R$ and order $m\in\N$ if 
	\begin{align}\label{Eq: scaling alm_hom}
		S_\lambda\mathcal{O}[M;h](f)=\lambda^{\kappa}\mathcal{O}[M;h](f)+
		\lambda^{\kappa}\sum_{j\leq m}\log(\lambda)^j\mathcal{O}_j[M;h](f)\,,
	\end{align}
for any $(M;h)\in\obj(\bkgg)$, $f\in C^\infty_\mathrm{c}(M)$ and where $\mathcal{O}_j$, for all $j\in\lbrace1,\dots,m\rbrace$, are locally covariant observables which scale almost homogeneously with degree $\kappa$ and order $m-j$. The definition is inductive in the order $m$ and a locally covariant observable which scales almost homogeneously with dimension $\kappa$ and order $m=0$ scales homogeneously with dimension $\kappa$.
\end{Definition}

\begin{remark}
	Notice that the scaling of the test-function $f$ is chosen in such a way that the density $f\mu_g$ is scale invariant.
\end{remark}

\section{Wick Ordered Powers of Quantum Fields}\label{Section: Wick Ordered Powers of Quantum Fields}
Our next goal is to bypass the limitation of $\Gamma_{\mathrm{eq}}(\mathsf{E}[M;h])$ not being an algebra, since \eqref{Eqn: Algebra product} is ill-defined on local polynomial functionals. To overcome this hurdle we introduce Wick monomials, which play the r\^ole of a non-linear generalization of local and covariant observables as per Example \ref{Example: quantum fields}, proving their existence and classifying the ambiguities in their definition.

In this endeavor we adapt to the Riemannian case the approach taken by \cite{Khavkine:2014zsa}, which, in turn, is a generalization of the seminal papers \cite{Hollands-Wald-01, Hollands-Wald-02} in which the condition of the underlying manifold being analytic is dropped.
This is achieved applying the Peetre-Slov\'ak theorem, which is recalled succinctly in Appendix \ref{Appendix: Peetre-Slovak}. 

We divide the analysis in two steps, focusing first on \textit{Wick powers} and subsequently on \textit{Wick monomials}. The former identify, roughly speaking, an integer power of a single fundamental field $\Phi$ -- \textit{cf.} Example \ref{Example: quantum fields}. The latter codify the product of finitely many Wick powers, leading to the algebraic structure which we refer to as $\mathrm{E}$-product -- \textit{cf.} Proposition \ref{Prop: local and covariant algebra of local polynomials}.

In this section we discuss in detail the first step, following a procedure similar to the one employed in \cite{CDDR18} in the study of non linear sigma models. Observe that, in the following, $\Phi$ will always denote the locally covariant observable defined in Example \ref{Example: quantum fields}.
\begin{Definition}\label{Def: Wick polynomials}
	Let $\Gamma_{\operatorname{eq}}$ and $\Gamma^1_\mathrm{c}$ be the functors defined respectively in Corollary \ref{Corol: Gamma functor} and Definition \ref{Def: functor Gamma}. We call family of \emph{Wick powers}, associated to $\Phi$, a collection of natural transformations $\lbrace\Phi^k\rbrace_{k\in\N\cup\{0\}}$ with $\Phi^k\vcentcolon\Gamma^1_\mathrm{c}\to\Gamma_{\operatorname{eq}}$ such that the following conditions are met:
	\begin{enumerate}
		\item $\forall\,k\in\N\cup\lbrace0\rbrace$, $\Phi^k$
		is a natural transformation -- here we are regarding $\Gamma_{\mathrm{eq}}$ as a $\vect$-valued functor -- which scales almost homogeneously with dimension $k\mathrm{D}_\varphi=k\big(\frac{\mathrm{D}-2}{2}\big)$ and order at most $k$, where $\mathrm{D}=\dim(M)$ and where we have considered the natural generalization of Definition \ref{Def: rescaled locally covariant observable} to this setting;
		
		\item if $k=1$, $\Phi^1\equiv\Phi$ while, if $k=0$, $\Phi^0=\mathrm{Id}_{\Gamma_{\operatorname{eq}}}$, where, for any $(M;h)\in\obj(\bkgg)$, $\mathrm{Id}_{\Gamma_{\operatorname{eq}}[M;h]}$ denotes the identity functional such that for any $f_1\in \Gamma^{1}_{\mathrm{c}}[M;h]$, $\mathrm{Id}_{\Gamma_{\operatorname{eq}}[M;h]}(f_1,P;\varphi)\vcentcolon=\int_M \mu_g\,f_1$;
		
		\item $\forall k\in\N\cup\lbrace0\rbrace$, $(M;h)\in\obj(\bkgg)$, $f_1\in \Gamma^{1}_{\mathrm{c}}[M;h]$, $P\in\operatorname{Par}[M;h]$ and $\varphi_1,\varphi_2\in \mathcal{E}(M)$,
		\begin{align}\label{Eqn: derivative of Wick powers}
			\langle\Phi^k[M;h](f_1,P)^{(1)}[\varphi_1],\varphi_2\rangle=k\,\Phi^{k-1}[M;h](\varphi_2 f_1,P,\varphi_1),
		\end{align}
		where the superscript $^{(1)}$ on the left hand side denotes the first order functional derivative;
		
		\item let $n\in\N$ and let $(M;h_s)\in\obj(\bkgg)$, with $\lbrace h_s\rbrace_{s\in\R^n}$ a smooth and compactly supported $n$-dimensional family of variations of $h$ -- see Definition \ref{Def: smooth compactly supported d-dimensional family of variations} and let $L(M)$ denote the trivial line bundle $M\times\mathbb{C}$. For any smooth family $\lbrace P_s\rbrace_{s\in\R^n}$ with $P_s\in\mathrm{Par}(M;h_s)$ and for any $s\in\R^n$, let $\mathcal{U}_{k}\in\mathcal{D}^\prime(\pi^*_nL(M))$ be the distribution on the pull-back bundle $\pi^*_nL(M)$ -- here $\pi_n:\R^n\times M\to M$ denotes the canonical projection -- such that, for any $f_1\in \Gamma^{1}_{\mathrm{c}}[M;h]$, 
		\begin{align}\label{Eqn: distribution for microlocal spectrum condition}
			\mathcal{U}_{k}(\chi,f_1)\vcentcolon=\int_{\R^n}\mathrm{d}s\;\Phi^k[M;h_s](f_1,P_s,0)\chi(s)\,,\quad
			\forall\chi\in \mathcal{D}(\R^n).
		\end{align}
		It holds that $\forall k\in\N$,
		\begin{align}\label{Eqn: microlocal spectrum condition}
		\mathrm{WF}(\mathcal{U}_{k})=\emptyset,
		\end{align}
		with $\mathrm{WF}(\mathcal{U}_{k})$ denoting the wave front set of the distribution $\mathcal{U}_{k}$ \cite[Def. 8.1.2]{Hormander-83};
		
		\item for any $k\in\N$, $(M;h)\in\obj(\bkgg)$ and $f\in \Gamma_{\mathrm{c}}^{1}[M;h]$,
		\begin{align}\label{Eqn: hermiticity of Wick powers}
		\Phi^k[M;h](f)^*=\Phi^k[M;h](\overline{f}).
		\end{align}
	\end{enumerate}
\end{Definition}

\begin{remark}\label{Remark: dependence on s of the parametrix}
	Notice that the family $\lbrace P_s\rbrace_{s\in\R^n}$ is associated to a unique $P\in\mathcal{D}'(\pi_n^*L(M\times M))$ and the existence of such family is a consequence both of the smooth dependence on $s\in\R^n$ of the elliptic operator $E_s$, associated with the background geometries $(M;h_s)$ and of the construction of $P_s$ as a pseudodifferential operator \cite[Thm. 5.1]{Shubin}.
\end{remark}

\subsection{Existence of Wick Ordered Powers of Quantum Fields}\label{Section: Existence of Wick powers}
In this short section we exhibit an explicit construction of Wick powers abiding by the axioms of Definition \ref{Def: Wick polynomials}. Let $k\in\N$, $(M;h)\in\obj(\bkgg)$ and let $f\in \Gamma_{\mathrm{c}}^{1}[M;h]$ while $\varphi\in\mathcal{E}(M)$. Starting from the polynomial local functional
\begin{align}\label{Eqn: k-th power polynomial functional}
		\phi^k[M;h](f,\varphi)\vcentcolon=\int_M\,\mu_g\varphi^k f\,,
\end{align}
we construct an equivariant counterpart with respect to the choice of a parametrix $P\in\operatorname{Par}[M;h]$. Recalling that $\operatorname{Par}[M;h]$ is an affine space modeled over $C^\infty(M\times M)$, we set
\begin{align}\label{Eqn: definition of Wick powers}
	\wick{\Phi^k}_H[M;h](f,P,\varphi)\vcentcolon=\bigg[\exp\big[\Upsilon_{W_P}]\phi^k[M;h](f)\bigg](\varphi)\,,
\end{align}
where $\Upsilon_{W_P}$ is defined as in \eqref{Eqn: Gamma contraction operator}, while
$W_P$ has been introduced in Remark \ref{Remark: Hadamard representation}.

Observe that \eqref{Eqn: definition of Wick powers} is well-defined on account of the support properties of the functional derivatives of local functionals -- \textit{cf.} Definition \ref{Def: functionals} -- which ensure that only the coinciding point limit $[W_P](x)=W_P(x,x)$ is needed in the evaluation of $\exp[\Upsilon_{W_P}]F$.

This prescription fulfills all requirements of Definition \ref{Def: Wick polynomials}.
The proof is very similar to the one outlined in \cite{Hollands-Wald-01}. For this reason here we shall give only a brief sketch.
As a matter of fact $\wick{\Phi^k}_H$ is a locally covariant observable which scales almost homogeneously with dimension $k\big(\frac{\mathrm{D}-2}{2}\big)$ and order at most $k$ as a consequence of the engineering dimension of $\varphi$, see Equation \eqref{Eqn: Engineering Dimension}, and of the presence in even dimensions of the logarithmic term in the Hadamard expansion of the parametrix, {\it cf.} Equation \eqref{Eqn: Hadamard representation}.
The second, the third and the fifth condition of Definition \ref{Def: Wick polynomials} hold true per construction, while the fourth one is a by product of the identities
\begin{align*}
	\wick{\Phi^{2k+1}}_H[M;h](f,P,0)=0\,,\qquad	\wick{\Phi^{2k}}_H[M;h](f,P,0)=\int_M\,\mu_g [W_P]^k f\,,\qquad\forall\,k\in\N\,.
\end{align*}
Since for any smooth family of parametrices $\{P_s\}_{s\in\mathbb{R}}$ it holds that $[W_{P_s}](x)$ is also a smooth in $(s,x)\in\mathbb{R}\times M$, the previous identity entails that the associated distribution $\mathcal{U}_{k}$ has empty wave front set.

\subsection{Non-Uniqueness of Wick Ordered Powers of Scalar Quantum Fields}\label{Section: Uniqueness of Wick Ordered Powers of Quantum Fields}
In this section we investigate whether there exist ambiguities in the prescription of Wick polynomials outlined in Section \ref{Section: Existence of Wick powers}. In the Lorentzian setting, this is an overkilled topic \cite{Hollands-Wald-01, Khavkine-Melati-Moretti-17,Khavkine:2014zsa} and, for our purposes, we adopt the same strategy of \cite{Khavkine:2014zsa}. We split the main result of this section in two theorems, namely, in the first, we prove a general formula \eqref{Eqn: ambiguities for Wick powers} relating two arbitrary prescriptions for Wick powers by means of a family of suitable coefficients, whose structural properties are proven in the second theorem.

\begin{theorem}\label{Theorem: Unicity of Wick powers I}
	Let $\{\widehat{\Phi}^k\}_k$ and $\{\Phi^k\}_k$ be two families of Wick powers associated to $\Phi$ as per Definition \ref{Def: Wick polynomials}.
	Then for $(M,h)\in\mathsf{Obj}(\bkgg)$ and for all $k\geq 2$ there exists a family $\{c_j[M;h]\}_{2\leq j\leq k}$ of smooth functions $c_j[M;h]\in\Gamma^{1}[M;h]$ such that for all $f\in\Gamma_{\mathrm{c}}^{1}[M;h]$
	\begin{align}\label{Eqn: ambiguities for Wick powers}
		\widehat{\Phi}^k[M;h](f)=\Phi^k[M;h](f)+\sum_{j=0}^{k-2}\binom{k}{j}\Phi^j[M;h](c_{k-j}[M;h]\lrcorner f)\,,
	\end{align}
	where $c_{k-j}[M;h]\lrcorner f\in\Gamma_{\mathrm{c}}^{1}[M;h]$ denotes the pointwise multiplication\footnote{We stick with this notation in view of section \ref{Section: The Case of a Scalar Field with Derivatives}.} between $c_{k-j}[M;h]$ and $f$.
	The tensor $c_{k-j}[M;h]$ is weakly regular as per definition \ref{Def: weak regularity condition}.
	Moreover by defining
	\begin{align}
		C_{k-j}[M;h](f)\vcentcolon=\int_M\mu_g\;c_{k-j}[M,h]\lrcorner f\operatorname{Id}_{\Gamma_{\mathrm{eq}}[M;h]}\,,
	\end{align}
	we have that $C_{k-j}$ is a local and covariant observable as per definition \ref{Def: locally covariant observable} which scales almost homogeneously with dimension $(k-j)\mathrm{D}_\varphi=(k-j)\big(\frac{\mathrm{D}-2}{2}\big)$ with respect to the transformation $h=(g,A,c)\mapsto h_\lambda=(\lambda^{-2}g,A,\lambda^2c)$ in Equation \eqref{Eqn: Scaling Transformation}.
\end{theorem}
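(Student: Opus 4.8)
The plan is to proceed by induction on $k$, using the functional-derivative axiom (third condition of Definition \ref{Def: Wick polynomials}) to peel off the top-degree part of the discrepancy and reduce, at each stage, to a $\varphi$-independent remainder that the Peetre--Slov\'ak theorem identifies as a local geometric coefficient. The base cases $k=0,1$ are immediate: the second condition of Definition \ref{Def: Wick polynomials} gives $\widehat{\Phi}^0=\Phi^0=\mathrm{Id}_{\Gamma_{\operatorname{eq}}}$ and $\widehat{\Phi}^1=\Phi^1=\Phi$, so both sides coincide and the sum in \eqref{Eqn: ambiguities for Wick powers} is empty.

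For the inductive step I would set $D^k[M;h](f):=\widehat{\Phi}^k[M;h](f)-\Phi^k[M;h](f)$ and compute its first functional derivative. By the third condition of Definition \ref{Def: Wick polynomials} applied to both families,
\begin{equation*}
\langle D^k[M;h](f,P)^{(1)}[\varphi],\psi\rangle = k\,D^{k-1}[M;h](\psi f,P,\varphi)\,.
\end{equation*}
The inductive hypothesis expresses $D^{k-1}$ through \eqref{Eqn: ambiguities for Wick powers} with coefficients $c_2[M;h],\dots,c_{k-1}[M;h]$. I would then define the candidate correction $\widetilde{D}^k[M;h](f):=\sum_{j=1}^{k-2}\binom{k}{j}\Phi^j[M;h](c_{k-j}[M;h]\lrcorner f)$ using exactly these lower-order coefficients, and check---again by the third condition of Definition \ref{Def: Wick polynomials} applied to each $\Phi^j$, together with the binomial identity $(j+1)\binom{k}{j+1}=k\binom{k-1}{j}$---that $\widetilde{D}^k$ and $D^k$ share the same first functional derivative at every $\varphi$.

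It follows that $D^k-\widetilde{D}^k$ has vanishing first functional derivative for all $\varphi$; since Wick powers are polynomial, this forces $(D^k-\widetilde{D}^k)[M;h](f,P,\cdot)$ to be constant in $\varphi$, and I denote its value by $G[M;h](f,P)$. Because $G$ has vanishing functional derivatives, the isomorphism $\alpha_P^Q$ of \eqref{Eqn: star-isomorphism between different algebras} acts trivially on it, so equivariance of $D^k$ and $\widetilde{D}^k$ yields $G[M;h](f,P)=\alpha_P^Q G[M;h](f,Q)=G[M;h](f,Q)$: the remainder is independent of the parametrix. What survives is precisely the missing $j=0$ term, and the whole problem reduces to showing that $G[M;h](f)=\int_M\mu_g\,c_k[M;h]\,f$ for a smooth function $c_k[M;h]$.

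This last identification is the main obstacle. The functional $G$ is a natural transformation (inherited from naturality of $D^k$ and $\widetilde{D}^k$) taking values in the $\varphi$-independent, $P$-independent ``$c$-number'' sector and built covariantly from the background data $h=(g,A,c)$; the fourth condition of Definition \ref{Def: Wick polynomials}, which encodes smooth dependence on compactly supported variations of $h$, supplies exactly the regularity needed to apply the Peetre--Slov\'ak theorem of Appendix \ref{Appendix: Peetre-Slovak}. Its conclusion yields the smooth, locally and covariantly constructed coefficient $c_k[M;h]$ together with its weak regularity in the sense of Definition \ref{Def: weak regularity condition}. The almost-homogeneous scaling axiom then fixes the engineering dimension of $c_k$ to be $k\big(\frac{\mathrm{D}-2}{2}\big)$---consistent with $(k-j)\big(\frac{\mathrm{D}-2}{2}\big)$ at $j=0$, since $\Phi^0$ carries dimension zero---while the fifth (hermiticity) condition forces reality; assembling these properties shows that each $C_{k-j}$ is a local and covariant observable scaling almost homogeneously with the stated dimension.
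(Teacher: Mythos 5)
Your inductive skeleton coincides with the paper's: you form the discrepancy, subtract the lower-order corrections built from $c_2,\dots,c_{k-1}$, use axiom (3) of Definition \ref{Def: Wick polynomials} together with the identity $(j+1)\binom{k}{j+1}=k\binom{k-1}{j}$ to show the first functional derivative of the remainder vanishes, conclude that the remainder is constant in $\varphi$ and (since all its functional derivatives then vanish, so that $\alpha_P^Q$ acts trivially on it) independent of the parametrix, and identify it with the missing $j=0$ term. Up to this point your argument is correct and matches the paper's proof step by step; your bookkeeping of the binomial coefficients and of the $\alpha_P^Q$-equivariance is in fact more explicit than the paper's.

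The final step, however, is mis-argued. The Peetre--Slov\'ak theorem cannot ``yield the smooth coefficient $c_k[M;h]$ together with its weak regularity'': its hypotheses are precisely germ-locality \emph{and} weak regularity of a map between spaces of smooth sections, and its conclusion is that such a map is a differential operator of locally bounded order. So (i) weak regularity is an input of that theorem, not an output, and (ii) before one can even contemplate applying it to the assignment $h\mapsto c_k[M;h]$, one must already know that $G[M;h](f)=\int_M\mu_g\,c_k[M;h]\,f$ with $c_k[M;h]$ a smooth \emph{function} rather than merely a distribution. Both facts follow directly from axiom (4) of Definition \ref{Def: Wick polynomials}, with no appeal to Peetre--Slov\'ak: since the remainder satisfies all the Wick-power axioms (being a linear combination of objects that do) and is independent of $\varphi$ and $P$, evaluating the distribution $\mathcal{U}_k$ of Equation \eqref{Eqn: distribution for microlocal spectrum condition} on a constant family of variations gives $\mathrm{WF}(c_k[M;h])=\emptyset$, i.e. $c_k[M;h]\in\Gamma^{1}[M;h]$, while evaluating it on an arbitrary smooth compactly supported $d$-dimensional family $\{h_s\}$ gives joint smoothness of $(s,x)\mapsto c_k[M;h_s](x)$, which is exactly weak regularity as per Definition \ref{Def: weak regularity condition}. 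This is how the paper concludes. The Peetre--Slov\'ak theorem enters only afterwards, in Theorem \ref{Theorem: Unicity of Wick powers II}, where the (now established) germ-locality and weak regularity of $h\mapsto c_k[M;h]$ are used to show that $c_k$ is a covariant polynomial differential expression in $g$, $A$, $c$ and their derivatives --- a statement that Theorem \ref{Theorem: Unicity of Wick powers I} does not require. Your treatment of the scaling and hermiticity properties is otherwise consistent with the paper's.
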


\begin{proof}
	The proof goes per induction with respect to $k\in\N$.
	First of all notice that, since, by definition, $\widehat{\Phi}^1=\Phi=\Phi^1$, the thesis holds true if $k=1$.
	We can now prove the inductive step, \ie, we assume that the thesis holds true up to order $k-1$, namely there exist weakly regular tensors $\{c_j[M;h]\}_{2\leq j\leq k-1}$ such that 
	\begin{align}\label{Eqn: inductive hypothesis}
		\widehat{\Phi}^{k-1}[M;h](f)=\Phi^{k-1}[M;h](f)+\sum_{j=0}^{k-3}\binom{k-1}{j}\Phi^j[M;h](c_{k-1-j}[M;h]\lrcorner f)\,,\\
	\end{align}
	for all $f\in\Gamma_{\mathrm{c}}^{1}[M;h]$.
	Let us introduce
	\begin{align*}
	C_k[M;h](f)\vcentcolon=
	\widehat{\Phi}^{k}[M;h](f)-\Phi^{k}[M;h](f)-\sum_{j=1}^{k-2}\binom{k}{j}\Phi^j[M;h](c_{k-j}[M;h]\lrcorner f)\,.
	\end{align*}
	First of all, notice that $C_k$ is a locally covariant observable which satisfies all the axioms of Wick powers as per Definition \ref{Def: Wick polynomials}, since it is constructed as a linear combination of objects enjoying such properties. In addition, $C_k$ is a $\C$-number field, namely it is proportional to the identity functional. This is a consequence of axiom $(3)$ of Definition \ref{Def: Wick polynomials} and of the inductive hypothesis \eqref{Eqn: inductive hypothesis} which entail
	\begin{align*}
	\langle C_k[M;h]^{(1)}(f,P)[\varphi_1],\varphi_2\rangle=0,\quad\forall\,P\in\operatorname{Par}[M;h],\, f\in \Gamma_{\mathrm{c}}^{1}[M;h],\,\mathrm{and}\, \varphi_1,\varphi_2\in \mathcal{E}(M).
	\end{align*}
	As a consequence, we conclude that $C_k$, seen as an element of $\Gamma_{\operatorname{eq}}[M;h]$, is independent from $P$ and $\varphi$. Therefore 
	\begin{align*}
		C_k[M;h](f)=\int_M \mu_g c_k[M;h]\lrcorner f\;\mathrm{Id}_{\Gamma_{\operatorname{eq}}[M;h]}\,,
	\end{align*}
	where $c_k[M;h]\in\Gamma^1[M;h]$ because of axiom $(4)$ of Definition \ref{Def: Wick polynomials}, which entails
	\begin{align*}
		\mathrm{WF}(c_k[M;h])=\emptyset\,.
	\end{align*}
	To discuss the regularity properties of $c_k[M;h]$, consider an $m$-dimensional family of smooth compactly supported variations $h_s$ of $h$ -- \textit{cf.} definition \ref{Def: smooth compactly supported d-dimensional family of variations}.
	Following the same procedure as above, it descends that $c_k[M;h_s](x)$ is jointly smooth in $(s,x)$.
	Hence $(M;h)\to c_k[M;h]$ is weakly regular. 
\end{proof}

\begin{remark}
	On account of the properties of $C_k[M;h]$ it follows that $c_k[M;h](x)$ depends only on the germ of $h$ at $x$.
	To this end, we focus on the behaviour of $c_k[M;h]$ under pull-back with respect to $\chi\in\arr(\bkgg)$.
	In particular, for any $U_x$, relatively compact open neighbourhood centered at $x$, the inclusion map $\chi_U:U_x\hookrightarrow M$ identifies a morphism of $\bkgg$.
	In addition the locality property of $C_k$ implies $\chi^*c_k[\chi(M);h]=c_k[M;\chi^*h]$ for any  $\chi:U\to M$.
	Hence, for any but fixed $x\in M$, the sought conclusion descends considering a sequence of relatively compact open neighbourhoods centered at $x$, $\{U_{x,i}\}_{i\in\mathbb{N}}$ such that $U_{x,i+1}\subset U_{x,i}$ and $\lim\limits_{i\to\infty}U_{x,i}=\{x\}$.
	The following theorem provides more information on the coefficients $c_k$ \cite{Khavkine-Melati-Moretti-17,Khavkine:2014zsa}.
\end{remark}

\begin{theorem}\label{Theorem: Unicity of Wick powers II}
	Let $\{\widehat{\Phi}^k\}_k$ and $\{\Phi^k\}_k$ be two families of Wick powers associated to $\Phi$ as per Definition \ref{Def: Wick polynomials}.
	With reference to Equation \eqref{Eqn: ambiguities for Wick powers}, it holds that, for any $(M;h)\in\obj(\bkgg)$, the coefficients $c_\ell[M;h]$ are differential operators taking the form
	\begin{align}\label{Eqn: coefficients c_k}
		c_\ell[M;h](x)=c_\ell[g^{ab}(x), \epsilon^{a_1\dots a_n}(x), R_{abcd}(x), \dots, \nabla_{e_1}\dots\nabla_{e_n}R_{abcd}(x),\dots\\
		\dots A(x), \nabla_{e_1}\dots\nabla_{e_n}A_a(x), c(x), \nabla_{e_1}\dots\nabla_{e_n}c(x)]\,,\notag
	\end{align}
	where $\epsilon^{a_1\dots a_n}(x)$ and $R_{abcd}(x)$ denote respectively, the Levi-Civita and the Riemann curvature tensors built out of $g$ at $x\in M$.
	Furthermore, each $c_\ell$ is a polynomial, scalar function, covariantly constructed from of its arguments.
	Finally, every $c_\ell$ scales homogeneously with dimension $\ell\big(\frac{\mathrm{D}-2}{2}\big)$ under the transformation $h=(g,A,c)\mapsto h_\lambda=(\lambda^{-2}g,A, \lambda^2c)$.
\end{theorem}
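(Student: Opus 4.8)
The strategy mirrors the one of \cite{Khavkine:2014zsa,Khavkine-Melati-Moretti-17} in the Lorentzian setting, combining the Peetre--Slov\'ak theorem with the classical reduction theory of natural tensors and with the scaling constraint inherited from Theorem \ref{Theorem: Unicity of Wick powers I}. From that theorem and the ensuing remark we already know that $(M;h)\mapsto c_\ell[M;h]$ is local -- $c_\ell[M;h](x)$ depends only on the germ of $h$ at $x$ -- covariant -- $\chi^*c_\ell[\chi(M);h]=c_\ell[M;\chi^*h]$ for every $\chi\in\arr(\bkgg)$ -- weakly regular, and that it scales almost homogeneously with dimension $\ell\big(\frac{\mathrm{D}-2}{2}\big)$. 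These are precisely the inputs of the argument.

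First I would invoke the Peetre--Slov\'ak theorem (\textit{cf.} Appendix \ref{Appendix: Peetre-Slovak}), regarding $c_\ell$ as a natural operator from the bundle of background data $h=(g,A,c)$ into the trivial line bundle $L(M)$. Locality together with weak regularity are exactly the hypotheses needed to conclude that $c_\ell$ is a differential operator of locally bounded order: there is a smooth function $C_\ell$ such that $c_\ell[M;h](x)=C_\ell\big(j^r_x g,\,j^r_x A,\,j^r_x c\big)$, where $j^r_x$ denotes the $r$-jet at $x$ for some finite $r$. This removes any a priori non-local dependence and reduces the problem to the study of a single smooth function of finitely many jet variables.

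Next I would exploit covariance. The identity $\chi^*c_\ell[\chi(M);h]=c_\ell[M;\chi^*h]$ states that $C_\ell$ is invariant under the action of orientation-preserving local isometries on the jet variables, \ie\ $c_\ell$ is a natural scalar. Expressing the metric jet in geodesic normal coordinates via the Thomas replacement argument, the variables $j^r_x g$ are traded for the Riemann tensor $R_{abcd}$ and its covariant derivatives $\nabla_{e_1}\cdots\nabla_{e_n}R_{abcd}$, while $j^r_x A$ and $j^r_x c$ are traded for $A$, $c$ and their covariant derivatives. Applying the first fundamental theorem of invariant theory for $SO(\mathrm{D})$ -- the special orthogonal group entering because the morphisms of $\bkgg$ preserve the orientation, whence the Levi-Civita tensor $\epsilon^{a_1\dots a_{\mathrm D}}$ is admissible -- together with Schwarz's theorem on smooth invariants of compact group actions, $c_\ell$ is realised as a smooth function of the complete contractions of $g^{ab}$, $\epsilon^{a_1\dots a_{\mathrm D}}$, $R_{abcd}$, $\nabla^{(n)}R_{abcd}$, $A$, $\nabla^{(n)}A$, $c$ and $\nabla^{(n)}c$. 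This is the content of \eqref{Eqn: coefficients c_k}, short of polynomiality and of strict homogeneity.

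Finally I would close the argument with the scaling behaviour. Under $h\mapsto h_\lambda=(\lambda^{-2}g,A,\lambda^2c)$ of \eqref{Eqn: Scaling Transformation} each building block rescales by a pure power of $\lambda$,
\begin{align*}
	g^{ab}\mapsto\lambda^{2}g^{ab},\quad
	\nabla^{(n)}R_{abcd}\mapsto\lambda^{-2}\nabla^{(n)}R_{abcd},\quad
	\nabla^{(n)}A\mapsto\nabla^{(n)}A,\quad
	\nabla^{(n)}c\mapsto\lambda^{2}\nabla^{(n)}c,\quad
	\epsilon^{a_1\dots a_{\mathrm D}}\mapsto\lambda^{\mathrm D}\epsilon^{a_1\dots a_{\mathrm D}},
\end{align*}
so that every complete contraction scales homogeneously with a definite weight. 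Since forming a scalar forces every lower index to be saturated by an inverse metric, each of which carries the strictly positive weight $+2$, the weight of a contraction grows with both the number of tensor factors and the number of covariant derivatives; consequently only finitely many monomial contractions can share the fixed weight $\ell\big(\frac{\mathrm{D}-2}{2}\big)$. Feeding this finiteness into the almost-homogeneous scaling law \eqref{Eq: scaling alm_hom} inherited from Theorem \ref{Theorem: Unicity of Wick powers I}, a finite sum of monomials each scaling by a pure power of $\lambda$ can obey that law only if the logarithmic remainders vanish and all surviving monomials carry exactly the weight $\ell\big(\frac{\mathrm{D}-2}{2}\big)$. This simultaneously upgrades the scaling to strict homogeneity with the asserted dimension, forces the smooth function of contractions to be a polynomial, and yields the claimed form of $c_\ell$. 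The main obstacle is the covariance step: the passage from an abstract smooth function of jets to a finite combination of complete curvature contractions rests on the reduction theory of natural tensors, which must be imported as external structural input, and the oriented setting -- legitimising the Levi-Civita tensor in \eqref{Eqn: coefficients c_k} -- is a genuine point of care; by comparison the scaling argument, though indispensable for polynomiality and for eliminating the logarithmic orders, is essentially a bookkeeping of weights once the contraction form is in place.
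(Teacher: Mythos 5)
Your proposal is correct and follows essentially the same route as the paper, which in fact omits its own proof and defers to \cite[Theorem 3.1]{Khavkine:2014zsa}: the Peetre--Slov\'ak theorem reduces $c_\ell$ to a differential operator of finite order in the jets of $(g,A,c)$, covariance plus the reduction theory of natural (oriented) invariants yields the curvature-contraction form \eqref{Eqn: coefficients c_k}, and the positivity of the scaling weights together with the almost-homogeneous scaling inherited from Theorem \ref{Theorem: Unicity of Wick powers I} forces polynomiality and strict homogeneity. The only point to phrase carefully is that polynomiality must be extracted from \emph{smoothness plus almost-homogeneous scaling} (a Taylor-expansion argument in the finitely many homogeneously scaling invariants) \emph{before} one may speak of ``a finite sum of monomials'', which is precisely how the cited argument proceeds.
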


\noindent We omit the proof of this last statement, which relies in turn on the Peetre-Slov\'ak theorem, since, taking into account Theorem \ref{Theorem: Unicity of Wick powers I}, it is, \emph{mutatis mutandis}, identical to \cite[Theorem 3.1]{Khavkine:2014zsa}.

\section{E-Product of Wick Ordered Powers of Quantum Fields}\label{Section: E-Product of Wick Ordered Powers of Quantum Fields}
In this section we discuss how to endow the Wick powers with a product structure, which we will refer to as {\em E-product}, which can be read as a local and covariant extension of the bilinear map $\cdot_P$ as in Equation \eqref{Eqn: Algebra product}.
As a byproduct $\Gamma_{\operatorname{eq}}$ acquires the structure of an algebra, hence identifying a full fledged Euclidean locally covariant field theory as per Definition \ref{Def: Euclidean locally covariant theory}.
Recall that the counterpart of this analysis in a Lorentzian framework leads to the introduction of the renowned time ordered product (T-product) \cite{Hollands-Wald-02} which is at the heart of perturbation theory. 

In order to define the E-product we first introduce the concept of Wick monomials which can be seen as a natural generalization of the one of Wick power.
In particular Wick monomials provide an extension of the product defined in equation \eqref{Eqn: product and involution on the locally covariant algebra of interest} to a chosen family of Wick power -- \textit{cf.} equation \eqref{Eqn: causal factorization}.
Once a specific choice of Wick monomials has been made, Proposition \ref{Prop: local and covariant algebra of local polynomials} ensures that the algebra $\mathcal{A}_{\mathrm{reg}}[M;h]$ can be extended to a larger one, $\mathcal{A}[M;h]$, generated by the chosen family of Wick powers.
The product over $\mathcal{A}[M;h]$ is induced by Wick monomials, and it is called E-product.

In what follows, $\underline{\mathrm{k}}=(k_n)_{n}$ shall denote a finite sequence of non-negative integers, while with $\ell(\underline{\mathrm{k}})\in\mathbb{N}$ we indicate the number of elements of any such finite sequence $\underline{\mathrm{k}}=(k_1,\ldots,k_{\ell(\underline{\mathrm{k}})})$.

\begin{Definition}\label{Def: E-product}
	Let $\lbrace\Phi^k\rbrace_{k\in\N}$ be a family of Wick powers associated with the quantum field $\Phi$, as per Definition \ref{Def: Wick polynomials} and let $\text{\ul{k}}=(k_n)_{n}$ denote a finite sequence of $\ell(\underline{\mathrm{k}})$ many non-negative natural numbers.
	We call family of Wick monomials $\{\Phi^{\text{\ul{k}}}\}_{\text{\ul{k}}}$ associated to $\{\Phi^k\}_k$ a family of natural transformations
	$\Phi^{\text{\ul{k}}}\colon\Gamma_{\mathrm{c}}^{\ell(\underline{\textrm{k}})}\to\Gamma_{\mathrm{eq}}$ with the following properties:
	\begin{enumerate}
		\item
		for every finite sequence $\text{\ul{k}}$,
		$\Phi^{\text{\ul{k}}}\colon\Gamma_{\operatorname{c}}^{\ell(\underline{\textrm{k}})}\to\Gamma_{\operatorname{eq}}$
		scales almost homogeneously with dimension
		$\sum_{i=1}^{\ell(\underline{\textrm{k}})} k_i\big(\frac{\mathrm{D}-2}{2}\big)$
		and order at most
		$\sum_{i=1}^{\ell(\underline{\textrm{k}})}k_i$,
		with $\mathrm{D}\vcentcolon=\mathrm{dim}(M)$, where $\Gamma_{\mathrm{eq}}$ and
		$\Gamma_{\mathrm{c}}^{\ell(\underline{\textrm{k}})}$
		are the functors introduced respectively in Corollary \ref{Corol: Gamma functor} and in Definition \ref{Def: functor Gamma}; 
		
		\item
		if $\ell(\underline{\textrm{k}})=1$ then $\Phi^{\underline{\textrm{k}}}=\Phi^{k_1}$;
		
		\item let $\text{\ul{k}}$ be an arbitrary sequence,
		and $(M;h)\in\obj(\bkgg)$ and let
		$f_1,\dots,f_{\ell(\underline{\textrm{k}})}\in \Gamma_{\mathrm{c}}^{1}[M;h]$.
		Let $I\subsetneq\lbrace1,\dots,\ell(\underline{\mathrm{k}})\rbrace$ be a proper subset and denote with $I^\mathrm{c}$ the complement of $I$ with respect to $\lbrace1,\dots,\ell(\underline{\mathrm{k}})\rbrace$. If
		\begin{align}\label{Eqn: condition on supports for causal factorization}
			\bigcup_{i\in I}\mathrm{supp}(f_i)\cap\bigcup_{j\in I^\mathrm{c}}\mathrm{supp}(f_j)=\emptyset\,,
		\end{align}
		then
		\begin{align}\label{Eqn: causal factorization}
			\Phi^{\text{\ul{k}}}[M;h](f_1\otimes\dots\otimes f_\ell)=
			\Phi^{\text{\ul{k}}_I}[M;h]\bigg(\bigotimes_{i\in I} f_i\bigg)\cdot\Phi^{\text{\ul{k}}_{I^\mathrm{c}}}[M;h]\bigg(\bigotimes_{j\in I^\mathrm{c}} f_j\bigg)\,,
		\end{align}
		where $\text{\ul{k}}_I$ and $\text{\ul{k}}_{I^\mathrm{c}}$ denote, respectively, the finite sequences associated with the indices of $I$ and $I^\mathrm{c}$ and where $\cdot$ denotes the equivariant product as per Equation \eqref{Eqn: product and involution on the locally covariant algebra of interest}.
		
		\item
		for all sequences $\text{\ul{k}}$ and for any $(M;h)\in\obj(\bkgg)$, $f\in \Gamma_{\mathrm{c}}^{\ell(\underline{\mathrm{k}})}[M;h]$, $P\in\operatorname{Par}[M;h]$ and $\varphi,\psi\in\mathcal{E}(M)$,
		\begin{align}\label{Eqn: derivative of T-product}
			\langle\Phi^{\text{\ul{k}}}[M;h]^{(1)}(f,P)[\varphi],\psi\rangle=
			\sum_{j=1}^{\ell(\underline{\mathrm{k}})} k_j\Phi^{\widehat{\text{\ul{k}}}_j}[M;h](\psi f,P,\varphi),
		\end{align}
		where $(\widehat{\text{\ul{k}}}_j)_n=k_n$ for all $n\neq j$ while $(\widehat{\text{\ul{k}}}_j)_j=k_j-1$ -- we set by definition $\Phi^{\widehat{\underline{\mathrm{k}}}_j}=0$ whenever $k_j-1<0$ for some $j\in\{1,\ldots,\ell(\underline{\mathrm{k}})\}$;
		
		\item
		for all sequences $\text{\ul{k}}$, let $(M;h_s)\in\obj(\bkgg)$ with $\lbrace h_s\rbrace_{s\in\R^n}$ be a smooth and compactly supported family of variations of $h$, as per Definition \ref{Def: smooth compactly supported d-dimensional family of variations}.
		Let $\operatorname{V}^{\ell(\underline{\mathrm{k}})} M$ be the trivial bundle $M^{\ell(\underline{\mathrm{k}})}\times\mathbb{C}$.
		For any smooth family $\lbrace P_s\rbrace_{s\in\R^n}$ with $P_s\in\mathrm{Par}(M;h_s)$ and for any $s\in\R^n$, let $\mathcal{U}_{\text{\ul{k}}}\in\mathcal{D}^\prime(\widetilde{\pi}^*_n(\operatorname{V}^{\ell(\underline{\mathrm{k}})} M))$ be the distribution on the pull-back bundle $\widetilde{\pi}^*_n \operatorname{V}^{\ell(\underline{\mathrm{k}})} M$ -- here $\widetilde{\pi}_n:\R^n\times M^{\ell(\underline{\mathrm{k}})}\to M^{\ell(\underline{\mathrm{k}})}$ denotes the canonical projection -- such that, for any $f\in\Gamma_{\mathrm{c}}^{{\ell(\underline{\mathrm{k}})}}[M;h]$ and $\chi\in C^\infty_\mathrm{c}(\R^n)$,
		\begin{align}\label{Eqn: distribution for msc of T-product}
			\mathcal{U}_{\text{\ul{k}}}(\chi,f)\vcentcolon=\int_{\R^n}\mathrm{d}s\;\Phi^{\text{\ul{k}}}[M;h_s](f,P_s,0)\chi(s)\,.
		\end{align}
		We require that $\mathrm{WF}(\mathcal{U}_{\text{\ul{k}}})\subseteq C_{({\ell(\underline{\mathrm{k}})})}(M;h)$ where
		\begin{align}\label{Eqn: set for microlocal spectrum condition of E-product}
			\nonumber
			C_{({\ell(\underline{\mathrm{k}})})}(M)=\bigg\lbrace(x_1,p_1;\dots;x_{\ell(\underline{\mathrm{k}})},p_{\ell(\underline{\mathrm{k}})};s,\tau)\in T^*(\widetilde{\pi}_n^*V^{\ell(\underline{\mathrm{k}})} M)\setminus\lbrace0\rbrace\,|\,\exists\,I=\lbrace i_1,\dots, i_{|I|}\rbrace\subset\lbrace1,\dots,{\ell(\underline{\mathrm{k}})}\rbrace,\\
			 |I|\geqslant2 : (x_{i_1},\dots, x_{i_{|I|}})\in\mathrm{Diag}(M^{|I|}), \sum_{i\in I}p_i=0\bigg\rbrace;
		\end{align}
	\end{enumerate}

\end{Definition}

\begin{remark}
	Observe that Definition \ref{Def: E-product} coincides with Definition \ref{Def: Wick polynomials} when ${\ell(\underline{\mathrm{k}})}=1$. In particular, in this case, the set $C_{(1)}(M)$ as per Equation \eqref{Eqn: set for microlocal spectrum condition of E-product} is empty.
\end{remark}

\begin{remark}
	It is noteworthy that, for the particular choice $k_j=1$ for all $1\leq j\leq\ell$, axiom (2-3) of Definition \ref{Def: E-product} leads to
	\begin{align*}
		\Phi^{(1,\ldots,1)}[M;h](f_1\otimes\ldots\otimes f_\ell)
		=\Phi[M,h](f_1)\cdot\ldots\cdot\Phi[M;h](f_\ell)\,,
	\end{align*}
	for all $f_1,\ldots f_\ell\in\Gamma_{\mathrm{c}}^1[M;h]$.
	Moreover, for every finite sequence $\underline{\mathrm{k}}=(k_1,\ldots,k_\ell)$ such that $k_j=0$ for some $j\in\{1,\ldots,\ell\}$ we have
	\begin{multline*}
		\Phi^{(k_1,\ldots,k_\ell)}[M;h](f_1\otimes\ldots\otimes f_\ell)
		=\Phi^{(k_1,\ldots,k_{j-1})}[M;h](f_1\otimes\ldots\otimes f_{j-1})\\
		\cdot\operatorname{Id}_{\Gamma_{\mathrm{eq}}[M;h]}(f_j)
		\cdot\Phi^{(k_{j+1},\ldots,k_\ell)}[M;h](f_{j+1}\otimes\ldots\otimes f_\ell)\,,
	\end{multline*}
	for all $f_1,\ldots f_\ell\in\Gamma_{\mathrm{c}}^1[M;h]$, where $\operatorname{Id}_{\Gamma_{\mathrm{eq}}[M;h]}(f_j,P,\varphi)=\int_M f_j\mu_g$.
\end{remark}

\begin{remark}\label{Rmk: causal factorization axiom}
	Notice that, whenever $F,G\in\mathcal{P}_{\mathrm{loc}}(M;h)$ are such that $\operatorname{supp}(F)\cap\operatorname{supp}(G)=\emptyset$, formula \eqref{Eqn: Algebra product} for $F\cdot_PG$ is well-defined for all $P\in\operatorname{Par}[M;h]$ on account of the singular structure of the parametrices, {\it cf} \eqref{Eqn: Parametrix wave front set}. In turn, this entails that the right-hand side of Equation \eqref{Eqn: causal factorization} is well-defined. We will refer to axiom $(3)$ of Definition \ref{Def: E-product} as the \emph{support factorization} axiom. It was first introduced in \cite{Keller-09} under the name of ``causal factorization'' to make a more direct contact with the nomenclature used for quantum fields on globally hyperbolic backgrounds. We prefer to call it differently to emphasize the marked differences between theories built on manifolds with Euclidean and Lorentzian signature.
\end{remark}
\begin{remark}\label{Rmk: particular cases for Wick monomials}
	For each sequence $\text{\ul{k}}$, the transformation $\Phi^{\text{\ul{k}}}\colon\Gamma_{\mathrm{c}}^{\ell(\underline{\mathrm{k}})}\to\Gamma_{\operatorname{eq}}$ should be interpreted as a prescription for the product of finitely many Wick powers (specifically $\Phi^{k_1},\ldots,\Phi^{k_n}$) at different base points $x_1,\ldots,x_n\in M$ -- \textit{cf.} Proposition \ref{Prop: local and covariant algebra of local polynomials}.
	This is also consistent with axiom $(3)$.
\end{remark}

\noindent To conclude the section we show that the E-product allows to identify an Euclidean locally covariant field theory built out of the Wick powers of the underlying scalar field.

\begin{proposition}\label{Prop: local and covariant algebra of local polynomials}
	Let $\{\Phi^{\text{\ul{k}}}\}_{\text{\ul{k}}}$ be a family of Wick monomials associated with an arbitrary but fixed family of Wick powers $\{\Phi^k\}_k$.
	With reference to Definition \ref{Def: locally covariant algebra of regular observables}, for all $(M;h)\in\bkgg$, let $\mathcal{A}\colon\bkgg\to\alg$ be the covariant functor such that
\begin{itemize}
	 \item for every $(M;h)\in\obj(\bkgg)$, $\mathcal{A}[M;h]\subset\Gamma_{\mathrm{eq}}(\mathsf{E}[M;h])$ is the $\ast$-algebra which is generated by $\{\Phi^k(M;h)(f)|\;k\in\mathbb{N}\,,\,f\in\Gamma_{\mathrm{c}}^1[M;h]\}$ where we set
	\begin{align}\label{Eqn: definition of the produt on the local and covariant algebra of local polynomials}
		\Phi^{k_1}[M;h](f_{1})\cdots\Phi^{k_\ell}[M;h](f_{\ell})\vcentcolon=
		\Phi^{(k_1,\ldots,k_\ell)}[M;h](f_{1}\otimes\dots\otimes f_{\ell})\,.
	\end{align}
	(The $\ast$-operation is induced by complex conjugation as in Proposition \ref{Prop: regular algebra}.)
	\item for any arrow $\chi:M\to\widetilde{M}$ and for any $G\in\mathcal{A}[M;h]$, $P\in\operatorname{Par}[\widetilde{M};\widetilde{h}]$ and $\varphi\in \mathcal{E}(\widetilde{M})$, $(\mathcal{A}(\chi)G)(P,\varphi)=G(\chi^*P,\chi^*\varphi)$.
	\item 
	for any $\lambda>0$ the scaling $\varsigma_\lambda\colon\mathcal{A}[M;h]\to\mathcal{A}[M;h_\lambda]$ is defined as in Remark \ref{Remark: scaling map} -- there is no issue in extending its action on $\mathcal{A}[M;h]$.
\end{itemize}
	Then $\mathcal{A}\colon\bkgg\to\alg$ is an Euclidean locally covariant theory as per Definition \ref{Def: Euclidean locally covariant theory}.
	Moreover, for all $k\in\mathbb{N}$, $\Phi^k\colon\Gamma_{\mathrm{c}}^\bullet\to\mathcal{A}$ is a locally covariant observable as per Definition \ref{Def: locally covariant observable}.
\end{proposition}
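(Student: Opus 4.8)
The plan is to verify, in order, that $\mathcal{A}[M;h]$ is a unital commutative associative $\ast$-algebra, that $\mathcal{A}$ is a covariant functor into $\alg$, that the maps $\varsigma_\lambda$ promote it to a Euclidean locally covariant theory as per Definition \ref{Def: Euclidean locally covariant theory}, and finally that each $\Phi^k$ is a locally covariant observable. I would begin with the algebra structure, which is the heart of the statement. The E-product is prescribed on generators by \eqref{Eqn: definition of the produt on the local and covariant algebra of local polynomials} and extended bilinearly, the unit being $\Phi^0=\operatorname{Id}_{\Gamma_{\mathrm{eq}}}$ by axiom (2) of Definition \ref{Def: E-product}. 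Closure and associativity are essentially formal: a product of generators $\Phi^{k_1}(f_1),\ldots,\Phi^{k_\ell}(f_\ell)$ is governed by the single Wick monomial attached to the concatenated sequence $(k_1,\ldots,k_\ell)$, and since concatenation of finite sequences is associative, any two bracketings are sent to the same object $\Phi^{(k_1,\ldots,k_\ell)}(f_1\otimes\cdots\otimes f_\ell)$. Hence it suffices to show that the assignment is independent of the ordering of the factors, which is precisely commutativity.

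Commutativity is the substantive point and I expect it to be the main obstacle. It is equivalent to the symmetry of $\Phi^{\underline{k}}$ under a simultaneous permutation of the entries of $\underline{k}$ and of the corresponding test functions, i.e.\ $\Phi^{(k_1,k_2)}(f_1\otimes f_2)=\Phi^{(k_2,k_1)}(f_2\otimes f_1)$ and its higher analogues. When the supports $\operatorname{supp}(f_i)$ are mutually disjoint, the support-factorization axiom (3) of Definition \ref{Def: E-product} reduces the monomial to an ordinary product in the commutative algebra $\mathcal{A}_{\mathrm{reg}}[M;h]$, so the required symmetry holds there for free. To extend it to coinciding supports I would run the inductive scheme of the proof of Theorem \ref{Theorem: Unicity of Wick powers I}: the difference $D$ of two orderings is again a natural transformation obeying the monomial axioms, its first functional derivative is computed by the Leibniz-type rule \eqref{Eqn: derivative of T-product} as a linear combination of strictly lower-order monomials, which are symmetric by the inductive hypothesis, whence $\langle D^{(1)}[\varphi],\cdot\rangle=0$ and $D$ is field-independent. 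The microlocal axiom (5) together with locality then force $D$ to be supported on the total diagonal with a smooth coefficient; this is where the symmetric structure of the functor $\Gamma_{\mathrm{c}}^\bullet$ (whose components are the symmetric tensors $\mathrm{S}C^\infty_{\mathrm{c}}(M^\ell)$) and the factorization axiom must be combined to conclude that the product inherited by $\mathcal{A}[M;h]$ is commutative.

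Next I would treat the remaining algebraic and functorial structure, which is routine once commutativity is secured. The $\ast$-involution, defined fiberwise by complex conjugation as in Proposition \ref{Prop: regular algebra}, satisfies $\big(\Phi^{k_1}(f_1)\cdots\Phi^{k_\ell}(f_\ell)\big)^*=\Phi^{k_1}(\overline{f_1})\cdots\Phi^{k_\ell}(\overline{f_\ell})$ by the hermiticity axiom \eqref{Eqn: hermiticity of Wick powers} combined with commutativity, and one checks directly that it is an involution compatible with the E-product. Functoriality follows exactly as in Theorem \ref{Theorem: locally covariant algebra of interest}: the arrow $\mathcal{A}(\chi)$ acts by the pullback $(P,\varphi)\mapsto(\chi^*P,\chi^*\varphi)$, which sends equivariant sections to equivariant sections and generators to generators because every $\Phi^{\underline{k}}$ is a natural transformation; it is unit-preserving and injective (the latter from injectivity of $\chi_*$ on test functions), and the contravariance $(\chi'\circ\chi)^*=\chi^*\circ(\chi')^*$ yields $\mathcal{A}(\chi'\circ\chi)=\mathcal{A}(\chi')\circ\mathcal{A}(\chi)$ together with $\mathcal{A}(\operatorname{id})=\operatorname{id}$.

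Finally, for Euclidean local covariance I would extend the scaling map of Remark \ref{Remark: scaling map} from $\mathcal{A}_{\mathrm{reg}}$ to $\mathcal{A}$ by the same formula $(\varsigma_\lambda F)(P,\varphi)=F(\lambda^{-2}P,\lambda^{(\mathrm{D}-2)/2}\varphi)$. The almost-homogeneous scaling axiom (1) of Definition \ref{Def: E-product} guarantees that $\varsigma_\lambda$ maps a generator of $\mathcal{A}[M;h_\lambda]$ into $\mathcal{A}[M;h]$ — the logarithmic corrections being lower Wick monomials in the same algebra — and that the $\varsigma_{\lambda,\mu}$ are $\ast$-isomorphisms obeying the cocycle identities of Definition \ref{Def: Euclidean locally covariant theory}, so $(\mathcal{A},\{\varsigma_{\lambda,\mu}\})$ is a Euclidean locally covariant theory. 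For the last assertion, each $\Phi^k$ is extended to $\Gamma^\bullet_{\mathrm{c}}\to\mathcal{A}$ by declaring it an algebra homomorphism on the fibers, namely $\Phi^k(f_1\otimes\cdots\otimes f_\ell):=\Phi^k(f_1)\cdots\Phi^k(f_\ell)=\Phi^{(k,\ldots,k)}(f_1\otimes\cdots\otimes f_\ell)$, in the spirit of Example \ref{Example: quantum fields}; this is a unit-preserving $\ast$-homomorphism by construction, and its naturality is inherited from that of the monomials $\Phi^{(k,\ldots,k)}$ and of the E-product, so $\Phi^k$ is a locally covariant observable in the sense of Definition \ref{Def: locally covariant observable}.
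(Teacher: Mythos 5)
Your skeleton (generation by Wick powers, functoriality and scaling as in Theorem \ref{Theorem: locally covariant algebra of interest}, extension of $\Phi^k$ to $\Gamma_{\mathrm{c}}^\bullet$ as an algebra homomorphism) matches the paper, which indeed handles all of that by referring back to Theorem \ref{Theorem: locally covariant algebra of interest}. The problem lies in your commutativity argument, which you correctly single out as the substantive point. Your induction establishes exactly this much: the difference $D$ of two orderings has vanishing first functional derivative (by \eqref{Eqn: derivative of T-product} and the inductive hypothesis), hence is a $\mathbb{C}$-number equivariant section; it vanishes for disjoint supports, hence is supported on $\mathrm{Diag}(M^2)$; and by the microlocal axiom and the Peetre--Slov\'ak machinery its coefficient is a smooth, covariant function of appropriate scaling dimension. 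But the final step --- concluding that this coefficient vanishes --- is missing, and it cannot be supplied from the axioms of Definition \ref{Def: E-product} alone: a field-independent, diagonal-supported, covariant term of the correct dimension is precisely of the type classified as a renormalization ambiguity in Theorem \ref{Theorem: uniqueness of E-product I}, and such terms are generically nonzero. In fact the derivative axiom admits nonzero antisymmetric solutions: setting $\Delta_{(k_1,k_2)}:=\tfrac{k_1-k_2}{k_1+k_2}\sum_j\binom{k_1+k_2}{j}\Phi^j(c_{k_1+k_2-j}\lrcorner\,\cdot\,)$ with a fixed admissible family of covariant tensors $c_m$, the identity $k_1\tfrac{(k_1-1)-k_2}{k_1+k_2-1}+k_2\tfrac{k_1-(k_2-1)}{k_1+k_2-1}=k_1-k_2$ shows that $\{\Delta_{(k_1,k_2)}\}$ is compatible with axiom (4), while it trivially respects support factorization, smoothness, hermiticity and (for suitable $\mathrm{D}$ and degrees, where nonzero covariant coefficients of the required dimension exist, e.g. constants or curvature scalars) the scaling axiom. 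Adding $\Delta$ to a symmetric family therefore produces a family satisfying all the stated axioms whose induced product would fail to be commutative; your inductive scheme cannot rule this out.

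The resolution in the paper is that commutativity is not a theorem derived from the axioms but is built into the construction. The monomials are defined on the symmetric test-function spaces $\Gamma^{\ell}_{\mathrm{c}}[M;h]=\mathrm{S}C^\infty_{\mathrm{c}}(M^\ell)$ and depend only on the unordered data $(k_i,f_i)$; in the existence proof this is enforced by the symmetry requirement $\mathsf{E}3$ on the distributions $t_{\underline{j}}$ (``after symmetrization''). Consequently the right-hand side of \eqref{Eqn: definition of the produt on the local and covariant algebra of local polynomials} is invariant under simultaneous permutation of the factors by construction, and the product on $\mathcal{A}[M;h]$ inherits commutativity and associativity directly from the symmetrized tensor product on $\Gamma_{\mathrm{c}}^\bullet[M;h]$ --- which is the one-line content of the paper's proof. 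To repair your argument you should either invoke this built-in symmetry explicitly, or add it as a standing hypothesis on the family $\{\Phi^{\underline{\mathrm{k}}}\}_{\underline{\mathrm{k}}}$; it is not recoverable from the uniqueness-type induction.
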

\begin{proof}	
	The proof follows slavishly that of Theorem \ref{Theorem: locally covariant algebra of interest}, taking into account Definitions \ref{Def: Wick polynomials} and \ref{Def: E-product} which guarantee in particular that Equation \eqref{Eqn: definition of the produt on the local and covariant algebra of local polynomials} is well-posed.
	Notice that the product $\cdot$ defined on $\mathcal{A}[M;h]$ as per equation \eqref{Eqn: definition of the produt on the local and covariant algebra of local polynomials} is commutative and associative since it inherits these properties from those of the symmetrized tensor product between elements in $\Gamma_{\mathrm{c}}^\bullet[M;h]$.
\end{proof}
\subsection{Existence of the E-Product of Wick Ordered Powers of Quantum Fields}\label{Section: Existence E-product of Wick of Wick powers}
Much in the same spirit of the analysis in Section \ref{Section: Existence of Wick powers}, our next goal consists of proving the existence of a prescription for defining an E-product of Wick polynomials satisfying the axioms in Definition \ref{Def: E-product}. To this end, we will follow the same strategy of \cite{Hollands-Wald-02}, to which we also refer for the proofs of some results. Since the construction is rather complicated, we divide it in different steps, to each of which we dedicate a subsection.

\subsubsection{First Step: The inductive hypothesis}

The construction of an E-product proceeds inductively with respect to $\ell={\ell(\underline{\mathrm{k}})}\in\mathbb{N}$. More precisely, for all $\ell\in\mathbb{N}$, $\Phi^{\underline{\mathrm{k}}}$ is constructed for all possible $k_1,\ldots,k_\ell$.
The starting point consists of the observation that, on account of axiom $(2)$ of Definition \ref{Def: E-product}, if $\ell=1$ the Wick monomials coincide with the Wick powers, whose existence has been discussed and proven in Section \ref{Section: Existence of Wick powers}. As a consequence we can make the inductive hypothesis, assuming the existence of a well-defined E-product of Wick powers with $\ell\leqslant n$. To conclude we need to prove the existence of a consistent prescription for $\ell=n+1$.

	The key observation originates from the support factorization axiom $(5)$ in Definition \ref{Def: E-product}, which entails that the E-product of $n+1$ Wick powers is completely determined on $M^{n+1}\setminus\mathrm{Diag}(M^{n+1})$ by its prescription on $n$ factors. This was first observed in \cite{Keller-09,Keller:2010xq} and it is the Riemannian counterpart of the same procedure followed in \emph{causal perturbation theory} on a globally hyperbolic spacetime. 


Concretely let $\underline{\mathrm{k}}=(k_1,\ldots,k_{n+1})$, $I\subsetneq\lbrace1,\dots,n+1\rbrace$ and let $I^\mathrm{c}$ be its complement. We define 
\begin{align*}
C_I\vcentcolon=\lbrace(x_1,\dots, x_{n+1})\in M^{n+1}\,|\,x_i\neq x_j\,\forall\,i\in I\,,j\in I^\mathrm{c}\rbrace\subset M^{n+1},
\end{align*}
observing that, letting $I$ vary, $\lbrace C_I\rbrace_I$ identifies an open cover of $M^{n+1}\setminus\mathrm{Diag}(M^{n+1})$. Let $\lbrace f_I\rbrace$ be a partition of unity subordinated to the open cover $\lbrace C_I\rbrace_I$ and, working at the level of integral kernels on $M^{n+1}\setminus\mathrm{Diag}(M^{n+1})$, we set
\begin{multline}\label{Eqn: outside diagonal T-product}
	\Phi_0^{\text{\ul{k}}}[M;h](P,\varphi)(x_1,\dots, x_{n+1})\vcentcolon
	=\sum_{I\subsetneq\lbrace1,\dots,n+1\rbrace}f_I(x_1,\dots,x_{n+1})
	\Phi^{\text{\ul{k}}_I}[M;h](P,\varphi)(x_1,\dots, x_{|I|})\\
	\cdot\Phi^{\text{\ul{k}}_{I^\mathrm{c}}}[M;h](P,\varphi)(x_1,\dots, x_{|I^\mathrm{c}|})\,.
\end{multline}

\vskip.2cm

\noindent Notice that, on account of the inductive hypothesis and of Definition \ref{Def: E-product},
\begin{enumerate}
	\item $\Phi_0^{\text{\ul{k}}}[M;h](P,\varphi)(x_1\dots, x_{n+1})$ is well-defined since it is a linear combination of E-products between factors of order less or equal to $n$;
	\item $\Phi_0^{\text{\ul{k}}}[M;h](P,\varphi)(x_1\dots, x_{n+1})$ is independent from the chosen partition of unity and any prescription for the E-product of $n+1$ factors must be of the form \eqref{Eqn: outside diagonal T-product} on $M^{n+1}\setminus\mathrm{Diag}(M^{n+1})$.
\end{enumerate}

\subsubsection{Second Step: Local Wick Expansion}
In order to extend $\Phi^{\text{\ul{k}}}_{0}[M;h](P,\varphi)(x_1\dots, x_{n+1})$ to $\mathrm{Diag}(M^{n+1})$ we introduce the \emph{local Wick expansion}. More precisely, consider an open cover of $M$ in terms of convex geodesic neighbourhoods and, for any open set $O$ in such cover and for all $n\in\mathbb{N}$, let $O^{n+1}=\underbrace{O\times\dots\times O}_{n+1}$. Recalling Equation \eqref{Eqn: Hadamard representation}, each parametrix associated to the elliptic operator $E$ in \eqref{Eqn: local_form_E} can be decomposed in $O$ as $P(x,y)=H(x,y)+W_P(x,y)$, where $W_P\in \mathcal{E}(O\times O)$.
Hence, for any $k_1,\ldots,k_{n+1}\in\mathbb{N}$ and for every $(M;h)\in\obj(\bkgg)$, consider the functional
\begin{align*}
	\phi^{\text{\ul{k}}}[M;h](\omega_{n+1},\varphi)\vcentcolon
	=\int_{M^{n+1}}\,\mu_g^{n+1}\langle\varphi^{\text{\ul{k}}},\omega_{n+1}\rangle,
\end{align*}
where $\omega_{n+1}\in\Gamma_{\mathrm{c}}^{n+1}[M;h]$, $\varphi\in\mathcal{E}(M)$, while $\varphi^{\text{\ul{k}}}(x_1,...,x_{n+1})=\prod\limits_{i=1}^{n+1}\varphi^{k_i}(x_i)$ and $\mu_g^{n+1}(x_1,...,x_{n+1})=\prod\limits_{i=1}^{n+1}\mu_g(x_i)$.
Starting from these data and working at the level of integral kernels, for every $P\in\operatorname{Par}[M;h]$, $\varphi\in\mathcal{E}(M)$ we set
\begin{align*}
	\wick{\Phi^{\text{\ul{k}}}}_H[M;h](\omega_{n+1},P,\varphi)\;\vcentcolon=
	\exp[\Upsilon_{W_P}]\big(\phi^{\text{\ul{k}}}[M;h](\omega_{n+1},\varphi)\big)\,,
\end{align*}
for all $\omega_{n+1}\in\Gamma_{\mathrm{c}}^{n+1}[M;h]$ with $\operatorname{supp}(\omega_{n+1})\subseteq O^{n+1}$.
Here $\exp[\Upsilon_{W_P}]$ has been defined in \eqref{Eqn: alpha map} while $W_P$ has been introduced in Remark \ref{Remark: Hadamard representation}.
Notice that, $W_P(x,y)$ is well-defined for $x,y\in O$ after introducing a cut-off in the definition of $H$ -- \textit{cf.} Remark \ref{Remark: Hadamard representation}.
As we are considering a local expansion near the total diagonal, this does not affect the local and covariant behaviour of $\Phi^{\underline{\textrm{k}}}$.
Observe that $\exp[\Upsilon_{W_P}]\big(\phi^{\text{\ul{k}}}[M;h](\omega_{n+1},\varphi)\big)$ is well-defined as a consequence of the support properties of $\omega_{n+1}$.
In what follows we shall denote with $\wick{\Phi^{\text{\ul{k}}}}_H[M;h](P,\varphi)(x_1\dots, x_{n+1})$ the integral kernel associated to $\omega_{n+1}\to \wick{\Phi^{\text{\ul{k}}}}_H[M;h](\omega_{n+1},P,\varphi)$.
The following proposition can be proven \emph{mutatis mutandis} as in \cite[Sect. 3.2]{Hollands-Wald-02}.

\begin{proposition}
Any prescription for the E-product satisfying axioms $(3)$ and $(4)$ of Definition \ref{Def: E-product} admits a \emph{local Wick expansion} of the form, 
\begin{align}\label{Eqn: local Wick expansion}
	\Phi^{\text{\ul{k}}}[M;h](P,\varphi)(x_1\dots, x_{n+1})=
	\sum_{\underline{j}\leq\text{\ul{k}}}
	\binom{\text{\ul{k}}}{\underline{j}}\widetilde{t}_{\underline{j}}[M;h](x_1,\dots,x_{n+1})
	\wick{\Phi^{\underline{k}-\underline{j}}}_H[M;h](P,\varphi)(x_1\dots, x_{n+1})\,,
\end{align}
where $\underline{j}\leq\text{\ul{k}}$ if $0\leq j_i\leq k_i$ for all $i\in\{1,\ldots,n+1\}$ while $\binom{\text{\ul{k}}}{\underline{j}}\vcentcolon=\prod_{i=1}^{n+1}\binom{k_i}{j_i}$.
Here\footnote{Notice that, whenever $j_h=0$ for some $h$, the corresponding integral kernel $t_{\underline{j}}$ does not depend explicitly on $x_h$.
Whenever $j_h=k_h$ for some $h$, Remark \ref{Rmk: particular cases for Wick monomials} applies.
}
$\widetilde{t}_{\underline{j}}[M;h]\in\mathcal{D}^\prime(M^{n+1})$ is such that $\mathrm{WF}(\widetilde{t}_{\underline{j}})\subset C_{(n+1)}(M;h)$, where $C_{(n+1)}(M;h)$ is defined in Equation \eqref{Eqn: set for microlocal spectrum condition of E-product}.
Moreover, each $\widetilde{t}_{\underline{j}}$ is local and covariant, in particular none depends on the parametrix $P\in\operatorname{Par}[M;h]$ appearing in equation \eqref{Eqn: local Wick expansion}.
\end{proposition}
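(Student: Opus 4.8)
The plan is to manufacture the coefficients $\widetilde{t}_{\underline{j}}$ by expanding the given prescription $\Phi^{\underline{\mathrm{k}}}$ against the local reference prescription $\wick{\Phi^{\underline{m}}}_H=\exp[\Upsilon_{W_P}]\phi^{\underline{m}}$, and then to verify in turn that these coefficients are $c$-numbers, $P$-independent, local and covariant, and microlocally controlled. I would work at the level of integral kernels over $O^{n+1}$, with $O$ a convex geodesic neighbourhood on which the Hadamard splitting $P=H+W_P$ of Remark \ref{Remark: Hadamard representation} is available. The starting observation is that, by iterating the functional-derivative axiom $(4)$, \eqref{Eqn: derivative of T-product}, both $\Phi^{\underline{\mathrm{k}}}[M;h](P,\varphi)(x_1,\dots,x_{n+1})$ and $\wick{\Phi^{\underline{m}}}_H[M;h](P,\varphi)(x_1,\dots,x_{n+1})$ are polynomials in $\varphi$ of multidegree $\underline{\mathrm{k}}$, resp. $\underline{m}$, whose top-degree component is $\prod_i\varphi^{k_i}(x_i)$ with unit coefficient. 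Hence $\{\wick{\Phi^{\underline{m}}}_H\}_{\underline{m}\leq\underline{\mathrm{k}}}$ is a triangular, thus invertible, system in the space of such polynomials, and $\Phi^{\underline{\mathrm{k}}}$ admits a unique expansion \eqref{Eqn: local Wick expansion} with $\varphi$-independent kernels. Differentiating the expansion once in $\varphi$ and using that $\Upsilon_{W_P}$ commutes with $\delta/\delta\varphi$ — so that $\wick{\cdot}_H$ obeys the very same relation \eqref{Eqn: derivative of T-product} as $\Phi^{\underline{\mathrm{k}}}$ — fixes the binomial normalisation $\binom{\underline{\mathrm{k}}}{\underline{j}}$ and shows that the $\widetilde{t}_{\underline{j}}$ are common to all $\underline{\mathrm{k}}$, exactly as the coefficients $c_j$ in Theorem \ref{Theorem: Unicity of Wick powers I}.

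Next I would establish $P$-independence. Writing $P=H+W_P$ and $Q=H+W_Q$ with $W_Q-W_P=Q-P\in\mathcal{E}(M\times M)$, and recalling $\alpha_P^Q=\exp[\Upsilon_{P-Q}]$ from \eqref{Eqn: star-isomorphism between different algebras}, the additivity $\exp[\Upsilon_A]\exp[\Upsilon_B]=\exp[\Upsilon_{A+B}]$ gives at once $\alpha_P^Q\wick{\Phi^{\underline{m}}}_H(Q)=\wick{\Phi^{\underline{m}}}_H(P)$; that is, the reference prescription is itself equivariant. Since multiplication by the $\varphi$-independent kernel $\widetilde{t}_{\underline{j}}$ commutes with $\alpha_P^Q$, which acts only on the $\varphi$-dependence, applying $\alpha_P^Q$ to the expansion at $Q$ and using the equivariance of $\Phi^{\underline{\mathrm{k}}}$ reproduces the expansion at $P$ but with coefficients $\widetilde{t}_{\underline{j}}(Q)$; uniqueness of the triangular expansion then forces $\widetilde{t}_{\underline{j}}(P)=\widetilde{t}_{\underline{j}}(Q)$. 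The coincidence-limit argument of Remark \ref{Remark: Hadamard representation} finally patches the locally defined kernels into a single $\widetilde{t}_{\underline{j}}[M;h]\in\mathcal{D}^\prime(M^{n+1})$, independent of the cover and of the cutoffs hidden in $H$.

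Locality and covariance I would read off from naturality: $\Phi^{\underline{\mathrm{k}}}$ is a natural transformation and $\wick{\Phi^{\underline{m}}}_H$ is manufactured solely out of $H$, which depends covariantly on the germ of $(M;h)$. Pulling the expansion back along an arrow $\chi\in\arr(\bkgg)$ and invoking uniqueness once more yields $\chi^{*}\widetilde{t}_{\underline{j}}[\widetilde{M};\widetilde{h}]=\widetilde{t}_{\underline{j}}[M;\chi^{*}\widetilde{h}]$, which is precisely the asserted local covariance.

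The hard part will be the wave-front estimate $\mathrm{WF}(\widetilde{t}_{\underline{j}})\subset C_{(n+1)}(M;h)$ of \eqref{Eqn: set for microlocal spectrum condition of E-product}, and this is the step most sensitive to the Euclidean signature. My plan is to recover $\widetilde{t}_{\underline{j}}$ explicitly from $\varphi=0$ data: taking the order-$(\underline{k}-\underline{j})$ functional derivative of \eqref{Eqn: local Wick expansion} and setting $\varphi=0$, axiom $(4)$ reduces the left-hand side to the monomials $\Phi^{\underline{m}}[M;h](P,0)$ with $\underline{m}\leq\underline{\mathrm{k}}$, while the $\wick{\Phi^{\underline{m}}}_H(P,0)$ appearing on the right are smooth; the triangular inversion then writes each $\widetilde{t}_{\underline{j}}$ as a finite combination of these objects. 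Since the $\Phi^{\underline{m}}[M;h](P,0)$ are exactly the $\varphi=0$ slices of \eqref{Eqn: distribution for msc of T-product}, their wave-front sets lie in $C_{(n+1)}(M;h)$ by the microlocal spectrum condition of the prescription, and the bound is inherited by $\widetilde{t}_{\underline{j}}$. As an independent check off the total diagonal, on each patch $C_I$ the support-factorisation axiom $(3)$, \eqref{Eqn: causal factorization}, writes $\Phi^{\underline{\mathrm{k}}}$ as a $\cdot$-product of lower monomials — whose coefficients already obey the bound by the inductive hypothesis — and the product $\cdot$ only contracts distinct points through the parametrix, whose wave-front set \eqref{Eqn: Parametrix wave front set} is conormal. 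Thus axioms $(3)$ and $(4)$ fix the algebraic form of the expansion, whereas the genuine obstruction is the control over the total diagonal, where one leans on the microlocal spectrum condition together with the local covariance just established (a covariantly built kernel being translation-invariant along the total-diagonal direction in normal coordinates, which forces its covectors there to sum to zero and hence to lie in $C_{(n+1)}(M;h)$ via $I=\{1,\dots,n+1\}$). This is the Riemannian transcription of the microlocal analysis of \cite[Sect.~3.2]{Hollands-Wald-02} and is where the argument must be carried out with care.
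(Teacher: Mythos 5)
Your proposal is correct and takes essentially the same route as the paper, whose own ``proof'' is simply a deferral to the Wick-expansion argument of \cite[Sect.\ 3.2]{Hollands-Wald-02}: Taylor expansion in $\varphi$ via the functional-derivative axiom $(4)$, inversion of the triangular normal-ordered system $\{\wick{\Phi^{\underline{m}}}_H\}_{\underline{m}\leq\underline{\mathrm{k}}}$, and then $P$-independence, locality/covariance and the wave-front bound inherited respectively from equivariance (using $W_P-W_Q=P-Q$ and $\exp[\Upsilon_A]\exp[\Upsilon_B]=\exp[\Upsilon_{A+B}]$), naturality plus uniqueness, and the microlocal spectrum condition applied to the $\varphi=0$ slices $\Phi^{\underline{m}}[M;h](P,0)$. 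Two minor remarks: the wave-front estimate genuinely requires axiom $(5)$ (which you correctly invoke via constant families of variations), not merely the axioms $(3)$--$(4)$ quoted in the statement of the proposition, and your closing parenthetical about translation invariance in normal coordinates is both non-rigorous and dispensable, since your triangular-inversion argument already controls $\widetilde{t}_{\underline{j}}$ on the total diagonal.
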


\vskip .2cm

Equation
\eqref{Eqn: local Wick expansion} satisfies axioms $(3)$ and $(4)$ on $M^{n+1}\setminus\mathrm{Diag}(M^{n+1})$ on account of the inductive hypothesis. Therefore, we conclude that, on $M^{n+1}\setminus\mathrm{Diag}(M^{n+1})$, there exists a collection of distributions $t_{\underline{j}}[M;h]\in\mathcal{D}^\prime(M^{n+1}\setminus\mathrm{Diag}(M^{n+1}))$ such that
\begin{align*}
	\Phi_0^{\text{\ul{k}}}[M;h](P,\varphi)(x_1\dots, x_{n+1})=&\\
	\sum_{\underline{j}\leq\text{\ul{k}}}\binom{\text{\ul{k}}}{\underline{j}}t_{\underline{j}}[M;h](x_1,&\dots,x_{n+1})
	\wick{\Phi^{\text{\ul{k}}-\underline{j}}}_H[M;h](P,\varphi)(x_1\dots, x_{n+1})\,.
\end{align*}

As a consequence of this formula, the problem of extending $\Phi^{\text{\ul{k}}}_0[M;h](P,\varphi)$ to the diagonal is reduced to that of extending $t_{\underline{j}}[M;h]$ to $\mathrm{Diag}(M^{n+1})$. To overcome this hurdle we reformulate in terms of the distributions $t_{\underline{j}}$ those axioms of Definition \ref{Def: E-product} which have not been already implemented in the construction above. This yields
\begin{description}
	\item[Axiom $\mathsf{E}$1)]
	Each $t_{\underline{j}}[M;h]$ is local and covariant, namely, given $(N;h_N),(M;h_M)\in\obj(\bkgg)$ and $\chi\in\arr(\bkgg)$ such that $\chi:N\to M$, then 
	\begin{align}\label{Eqn: local covariance of t}
	\chi^*t_{\underline{j}}[M;h_M]=t_{\underline{j}}[N;h_N].
	\end{align}
	Furthermore $t_{\underline{j}}[M;h]$ ought to scale almost homogeneously with dimension $\kappa_{\underline{j}}\vcentcolon=\sum_{i=1}^{n+1}j_i\big(\frac{\mathrm{D}-2}{2}\big)$ and order $m_{\underline{j}}\vcentcolon=\sum_{\ell=1}^{n+1}j_i$, namely
	\begin{align*}
		\lambda^{-\kappa_{\underline{j}}}t_{\underline{j}}[M;h_\lambda]=
		t_{\underline{j}}[M;h]+
		\sum\limits_{\ell\leq m_{\underline{j}}}\frac{\log^{\ell}(\lambda)}{\ell!}v_{\ell}[M;h]\,,
	\end{align*}
	where $v_{\ell}[M;h]$ are local and covariant distributions which scale almost homogeneously with degree $\kappa_{\underline{j}}$ and order $m_{\underline{j}}-\ell$, \textit{cf.} Definition \ref{Def: rescaled locally covariant observable}.

	\item[Axiom $\mathsf{E}$2)]
	for any  multi-index $\underline{j}=(j_1,\dots,j_{n+1})$, let us consider the distribution $\mathcal{T}_{\underline{j}}\in\mathcal{D}'(\mathbb{R}^{d}\times O^{n+1})$ defined by
	\begin{align}\label{Eqn: ancillary distribution T}
		\mathcal{T}_{\underline{j}}(\chi\otimes f)\vcentcolon=
		\int_{\R^n}\mathrm{d}s\;t_{\underline{j}}[M;h_s](f)\chi(s)\quad\chi\in \mathcal{D}(\R^d)\,,\forall f\in\Gamma_{\mathrm{c}}^\ell[O;h_{O}]\,,
	\end{align}
	where $\lbrace h_s\rbrace_{s\in\R^d}$ is a family of smooth and compactly supported variations of $h$ as per Definition \ref{Def: smooth compactly supported d-dimensional family of variations}.
	It must hold
	\begin{align}\label{Eqn: microlocal spectrum condition for t}
		\mathrm{WF}(\mathcal{T}_{\underline{j}})|_{\mathbb{R}^d\times\mathrm{Diag}(M^{n+1})}\perp T(\mathbb{R}^d\times\mathrm{Diag}(M^{n+1}))\,,
	\end{align}
	where $T(\mathbb{R}^d\times\mathrm{Diag}(M^{n+1}))$ denotes the tangent bundle to $\mathbb{R}^d\times \operatorname{Diag}(M^{n+1})$, while the symbol $A \perp B$ means that $\langle a,b\rangle=0$ for all $a\in A$ and $b\in B$, $\langle\;,\;\rangle$ being the standard fiberwise pairing.

	\item[Axiom $\mathsf{E}$3)]
	Each $t_{\underline{j}}[h_s]$ must be symmetric and real valued.
\end{description}

\subsubsection{Scaling Expansion}

The next step consists of investigating the scaling behaviour of the distributions $t_{\underline{j}}[M;h]$ as a preliminary step towards analysing their extension to the diagonal. This part of our analysis follows slavishly that of \cite{Hollands-Wald-02} and the strategy calls for working at the level of integral kernels on $M^{n+1}$ keeping one of the variables fixed, while letting the others vary. Hence, for clarity of the notation, we set
\begin{align*}
x=x_1,\qquad y=(x_2,\dots, x_{n+1}).
\end{align*}

\begin{proposition}
Let $x\in M$ be any fixed point and let $O$ be a geodesically convex normal neighbourhood centred at $x$. Each $t_{\underline{j}}[M;h_s]\in\mathcal{D}^\prime(M^{n+1})$ admits a restriction to $C_x\vcentcolon=\lbrace x\rbrace\times\big(O^n\setminus\underbrace{(x,\dots, x)}_n\big)\subset M^{n+1}$.
\end{proposition}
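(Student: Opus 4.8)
The plan is to realise the sought restriction as the pullback of $t_{\underline j}[M;h_s]$ along the inclusion of $C_x$ and to control it microlocally, working at fixed $s$ so that $t_{\underline j}[M;h_s]$ is a genuine distribution. First I would observe that, since the point $(x,\dots,x)$ is removed, $C_x$ lies entirely in the open set $U\vcentcolon=M^{n+1}\setminus\mathrm{Diag}(M^{n+1})$, where $t_{\underline j}[M;h_s]$ is already well-defined by the inductive construction leading to \eqref{Eqn: local Wick expansion}. Moreover $C_x$ is the embedded submanifold of $U$ cut out by the single equation $x_1=x$, so that its conormal bundle is
\begin{align*}
N^*C_x=\big\{(x,\xi_1;x_2,0;\dots;x_{n+1},0)\;|\;\xi_1\in T^*_xM\setminus\{0\},\ (x_2,\dots,x_{n+1})\in O^n\setminus\{(x,\dots,x)\}\big\}\,.
\end{align*}
By Hörmander's criterion for the restriction of a distribution to a submanifold -- \textit{cf.} \cite[Thm. 8.2.4]{Hormander-83} -- the pullback $t_{\underline j}[M;h_s]|_{C_x}$ exists provided $\mathrm{WF}(t_{\underline j}[M;h_s])\cap N^*C_x=\emptyset$. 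Hence the whole proof reduces to excluding covectors conormal to $C_x$ from the wave front set.

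The core of the argument is a microlocal estimate on $U$ that is strictly sharper than the bound $\mathrm{WF}(t_{\underline j})\subseteq C_{(n+1)}(M;h)$. On $U$ the distribution $t_{\underline j}$ is, by the local Wick expansion and the inductive hypothesis, a finite sum of products of the parametrix $P$ evaluated at pairs of points (equivalently of the Hadamard parametrix $H$, the difference $W_P$ being smooth), multiplied by smooth coefficients. Since, by \eqref{Eqn: Parametrix wave front set}, each factor $P(x_i,x_j)$ is singular only on the partial diagonal $x_i=x_j$ with balanced covectors $(k,-k)$, I would invoke Hörmander's theorem on the wave front set of products -- \textit{cf.} \cite[Thm. 8.2.10]{Hormander-83} -- to obtain the following \emph{cluster-balance property}: every $(x_1,p_1;\dots;x_{n+1},p_{n+1})\in\mathrm{WF}(t_{\underline j})$ is of graph type, in the sense that the $p_i$ are sums of edge covectors $k_e$ attached to pairs of coinciding points, each edge contributing opposite covectors to its two endpoints. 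Consequently, for every maximal cluster $C$ of mutually coinciding points -- a connected component of the graph, whose points necessarily agree -- one has $\sum_{i\in C}p_i=0$, the edge contributions cancelling pairwise and no edge leaving $C$.

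With this in hand the conclusion is immediate. Let $(x,\xi_1;x_2,0;\dots;x_{n+1},0)\in N^*C_x$ and suppose it belongs to $\mathrm{WF}(t_{\underline j})$. Let $C$ be the cluster containing the index $1$: all its points coincide with $x_1=x$ and, by the cluster-balance property, $\sum_{i\in C}p_i=0$ (this covers also the case in which $1$ is isolated, where already $p_1=0$). Since $p_i=0$ for every $i\geq 2$, this forces $\xi_1=p_1=0$, contradicting $\xi_1\neq0$. Therefore $\mathrm{WF}(t_{\underline j})\cap N^*C_x=\emptyset$ and the restriction $t_{\underline j}[M;h_s]|_{C_x}$ is well-defined.

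I expect the main obstacle to be precisely the derivation of the cluster-balance property, rather than the final application of Hörmander's theorem. The bare microlocal spectrum condition $\mathrm{WF}(t_{\underline j})\subseteq C_{(n+1)}(M;h)$ is \emph{not} enough: it already admits a lone unbalanced covector $\xi_1\neq0$ at $x_1$ whenever two of the remaining points collide (take $I\subset\{2,\dots,n+1\}$ in \eqref{Eqn: set for microlocal spectrum condition of E-product}), so it would only yield the restriction over the open dense subset of $C_x$ on which $x_2,\dots,x_{n+1}$ are pairwise distinct. Extending the conclusion to the partial-diagonal points of $C_x$ is exactly what requires the finer, graph-type description of the wave front set, obtained by tracking the parametrix contractions through the inductive product construction.
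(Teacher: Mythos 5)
Your proof is correct and shares its skeleton with the paper's: identify $N^*C_x$ as the set of covectors of the form $(x,\xi_1;y,\underline{0})$, prove $\mathrm{WF}(t_{\underline{j}}[M;h_s])\cap N^*C_x=\emptyset$, and conclude via \cite[Thm. 8.2.4]{Hormander-83}. The genuine difference lies in how the wave front set statement is justified. The paper dispatches it in one line by invoking axiom $\mathsf{E}2$, whereas you derive a sharper, graph-type refinement of the microlocal bound (your cluster-balance property) from the inductive factorization construction. Your diagnosis that such a refinement is needed is accurate: the bound $\mathrm{WF}(t_{\underline{j}})\subseteq C_{(n+1)}(M;h)$ of Equation \eqref{Eqn: set for microlocal spectrum condition of E-product} does admit a covector $(x,\xi_1;y_2,0;\dots;y_{n+1},0)$ with $\xi_1\neq 0$ as soon as two of the points $y_2,\dots,y_{n+1}$ coincide, so by itself it only yields the restriction over configurations of $C_x$ with pairwise distinct entries. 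Note also that axiom $\mathsf{E}2$, as stated, constrains $\mathrm{WF}(\mathcal{T}_{\underline{j}})$ only over $\R^d\times\mathrm{Diag}(M^{n+1})$, while every point of $C_x$ lies off the total diagonal; the paper's one-line appeal to $\mathsf{E}2$ is therefore best read as shorthand for exactly the inductive microlocal bookkeeping you make explicit (in the Lorentzian ancestor of this lemma in \cite{Hollands-Wald-02} the graph structure is hard-coded in the microlocal spectrum condition, which is what makes the one-line argument legitimate there). In this sense your argument supplies content the paper leaves implicit.

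One amendment is needed to close your induction. Off $\mathrm{Diag}(M^{n+1})$ the coefficient $t_{\underline{j}}$ is not literally a sum of products of parametrices with smooth coefficients: in the term labelled by $I$ the cross factors $P(x_a,x_b)$, $a\in I$, $b\in I^{\mathrm{c}}$, are indeed smooth on the support of $f_I$ by \eqref{Eqn: Parametrix wave front set}, but each group carries as factors the already \emph{extended} lower-order coefficients $\widetilde{t}_{\underline{j}'}$, which at their own partial and total diagonals are genuine extensions (\emph{e.g.} renormalized powers of $H$), not products of parametrices. For a maximal cluster contained in $I$, the vanishing of the total covector is therefore not an output of H\"ormander's product theorem alone: it is the cluster-balance property of the lower-order extended coefficients, which off their diagonals follows recursively from the factorization and on their total diagonals is precisely the conormality condition that axiom $\mathsf{E}2$ imposes on the extensions built at the previous steps. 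So you should carry cluster balance as an inductive property of the extended $\widetilde{t}$'s, feeding in $\mathsf{E}2$ at each prior extension step; with this adjustment your argument is complete.
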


\begin{proof}
The conormal bundle $N^*C_x$ of $C_x$ is spanned by elements of the form $(x,k;y,\underline{0})$ where $\underline{0}\in T^*_yM^n$. On account of axiom $\mathsf{E}2$, $\operatorname{WF}(t_{\underline{j}}[M;h_s])\cap N^*C_x=\emptyset$ and, thus, on account of \cite[Theorem 8.2.4]{Hormander-83}, the sought statement descends.
\end{proof}

We introduce the notion of \emph{scaling expansion}, namely we consider a geodesically convex normal neighbourhood $O\subset M$ centred at $x$ and choosing any isometric isomorphism $e: T_xM\to\R^{\mathrm{D}}$, with $\mathrm{D}=\mathrm{dim}(M)$, we endow $O$ with the local chart $\alpha_x:O\to\mathbb{R}^\mathrm{D}$ such that, for every $y\in O$
\begin{equation}\label{Eqn: alpha map}
 \alpha_x(y)=e\circ(\exp_x)^{-1}(y),
\end{equation}
where, with a slight abuse of notation, we do not make explicit the dependence of $\alpha$ on $e$.
Any other choice $e^\prime:T_xM\to\R^{\mathrm{D}}$ is related to $e$ by the action of an element $\Lambda\in SO(\mathrm{D})$.

Restricting our attention to $O$, consider thereon a smooth one parameter family of metrics $\{g^{(s)}\}_{s\in\R}$ such that, calling $\chi_s:TO\to TO$ the map $\chi_s(y,\xi)\vcentcolon=(y,s\xi)$ for all $(y,\xi)\in TO$, then $g^{(s)}(\cdot,\cdot)\vcentcolon=s^{-2}g(\chi_s\cdot,\chi_s\cdot)$.
Associated with this structure, we define the one-parameter family of background fields restricted to $O$, $h^{(s)}\vcentcolon=(g^{(s)},A,c)$ in which $A\in\Gamma(T^*O)$ and $c\in C^\infty(O)$ are left fixed.
As a consequence of axiom $\mathsf{E}2$, a partial evaluation of $t[M;h^{(s)}]\in\mathcal{D}^\prime(M^{n+1})$ against a test-function $f\in\mathcal{D}(M^n)$ yields $t[M;h^{(s)}](\delta_x\otimes f)$, a smooth function of $(s,x)$. As a consequence derivatives along the $s$-direction are well-defined and, for any $k\in\N$, we introduce on $O^n\setminus(x,\dots, x)$ the distribution
\begin{align*}
\tau_k[M;h](x,\cdot)\vcentcolon=\frac{\mathrm{d}^k}{\mathrm{d}s^k}t[M;h^{(s)}](x,\cdot)|_{s=0}.
\end{align*}
On account of the smoothness of $t[M;h^{(s)}]$ in $(s,x)$ once tested along the remaining variables, we can apply Taylor expansion theorem writing for every integer $m\geq 0$ 
\begin{align}\label{Eqn: scaling expansion for t outside the diagonal}
t_{\underline{j}}[M;h](x,\cdot)=\sum_{k=0}^m\tau_k[M;h](x,\cdot)+r_m[M;h](x,\cdot),
\end{align}
where, with a slight abuse of notation, we omit the $\underline{j}$-dependence on the right hand side and where the remainder reads
\begin{align*}
r_m[M;h](x,\cdot)=\frac{1}{m!}\int_0^1\,ds\,(1-s)^m\,\frac{d^{m+1}t[M;h^{(s)}]}{ds^{m+1}}(x,\cdot).
\end{align*}

The procedure outlined and Equation \eqref{Eqn: scaling expansion for t outside the diagonal} are referred to as {\it scaling expansion}. This enjoys several notable properties which are summarized in the following theorem whose proof we omit since, mutatis mutandis, it is the same as the one of \cite[Theorem 4.1]{Hollands-Wald-02}.

\begin{theorem}\label{Theorem: properties of the scaling expansion}
With reference to Equation \eqref{Eqn: scaling expansion for t outside the diagonal} it holds that
\begin{enumerate}[label=(\roman*)]
\item $\tau_k[M;h](x,\cdot)$ and $r_m[M;h](x,\cdot)$ lie in $\mathcal{D}^\prime(O^n\setminus\{(x,\dots,x)\})$ and they have a covariant dependence on the metric, \ie, 
for every $(M;h)\in\obj(\bkgg)$ and for every $\psi\in\arr(\bkgg)$ from $(M,\psi^*h)$ to $(M;h)$ such that $\psi(x)=x$ it holds
\begin{align*}
	\psi^*\tau_k[M;h]=\tau_k[M;\psi^*h],\quad\mathrm{and}\quad\psi^*r_m[M;h]=r_m[M;\psi^*h]\,.
\end{align*}
\item for any $(M;h)\in\obj(\bkgg)$, the integral kernel of $\tau_k[M;h]$ decomposes as a finite sum of the form
\begin{align}\label{Eqn: decomposition of tau}
\tau_k[M;h](x,y)=\sum_I C_I(x)\alpha^*u_I(y),
\end{align}
where $I$ is a finite index set, while $C_I(x)\alpha^*u_I(y)\equiv(C_I(x))_{\mu_1\dots\mu_j}
(\alpha^*u_I(y))^{\mu_1\dots\mu_j}$. In addition 
each $C_I$ is built out of the components of suitable curvature tensors evaluated at $x$, \ie, sums of monomials in the metric $g$, in the Riemann tensor and in its covariant derivatives at most up to order $k-2$. Furthermore each $u_I$ is a tensor-valued $\mathrm{SO}(n\mathrm{D})$-covariant distribution on $\R^{n\mathrm{D}}\setminus\lbrace0\rbrace$, that is there exists a finite $j\in\mathbb{N}$ such that 
\begin{align*}
(u_I)_{a_1\dots a_j}(\Lambda\cdot)=\Lambda^{b_1}_{a_1}\dots\Lambda^{b_j}_{a_j}(u_I)_{b_1\dots b_j}(\cdot),\quad\forall\,\Lambda\in\mathrm{SO}(n\mathrm{D}),
\end{align*}
where the indices $(a_1,\dots,a_j)$ and $(b_1\dots b_j)$ refer to an expansion with respect to an arbitrary coordinate system on $T^*\mathbb{R}^{n\mathrm{D}}$. 
\item recalling the implicit dependence on $\underline{j}$, $\tau_k[M;h](x,\cdot)$ and $r_m[M;h](x,\cdot)$ scale almost homogeneously
	with dimension $\kappa_{\underline{j}}=\sum_{i=1}^{n+1}j_i\big(\frac{\mathrm{D}-2}{2}\big)$ and order $m_{\underline{j}}=\sum_{i=1}^{n+1}j_i$;
\item for any integer $k\geq 0$, the distributions $u_I$ in Equation \eqref{Eqn: decomposition of tau} scale almost homogeneously with dimension $\kappa_{\underline{j}}-k$
and finite order $N\in\mathbb{N}$ \textit{with respect to coordinate rescaling}
\footnote{Per definition this means that for all $\lambda>0$ the distributions whose integral kernel is given by $u_I(x), u_I(\lambda x)$ are related by $u(\lambda x)=\lambda^{\kappa_{\underline{j}}-k}\big(u(x)+\sum_{\ell=1}\log(\lambda)^\ell v_\ell(x)\big)$, $v_\ell(x)$ being a distribution which scales almost homogeneously of degree $\kappa_{\underline{j}}-k$ and order $N-\ell$ -- the definition is inductive and a distribution which scales almost homogeneously of degree $\kappa$ and order $0$ scales in fact homogeneously};

\item the scaling degree (sd) of the distribution $r_m[M;h](x,\cdot)$ is such that $\mathrm{sd}(r_m[M;h](x,\cdot))\leqslant |\underline{j}|-m-1$, {\it cf.} \cite{Brunetti:1999jn}.
\end{enumerate}
\end{theorem}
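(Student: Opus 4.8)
The plan is to adapt the geometric scaling argument of \cite[Theorem 4.1]{Hollands-Wald-02} to the Euclidean setting, using as inputs the smoothness of $s\mapsto t_{\underline j}[M;h^{(s)}](\delta_x\otimes f)$ already secured through axiom $\mathsf{E}2$ and the preceding restriction proposition, together with the locality, covariance and reality encoded in axioms $\mathsf{E}1$--$\mathsf{E}3$. For (i) I would treat well-definedness and covariance together. Since, by axiom $\mathsf{E}2$, the partial evaluation $t_{\underline j}[M;h^{(s)}](\delta_x\otimes f)$ is jointly smooth in $(s,x)$ for every $f\in\mathcal{D}(O^n\setminus\{(x,\dots,x)\})$, the $s$-derivatives $\tau_k[M;h](x,\cdot)$ and the integral remainder $r_m[M;h](x,\cdot)$ are well-defined elements of $\mathcal{D}^\prime(O^n\setminus\{(x,\dots,x)\})$. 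The covariance then follows because the blow-up family $h^{(s)}$ is built intrinsically out of the tangent-space dilation $\chi_s$ at $x$: any $\psi\in\arr(\bkgg)$ with $\psi(x)=x$ induces a linear isometry $d\psi_x$ commuting with $\chi_s$, whence $\psi^*(h^{(s)})=(\psi^*h)^{(s)}$; combining this with the local covariance of $t_{\underline j}$ from axiom $\mathsf{E}1$ and differentiating the relation $\psi^*t_{\underline j}[M;h^{(s)}]=t_{\underline j}[M;(\psi^*h)^{(s)}]$ in $s$ at $s=0$ yields $\psi^*\tau_k[M;h]=\tau_k[M;\psi^*h]$, and the same for $r_m$ through its integral representation.

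For (ii) I would work in the normal chart $\alpha_x$ of \eqref{Eqn: alpha map} and insert the standard normal-coordinate expansion of the metric. There the defining dilation arranges $g^{(s)}_{ab}(y)=\delta_{ab}+\tfrac{s^2}{3}R_{acbd}(x)y^cy^d+\dots$, a power series in $s$ whose coefficient at order $s^p$ is a universal polynomial in $\nabla^{p-2}R$ evaluated at $x$ contracted with $p$ coordinate factors. Hence $\tau_k=\frac{d^k}{ds^k}t_{\underline j}[M;h^{(s)}]|_{s=0}$ collects precisely the contributions carrying covariant derivatives of the curvature up to order $k-2$, and separating the explicit curvature content $C_I(x)$ from the residual dependence $\alpha^*u_I(y)$ on the pulled-back variables produces the finite sum \eqref{Eqn: decomposition of tau}. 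The $\mathrm{SO}(n\mathrm{D})$-covariance of each $u_I$ is forced by the residual frame freedom: two admissible isometries $e,e^\prime:T_xM\to\R^{\mathrm{D}}$ differ by $\Lambda\in SO(\mathrm{D})$, which acts diagonally on the $n$ copies of $\R^{\mathrm{D}}$, and the covariance proven in (i) translates this into the stated tensorial transformation law.

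Parts (iii) and (iv) would then follow by tracking how the dimensionful scaling $h\mapsto h_\lambda$ interacts with the blow-up. One checks that the two commute, $(h_\lambda)^{(s)}=(h^{(s)})_\lambda$, so the almost homogeneous scaling of $t_{\underline j}$ with dimension $\kappa_{\underline j}$ and order $m_{\underline j}$ granted by axiom $\mathsf{E}1$ passes termwise to each $\tau_k$ and to $r_m$, which is (iii). For (iv) I would note that in \eqref{Eqn: decomposition of tau} the curvature prefactor $C_I(x)$ is inert under the coordinate dilation $y\mapsto\lambda y$ while accounting for scaling weight $k$, so that the residual coordinate-rescaling weight $\kappa_{\underline j}-k$ is carried entirely by $u_I$, whose order is finite because it is dominated by $m_{\underline j}$. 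Finally, for (v) I would use the integral representation of $r_m$ together with the elementary observation that each $s$-derivative improves the scaling degree by one unit, while the compact $s$-integration against the bounded weight $(1-s)^m$ leaves it unaltered; iterating the $m+1$ subtractions over the base scaling degree yields $\mathrm{sd}(r_m[M;h](x,\cdot))\leq|\underline j|-m-1$, following the scaling-degree bookkeeping of \cite{Brunetti:1999jn,Hollands-Wald-02}.

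The step I expect to be the main obstacle is the decomposition (ii) together with its scaling refinement (iv): one must verify that the Taylor coefficients in $s$ genuinely reorganize into covariantly constructed curvature monomials times rotation-covariant distributions, and that this separation cleanly assigns scaling weight $k$ to the curvature factor and the complementary weight $\kappa_{\underline j}-k$ to the universal distribution $u_I$. This is exactly the technical heart of \cite[Theorem 4.1]{Hollands-Wald-02}; the only genuinely new feature in the Euclidean regime is that the relevant symmetry group is the compact $SO(\mathrm{D})$ rather than the Lorentz group, which only simplifies the covariance bookkeeping.
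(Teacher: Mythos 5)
Your proposal is correct and takes exactly the route the paper itself intends: the paper omits this proof altogether, declaring it to be, \emph{mutatis mutandis}, the same as that of \cite[Theorem 4.1]{Hollands-Wald-02}, which is precisely the argument you adapt (normal-coordinate blow-up $g^{(s)}$, Taylor expansion in $s$ yielding curvature monomials $C_I(x)$ of order at most $k-2$ times universal distributions, residual frame freedom for the rotation covariance, commutation of $h\mapsto h_\lambda$ with the blow-up for the scaling statements, and the scaling-degree bookkeeping of \cite{Brunetti:1999jn} for the remainder). The one discrepancy is on the paper's side rather than yours: your residual-frame argument delivers covariance under the diagonal action of $\mathrm{SO}(\mathrm{D})$ on $(\mathbb{R}^{\mathrm{D}})^n$ --- the Euclidean analogue of the diagonal Lorentz covariance in Hollands--Wald, and the statement that is actually true --- whereas the theorem as printed asserts covariance under the full $\mathrm{SO}(n\mathrm{D})$, which is not what the construction produces and appears to be a misprint.
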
 

As a by product of this last theorem extending $t_{\underline{j}}$ on $\mathrm{Diag}(M^{n+1})$ is tantamount to extending thereon $\tau_k$, $k=0,\dots,m$, for a given $m\in\N$ large enough, and $r_m$ as in \eqref{Eqn: scaling expansion for t outside the diagonal}. 

\vskip .2cm

\noindent{\em Step 1)} We start from the remainder and, in view of item v) of Theorem \ref{Theorem: properties of the scaling expansion}, choosing $m=|\underline{j}|-n\mathrm{D}$, the scaling degree of  $r_m$ is $n\mathrm{D}-1$. On account of \cite[Theorem 5.2]{Brunetti:1999jn}, $r_m$ admits a unique extension to the whole $O^n$ which can be constructed as follows. Let $\lbrace\vartheta^{(j)}\rbrace$ be smooth functions identically $1$ outside a neighbourhood $\mathcal{U}_{n+1}^{(j)}$ of $\mathrm{Diag}(M^{n+1})$ and supported in $O^{n+1}\setminus\mathrm{Diag}(M^{n+1})$ in such a way that the support of $1-\vartheta^{(j)}$ shrinks to $\mathrm{Diag}(M^{n+1})$ as $j\to\infty$. The extension of $r_m$, is defined as the distribution $\tilde{r}_m$ such that, for all $f\in \mathcal{D}(M^n)$, $\tilde{r}_m[M;h](x,f)\vcentcolon=\lim_{j\to\infty}r_m[M;h](x,\vartheta^{(j)}f)$.

\vskip .2cm

\noindent{\em Step 2)} If we focus on $\tau_k[M;h](x,\cdot)$, we can use the following lemma whose proof is identical to that of \cite[Lemma 4.1]{Hollands-Wald-02}. Most notably it guarantees the existence of an extension of the distributions whose integral kernel is $u_I(y)$ as in Equation \eqref{Eqn: decomposition of tau}.

\begin{lemma}\label{Lemma: extension}
Let $u$ be any tensor valued $\mathrm{SO}(n\mathrm{D})$-invariant distribution on $\R^{n\mathrm{D}}\setminus\lbrace0\rbrace$ whose components are $u_{a_1\dots a_\ell}$. If under coordinate rescaling $u$ scales almost homogeneously with dimension $\rho\in\mathbb{R}$, then it admits a $\mathrm{SO}(n\mathrm{D})$-invariant extension $\tilde{u}$ to $\R^{n\mathrm{D}}$ which scales almost homogeneously with dimension $\rho$.
Two different extensions $\tilde{u},\hat{u}$ are such that
\begin{align*}
	\tilde{u}-\hat{u}=\sum_{|\alpha|\leq\lfloor\rho\rfloor}a_\alpha\delta^{(\alpha)}\,,
\end{align*}
where $a_\alpha\in\mathbb{R}$ while $\lfloor\rho\rfloor$ denotes the integer part of $\rho$.
\end{lemma}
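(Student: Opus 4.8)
The plan is to combine the Brunetti--Fredenhagen theory of extension of distributions of finite scaling degree \cite{Brunetti:1999jn} with a group-averaging argument that restores $\mathrm{SO}(n\mathrm{D})$-invariance, following in spirit \cite[Lemma 4.1]{Hollands-Wald-02}. First I would forget the tensor structure and regard $u$ as a finite collection of scalar distributions, one for each component $u_{a_1\dots a_\ell}\in\mathcal{D}'(\R^{n\mathrm{D}}\setminus\{0\})$. Since $u$ scales almost homogeneously with dimension $\rho$, each component shares one and the same finite scaling degree, determined by $\rho$ and by $n\mathrm{D}$. Hence the extension theorem \cite[Theorem 5.2]{Brunetti:1999jn} applies componentwise and produces an extension $\tilde{u}^{\,0}_{a_1\dots a_\ell}\in\mathcal{D}'(\R^{n\mathrm{D}})$; moreover one can arrange that $\tilde{u}^{\,0}$ still scales almost homogeneously with the \emph{same} dimension $\rho$, the only admissible change being an increase by one unit of the logarithmic order at the resonant (integer) values of the scaling data.

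The extension obtained this way is in general not $\mathrm{SO}(n\mathrm{D})$-covariant, so the second step is to symmetrise it. I would set $\tilde{u}:=\int_{\mathrm{SO}(n\mathrm{D})}R_\Lambda\,\tilde{u}^{\,0}\,\mathrm{d}\Lambda$, where $R_\Lambda$ acts simultaneously on the tensor indices and on the argument of the distribution, and $\mathrm{d}\Lambda$ is the normalised Haar measure. The integral is well defined because $\mathrm{SO}(n\mathrm{D})$ is compact and $\Lambda\mapsto R_\Lambda\tilde{u}^{\,0}$ is a continuous $\mathcal{D}'$-valued map; by construction $\tilde{u}$ is $\mathrm{SO}(n\mathrm{D})$-invariant. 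Since $u$ was already invariant on $\R^{n\mathrm{D}}\setminus\{0\}$, the averaged distribution still restricts to $u$ there, hence it is a genuine extension. Finally, rotations commute with the dilations $x\mapsto\lambda x$, equivalently $R_\Lambda$ commutes with the Euler operator $\mathcal{E}:=\sum_i x^i\partial_i$, so that $(\mathcal{E}-\rho)^{N+1}\tilde{u}=\int_{\mathrm{SO}(n\mathrm{D})}R_\Lambda\,(\mathcal{E}-\rho)^{N+1}\tilde{u}^{\,0}\,\mathrm{d}\Lambda=0$ whenever $\tilde{u}^{\,0}$ is almost homogeneous of dimension $\rho$ and order $N$; thus averaging preserves almost homogeneity with the same dimension $\rho$.

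For the classification of the ambiguity I would take two extensions $\tilde{u},\hat{u}$ as above. They agree on $\R^{n\mathrm{D}}\setminus\{0\}$, so their difference is supported at the origin and is therefore a finite linear combination $\sum_\alpha a_\alpha\delta^{(\alpha)}$. Imposing that both extensions scale almost homogeneously with dimension $\rho$ forces this combination to be compatible with that scaling: a standard computation of the scaling weight of each $\delta^{(\alpha)}$ then selects precisely the multi-indices with $|\alpha|\le\lfloor\rho\rfloor$, while $\mathrm{SO}(n\mathrm{D})$-invariance further constrains the coefficients $a_\alpha$ without enlarging the span.

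The routine parts are the existence of \emph{some} extension and the Haar averaging; the delicate point, which I expect to be the main obstacle, is showing that the extension can be kept almost homogeneous \emph{with the same dimension} $\rho$. At the resonant values of $\rho$ a strictly homogeneous extension need not exist and a logarithmic term is unavoidable — this is the anomalous scaling already anticipated in Theorem \ref{Theorem: properties of the scaling expansion} — and the content of the argument is that this phenomenon only raises the logarithmic order by one and never shifts the leading dimension. Controlling it amounts to analysing the action of $(\mathcal{E}-\rho)$ on a chosen extension and using that $(\mathcal{E}-\rho)\tilde{u}$, being supported at the origin and of strictly lower order, can be reabsorbed; this is exactly where \cite[Lemma 4.1]{Hollands-Wald-02} does the substantive work and where I would concentrate the effort.
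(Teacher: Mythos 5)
Your proposal is correct, and most of its skeleton coincides with the proof the paper relies on: indeed the paper gives no proof of Lemma \ref{Lemma: extension} at all, declaring it identical to \cite[Lemma 4.1]{Hollands-Wald-02}, whose steps are precisely your (i) componentwise extension preserving almost homogeneous scaling with dimension $\rho$ at the cost of at most one extra logarithmic order at resonances, and (iii) classification of the ambiguity via distributions supported at the origin constrained by scaling. Where you genuinely depart is step (ii), the restoration of invariance. Hollands and Wald work with the \emph{non-compact} Lorentz group, for which Haar averaging is unavailable; they instead observe that $\Lambda\mapsto R_\Lambda\tilde{u}^0-\tilde{u}^0$ is a cocycle valued in the finite-dimensional span of the $\delta^{(\alpha)}$ and remove it by a counterterm supported at the origin, invoking the vanishing of the relevant first cohomology (Whitehead's lemma); transplanted ``identically'', this cocycle argument is what the paper implicitly uses for $\mathrm{SO}(n\mathrm{D})$. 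Your direct averaging is a legitimate and strictly simpler substitute precisely because $\mathrm{SO}(n\mathrm{D})$ is compact: the integral is well defined by compactness and weak continuity of $\Lambda\mapsto R_\Lambda\tilde{u}^0$, it restricts to $u$ off the origin because $u$ itself is invariant there, and it preserves almost homogeneity with the same $\rho$ because rotations commute with the Euler operator, so that the defining equation $(\mathcal{E}-\rho)^{N+1}\tilde{u}=0$ passes through the integral. In short, your route buys a shorter, more transparent argument that exploits the Euclidean signature, while the cohomological route is the one that survives in the Lorentzian setting. Two minor cautions: strictly speaking, the extension step for scaling degree at least $n\mathrm{D}$ requires the non-unique-extension result of \cite{Brunetti:1999jn} rather than the uniqueness statement of \cite[Theorem 5.2]{Brunetti:1999jn}, plus the correction argument you sketch (absorbing the origin-supported distribution obtained by applying $(\mathcal{E}-\rho)^{N+1}$ to a first extension, which is exactly where the order can increase by one); and you correctly identify this, not the averaging, as the substantive analytic content.
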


\noindent As a consequence, we can extend $\tau_k[M;h](x,\cdot)$ by taking 
\begin{align*}
\tilde{\tau}_k[M;h](x,y)=\sum\limits_I C_I(x)\alpha^*_x \tilde{u}_I(y),
\end{align*}
where $\tilde{u}_I$ is the extension of $u_I$ as per Lemma \ref{Lemma: extension}.

\vskip .2cm

\noindent{\em Step 3)} Combining together the two previous steps we have built $\tilde{t}_j[M;h]$, extension of $t_{\underline{j}}[M;h]$ such that
$$\tilde{t}_{\underline{j}}[M;h]=\sum_{k=0}^m\tilde{\tau}_k[M;h](x,\cdot)+\tilde{r}_m[M;h](x,\cdot)\,.$$
After symmetrization, $\tilde{t}_{\underline{j}}[M;h]$ satisfies the axioms $\mathsf{E}1-\mathsf{E}3$.
This is a direct consequence of the analysis \cite[Section 4.3]{Hollands-Wald-02} adapted to the case in hand and therefore we omit it.

\subsection{Uniqueness of E-Product of Wick Ordered Powers of Quantum Fields}\label{Section: Uniqueness of E-product of Wick Ordered Powers of Quantum Fields}

In this section we discuss whether there exist ambiguities in the construction of the E-product of Wick polynomials.
In the same spirit of Section \ref{Section: Uniqueness of Wick Ordered Powers of Quantum Fields}, we split the main result in two theorems.
In the first we show that the difference between two E-products can be fully encoded in terms of suitable coefficients, whose characterization is at the heart of the second theorem. 
\begin{remark}\label{Remark: coinciding point tensor product}
	In the following we shall adopt the following notation: given $u,v\in\Gamma(B)$ two sections of a vector bundle $B\to M$ we shall denote with $u[\otimes]v\vcentcolon=[u\otimes v]\in\Gamma(B\otimes B)$. Notice that $u\otimes v\in\Gamma(B\boxtimes B)$, while $[u\otimes v]$ denotes the coinciding point limit of $u\otimes v$, that is, $[u\otimes v](x):=(u\otimes v)(x,x)$.
\end{remark}

In the following we introduce additional structures which will allow us to discuss with the same notation both the case in hand and the Wick polynomials in presence of derivatives of the underlying field configurations, see Section \ref{Section: The Case of a Scalar Field with Derivatives}. In particular Definition \ref{Def: functor Gamma} has to be modified as follows. Let us now consider the jet bundle $J_\infty(M)$ over $M$, namely the inductive limit of the $n$-jet bundles $J_n(M)$, $n\in\mathbb{N}$ -- see \cite{Kolar-Michor-Slocvak-93} for further details. Moreover, we denote with $j_\infty\colon \mathcal{E}(M)\to \Gamma(J_\infty(M))$ the inductive limit of the $k$-jet prolongation maps $j_k\colon \mathcal{E}(M)\to\Gamma(J_k(M))$. 

Let $\text{\ul{k}}=(k_n)_n$ be a finite sequence of $\ell(\underline{\mathrm{k}})$ many strictly positive integers as in Section \ref{Section: E-Product of Wick Ordered Powers of Quantum Fields}.
To each $\text{\ul{k}}$ one associates a covariant functor $\Gamma_{\mathrm{c}}^{\text{\ul{k}}}\colon\bkgg\to\vect$ such that, for any $(M;h)\in\obj(\bkgg)$ and $\chi\in\arr(\bkgg)$, we set
\begin{align}
	\label{Eqn: compactly supported section bundle of fixed order}
	\Gamma_{\mathrm{c}}^{\text{\ul{k}}}[M;h]&:=
	\Gamma_{\operatorname{c}}(\boxtimes_{j=1}^{\ell(\underline{\mathrm{k}})} S^{ k_j}J_\infty(M)^*)\,,\qquad
	\Gamma_{\mathrm{c}}^{\text{\ul{k}}}(\chi):=\boxtimes_{j=1}^{\ell(\underline{\mathrm{k}})} S^{ k_j}\chi_*\,,\\
	\Gamma^{\text{\ul{k}}}[M;h]&:=
	\Gamma(\boxtimes_{j=1}^{\ell(\underline{\mathrm{k}})} S^{ k_j}J_\infty(M))\,,\qquad
	\Gamma^{\text{\ul{k}}}(\chi):=\boxtimes_{j=1}^{\ell(\underline{\mathrm{k}})} S^{ k_j}\chi^*\,,
\end{align}
Here $S^{k}$ denotes the $k$-th symmetric tensor product while $\boxtimes$ denotes the external tensor product.

\begin{Example}\label{Example: local and covariant linear observable with derivatives}
	To better clarify to a reader the previous discussion we repeat with the new structures the example of a standard, linear local and covariant observable as in Example \ref{Example: quantum fields}. Given $(M;h)\in\obj(\bkgg)$ and $f\in\Gamma_\mathrm{c}^{1}[M;h]=\Gamma_{\mathrm{c}}(J_\infty(M)^*)$, let $\Phi[M;h](f)$ be the element of $\mathcal{A}_{\operatorname{reg}}[M;h]$ 
	\begin{align*}
	\Phi[M;h](f,P,\varphi)\vcentcolon=\int_M\langle f,j_\infty\varphi\rangle\mu_g,
	\end{align*}
	where $P\in\operatorname{Par}[M;h]$ and $\varphi\in \mathcal{E}(M)$, while $\langle\,,\,\rangle$ denotes the dual pairing. Observe that $\langle f,j_\infty\varphi\rangle$ involves finitely many derivatives of the field configuration $\varphi$. Locality and covariance descend as in Example \ref{Example: quantum fields}.
\end{Example}

\noindent In view of Definition \ref{Def: E-product}, the following theorem holds true. 

\begin{theorem}\label{Theorem: uniqueness of E-product I}
	Let $\{\widehat{\Phi}^k\}_{k\in\mathbb{N}}$ and $\{\Phi^k\}_{k\in\mathbb{N}}$ be two families of Wick powers associated to $\Phi$ as per Definition \ref{Def: Wick polynomials}.
	In addition let $\{\widehat{\Phi}^{\text{\ul{k}}}\}_{\text{\ul{k}}},\{\Phi^{\text{\ul{k}}}\}_{\text{\ul{k}}}$ be two family of Wick monomials respectively associated to $\{\widehat{\Phi}^k\}_{k\in\mathbb{N}}$ and $\{\Phi^k\}_{k\in\mathbb{N}}$, as per Definition \ref{Def: E-product} -- here $\text{\ul{k}}=(k_n)_n$ denotes an arbitrary finite sequence of $\ell(\underline{\mathrm{k}})$ many non-negative integers.
	Then for any $(M;h)\in\obj(\bkgg)$, and $\omega_{k_1}\otimes\ldots\otimes\omega_{k_{\ell(\underline{\mathrm{k}})}}\in\Gamma_{\mathrm{eq}}^{\text{\ul{k}}}[M;h]$ it holds
	\begin{align}\label{Eqn: ambiguities of E-product}
		\widehat{\Phi}^{\text{\ul{k}}}[M;h](\omega_{k_1}\otimes\ldots\otimes\omega_{k_{\ell(\underline{\mathrm{k}})}})&=
		\Phi^{\text{\ul{k}}}[M;h](\omega_{k_1}\otimes\ldots\otimes\omega_{k_{\ell(\underline{\mathrm{k}})}})\\\nonumber&+
		\sum_{\underset{|\wp|<{\ell(\underline{\mathrm{k}})}}{\wp\in\mathsf{P}\{1,\ldots,\ell(\underline{\mathrm{k}})\}}}\sum_{\underline{j}\leq\text{\ul{k}}_\wp} {\underline{\mathrm{k}_\wp}\choose\underline{j}}
		\Phi^{\underline{j}}[M,h]\bigg(\bigotimes_{I\in\wp}\bigg(c_{k_I-j_I}[M,h]\lrcorner\underset{i\in I}{\big[\bigotimes\big]}\omega_{k_i}\bigg)\bigg)\,,
	\end{align}
	where $\mathsf{P}\{1,\ldots,\ell(\underline{\mathrm{k}})\}$ denotes the set of partitions of $\{1,\dots,\ell(\underline{\mathrm{k}})\}$ in non-empty subsets while $\text{\ul{k}}_\wp=(k_I)_{I\in\wp}$ where $k_I\vcentcolon=\sum_{i\in I}k_i$.
	Furthermore, given a sequence $\underline{j}=(j_I)_{I\in\wp}$, $\underline{j}\leq\text{\ul{k}}_\wp$ if and only if $j_I\leq k_I$ for all $I\in\wp$.
	\footnote{
	If $\wp=\{I_1,\ldots,I_{|\wp|}\}$ then $(\text{\ul{k}}_\wp)_s=\sum_{i\in I_s}k_i$ for $1\leq s\leq|\wp|$ while $\underline{j}=\{j_s\}_{s=1}^{|\wp|}$ is such that $\underline{j}\leq\text{\ul{k}}_\wp$ if and only if $j_s\leq\sum_{i\in I_s}k_i$ for all $1\leq s\leq|\wp|$.}
	Finally $c_{k_I-j_I}[M,h]\lrcorner\underset{i\in I}{\big[\bigotimes\big]}\omega_{k_i}\in\Gamma_{\mathrm{c}}^{j_I}[M,h]$ denotes the symmetrized contraction between $\underset{i\in I}{\big[\bigotimes\big]}\omega_{k_i}\in\Gamma_{\operatorname{c}}^{\underline{k}_I}[M,h]$ and $c_{k_I-j_I}[M,h]\in\Gamma^{k_I-j_I}[M;h]$.
	Moreover $c_{k_I-j_I}[M,h]$ is weakly regular as per Definition \ref{Def: weak regularity condition} and the assignment
	\begin{align}\label{Eqn: local and covariant observable induced by local and covariant tensor}
		C_{k_I-j_I}[M,h](\omega)\vcentcolon=\int_M\mu_g\;c_{k_I-j_I}[M;h]\lrcorner\omega\;\operatorname{Id}_{\Gamma_{\mathrm{eq}}[M;h]}\qquad
		\forall\omega\in\Gamma_{\mathrm{c}}^{k_I}[M;h]\,,
	\end{align}
	defines a local and covariant observable -- \textit{cf.} Definition \ref{Def: locally covariant observable} -- which scales almost homogeneously with dimension $\frac{\mathrm{D}-2}{2}(k_I-j_I)$ with respect to the transformation $h=(g,A,c)\mapsto h_\lambda=(\lambda^{-2}g,A,\lambda^2c)$.
\end{theorem}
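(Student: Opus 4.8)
The plan is to proceed by induction on the length $\ell=\ell(\underline{\mathrm{k}})$ of the sequence, following the strategy of Theorem \ref{Theorem: Unicity of Wick powers I} and its Lorentzian counterpart \cite{Hollands-Wald-02}. The base case $\ell=1$ is immediate: by axiom $(2)$ of Definition \ref{Def: E-product} one has $\widehat{\Phi}^{(k_1)}=\widehat{\Phi}^{k_1}$ and $\Phi^{(k_1)}=\Phi^{k_1}$, so that \eqref{Eqn: ambiguities of E-product} collapses to the Wick-power relation \eqref{Eqn: ambiguities for Wick powers} established in Theorem \ref{Theorem: Unicity of Wick powers I}, from which the coefficients $c_{k_1-j}[M;h]$ and all their stated properties are inherited verbatim. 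I then assume that \eqref{Eqn: ambiguities of E-product} holds for every sequence of length at most $n$ and establish it for $\ell=n+1$.

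The engine of the inductive step is the support factorization axiom $(3)$ of Definition \ref{Def: E-product}. Let $\omega=\omega_{k_1}\otimes\dots\otimes\omega_{k_{n+1}}$ and suppose that the supports of the $\omega_{k_i}$ can be separated along a proper subset $I\subsetneq\{1,\dots,n+1\}$ as in \eqref{Eqn: condition on supports for causal factorization}. Then both $\widehat{\Phi}^{\underline{\mathrm{k}}}$ and $\Phi^{\underline{\mathrm{k}}}$ factorize through the equivariant product $\cdot$ into monomials of lengths $|I|,|I^{\mathrm c}|\le n$; applying the inductive hypothesis to each factor, expanding the product, and using that $\cdot$ is bilinear, the difference $\widehat{\Phi}^{\underline{\mathrm{k}}}(\omega)-\Phi^{\underline{\mathrm{k}}}(\omega)$ is reproduced, on this support configuration, exactly by the terms of the right-hand side of \eqref{Eqn: ambiguities of E-product} whose partition $\wp$ refines the splitting $\{I,I^{\mathrm c}\}$. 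As $I$ ranges over all admissible separations these exhaust the terms with $|\wp|\ge 2$, so the quantity $R^{\underline{\mathrm{k}}}[M;h]$ obtained from $\widehat{\Phi}^{\underline{\mathrm{k}}}[M;h]-\Phi^{\underline{\mathrm{k}}}[M;h]$ by subtracting all of them vanishes on $M^{n+1}\setminus\mathrm{Diag}(M^{n+1})$, hence is supported on the total diagonal. Checking that this subtraction is term by term exact — i.e.\ that a partition induced on $\{1,\dots,n+1\}$ by a support separation is glued consistently from partitions of $I$ and $I^{\mathrm c}$, with the already determined lower order coefficients — is the combinatorial heart of the argument and the step I expect to be the main obstacle.

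It remains to identify the diagonal-supported remainder $R^{\underline{\mathrm{k}}}$, which should equal the single-block ($|\wp|=1$) contribution of \eqref{Eqn: ambiguities of E-product}. Here I reproduce the $\C$-number argument of Theorem \ref{Theorem: Unicity of Wick powers I}: the functional derivative axiom $(4)$ of Definition \ref{Def: E-product}, combined with the inductive hypothesis, forces the first functional derivative of $R^{\underline{\mathrm{k}}}$ to be carried entirely by monomials $\Phi^{\widehat{\underline{\mathrm{k}}}_j}$ of strictly smaller total degree whose own ambiguities have already been accounted for. Reconstructing $R^{\underline{\mathrm{k}}}$ from its field independent part $\varphi=0$ by integrating this relation then produces precisely the fully merged term, in which the coinciding point section $\underset{i}{\big[\bigotimes\big]}\omega_{k_i}$ is contracted with a single new coefficient $c_{k_I-j_I}[M;h]$, $I=\{1,\dots,n+1\}$; the sub-leading single-factor corrections, accounting for $\widehat{\Phi}^{k_i}\neq\Phi^{k_i}$, are exactly those already supplied by the base case through the inductive hypothesis. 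Finally, the microlocal spectrum axiom $(5)$ ensures that the distribution $\mathcal{U}_{\underline{\mathrm{k}}}$ has wavefront set contained in $C_{(n+1)}(M;h)$; restricting to $\varphi=0$ and to the total diagonal this collapses to the empty set, so that $c_{k_I-j_I}[M;h]$ has empty wavefront set and is therefore smooth.

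The asserted properties of the coefficients then follow as in the Wick-power case. Weak regularity is obtained by repeating the whole construction for a smooth compactly supported family of variations $h_s$ of $h$: the joint smoothness in $(s,x)$ of $W_{P_s}$ and of all lower order data propagates, through axiom $(5)$, to $c_{k_I-j_I}[M;h_s]$. Locality and covariance of the induced observable $C_{k_I-j_I}$ defined in \eqref{Eqn: local and covariant observable induced by local and covariant tensor} follow from the naturality under the arrows of $\bkgg$ of every object entering $R^{\underline{\mathrm{k}}}$, giving $\chi^*c_{k_I-j_I}[M;h]=c_{k_I-j_I}[N;\chi^*h]$; intersecting over a shrinking family of neighbourhoods of a point, as in the remark following Theorem \ref{Theorem: Unicity of Wick powers I}, shows that $c_{k_I-j_I}[M;h](x)$ depends only on the germ of $h$ at $x$. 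The almost homogeneous scaling with dimension $\tfrac{\mathrm{D}-2}{2}(k_I-j_I)$ is read off by comparing the two sides of \eqref{Eqn: ambiguities of E-product} under $h\mapsto h_\lambda$ and invoking axiom $(1)$. The sharper statement that each $c_{k_I-j_I}$ is a polynomial curvature scalar is the content of the companion uniqueness result and follows, mutatis mutandis, from the Peetre--Slov\'ak theorem as in \cite[Thm.~3.1]{Khavkine:2014zsa}.
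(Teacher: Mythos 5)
Your proposal is correct and follows essentially the same route as the paper's own proof: induction on $\ell(\underline{\mathrm{k}})$ (and on the total degree $|\underline{\mathrm{k}}|$), subtraction of the multi-block partition counterterms supplied by the inductive hypothesis, support factorization to show that the remainder vanishes on $M^{\ell}\setminus\mathrm{Diag}(M^{\ell})$ and hence factors through the coinciding-point limit $\omega_{k_1}[\otimes]\ldots[\otimes]\omega_{k_{\ell}}$, and finally the derivative-axiom/$\varphi=0$/wavefront-set argument of Theorem \ref{Theorem: Unicity of Wick powers I} to identify the single-block coefficient and obtain its smoothness, weak regularity, covariance and scaling. The one difference is bookkeeping, and it is in your favour: you subtract the contributions of \emph{all} partitions with $|\wp|\geq 2$, including the discrete one, which is exactly what the off-diagonal cancellation requires (already for $\ell=2$ with disjoint supports the cross terms $\Phi^{j_1}(c_{k_1-j_1}\lrcorner\omega_{k_1})\cdot\Phi^{j_2}(c_{k_2-j_2}\lrcorner\omega_{k_2})$ must be matched, and the single-block term vanishes there since $\omega_{k_1}[\otimes]\omega_{k_2}=0$), whereas the paper's proof literally removes only the partitions with $1<|\wp|<\ell$, consistent with the restriction $|\wp|<\ell(\underline{\mathrm{k}})$ in \eqref{Eqn: ambiguities of E-product}; your reading is the one under which the claimed diagonal support actually holds.
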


\begin{proof}
	For later convenience let us notice that in equation \eqref{Eqn: ambiguities of E-product} the term corresponding to $\wp=\{\{1,\ldots,\ell(\underline{\mathrm{k}})\}\}$ -- \textit{i.e.} the term corresponding with the trivial partition -- is given by
	\begin{align}\label{Eqn: counterterm for trivial partition}
		\sum_{j\leq|\text{\ul{k}}|}{|\underline{\mathrm{k}}|\choose j}
		\Phi^{j}[M,h]\bigg(c_{|\text{\ul{k}}|-j}[M,h]\lrcorner\underset{i\in\{1,\ldots,\ell\}}{\big[\bigotimes\big]}\omega_{k_i}\bigg)\,,
	\end{align}
	where $c_{|\text{\ul{k}}|-j}[M;h]$ enjoys the same properties of the tensors appearing in Theorem \ref{Theorem: Unicity of Wick powers I}.
	
	We proceed inductively with respect to $\ell=\ell(\underline{\mathrm{k}})$ and to $\text{\ul{k}}$.
	Notice that the thesis holds true if $\ell=1$, independently of the value of $\text{\ul{k}}=k_1$, since this case reduces to Theorem \ref{Theorem: Unicity of Wick powers I}.
	In addition the statement becomes trivial for all values of $\ell$, if $|\text{\ul{k}}|=0$ or $|\text{\ul{k}}|=1$.  
	
	Let us start by assuming the theorem to hold true up to order $\ell-1$ and proving it to order $\ell$.
	To this end, let us consider

	\begin{align}\label{Eqn: defining equation for Phi_k}
		\Phi_{|\text{\ul{k}}|}[M;h](\omega_{k_1}\otimes\ldots\otimes\omega_{k_\ell})&\vcentcolon=
		\widehat{\Phi}^{\text{\ul{k}}}[M;h](\omega_{k_1}\otimes\ldots\otimes\omega_{k_\ell})-
		\Phi^{\text{\ul{k}}}[M;h](\omega_{k_1}\otimes\ldots\otimes\omega_{k_\ell})\\\nonumber&-
		\sum_{\underset{1<|\wp|<\ell}{\wp\in\mathsf{P}\{1,\ldots,\ell\}}}\sum_{\underline{j}\leq\text{\ul{k}}_\wp}{\underline{\mathrm{k}_\wp}\choose\underline{j}}
		\Phi^{\underline{j}}[M,h]\bigg(\bigotimes_{I\in\wp}c_{k_I-j_I}[M,h]\lrcorner\underset{i\in I}{\big[\bigotimes\big]}\omega_{k_i}\bigg)\,,
	\end{align}
	As usual $\Phi_{|\text{\ul{k}}|}[M;h]$ is local and covariant with appropriate regularity and scaling.
	Moreover, on account of the support factorization axiom in Definition \ref{Def: E-product} and of the inductive hypothesis on $\ell$,
	\begin{align*}
		\Phi_{|\text{\ul{k}}|}[M;h](\omega_{k_1}\otimes\ldots\otimes\omega_{k_\ell})=
		\Psi_{|\text{\ul{k}}|}[M;h](\omega_{k_1}[\otimes]\ldots[\otimes]\omega_{k_\ell})\,,
	\end{align*}
	where $\Psi_{|\text{\ul{k}}|}[M;h]\colon\Gamma_{\mathrm{c}}^{|\text{\ul{k}}|}[M;h]\to\Gamma_{\mathrm{eq}}[M;h]$ is local and covariant with almost homogeneous scaling of degree $|\text{\ul{k}}|\mathrm{D}_\varphi$.
	The inductive assumption over $\ell$ together with induction over $|\text{\ul{k}}|$ implies that $\Psi_{|\text{\ul{k}}|}[M;k]$ can be written as
	\begin{align*}
		\Psi_{|\text{\ul{k}}|}[M;h](\omega)=
		\sum_{j\leq|\text{\ul{k}}|}{|\underline{\mathrm{k}}|\choose j}
		\Phi^{j}[M,h]\big(c_{|\text{\ul{k}}|-j}[M,h]\lrcorner\omega\big)\qquad
		\forall\omega\in\Gamma_{\mathrm{c}}^{\text{\ul{k}}}[M;h]\,,
	\end{align*}
	where $c_{|\text{\ul{k}}|-j}\in\Gamma^{|\text{\ul{k}}|-j}[M;h]$.
	Considering the locally covariant observables $C_{|\text{\ul{k}}|-j}[M;h]$ defined from $c_{|\text{\ul{k}}|-j}[M;h]$ as per equation \eqref{Eqn: local and covariant observable induced by local and covariant tensor} the proof is completed along the same lines of Theorem \ref{Theorem: Unicity of Wick powers I}.
\end{proof}


We conclude the section by stating a theorem, similar in spirit to Theorem \ref{Theorem: Unicity of Wick powers II}, which characterizes the form of the coefficients $c_{k_I-j_I}[M;h]$.

\begin{theorem}\label{Theorem: uniqueness of E-product II}
	Adopting the same notation of Theorem \ref{Theorem: uniqueness of E-product I}, for every $(M;h)\in\obj(\bkgg)$ the coefficients $c_{k_I-j_I}[M;h](x)$ appearing in equation \eqref{Eqn: ambiguities of E-product} are differential operators taking the form
	\begin{align*}
		c_{k_I-j_I}[M;h](x)=
		c_{k_I-j_I}[M;h]&[g^{ab}(x), \epsilon^{a_1\dots a_n}(x), R_{abcd}(x), \dots,\\&\nabla_{e_1}\dots\nabla_{e_n}R_{abcd}(x),
		A(x),\ldots,\nabla_{e_1}\dots\nabla_{e_n}A,
		c(x), \nabla_{e_1}\dots\nabla_{e_n}c(x)]\,,
	\end{align*}
	where $\epsilon^{a_1\dots a_n}(x)$ and $R_{abcd}(x)$ denote, respectively, the Levi-Civita and the Riemann curvature tensors built out of $g$ at $x\in M$, while $\nabla_{e_i}$ denotes the covariant derivative along the direction $e_i$.
	Furthermore each $c_k$ is a polynomial scalar function, covariantly constructed out of its arguments.
	In addition, every $c_{k_I-j_I}[M;h]$ scales homogeneously with dimension $k\big(\frac{\mathrm{D}-2}{2}\big)-\mathrm{D}(\ell-1)$ under the transformation $h=(g,A,c)\mapsto h_\lambda=(\lambda^{-2}g,A,\lambda^2c)$.
\end{theorem}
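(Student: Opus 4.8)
The plan is to show that the coefficients $c_{k_I-j_I}[M;h]$ inherit from Theorem \ref{Theorem: uniqueness of E-product I} exactly the structural properties—locality, covariance, weak regularity and almost homogeneous scaling—which underlie the characterisation of the Wick-power coefficients in Theorem \ref{Theorem: Unicity of Wick powers II}. Consequently the argument can be run \emph{mutatis mutandis} as in Theorem \ref{Theorem: Unicity of Wick powers II}, equivalently as in \cite[Theorem 3.1]{Khavkine:2014zsa}, the only difference being the bookkeeping of the scaling weight. I outline below the order in which I would carry out the steps.

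First I would record that, by Theorem \ref{Theorem: uniqueness of E-product I}, each $c_{k_I-j_I}[M;h]$ is a local and covariant, weakly regular tensor field—in the sense of Definition \ref{Def: weak regularity condition}—whose associated observable $C_{k_I-j_I}$ scales almost homogeneously. Using the locality of $C_{k_I-j_I}$ together with its behaviour under the inclusion morphisms $\chi_U\colon U_x\hookrightarrow M$, precisely as in the remark following Theorem \ref{Theorem: Unicity of Wick powers I}—that is, by taking a family of relatively compact neighbourhoods shrinking to $x$—I would deduce that $c_{k_I-j_I}[M;h](x)$ depends only on the germ of $h=(g,A,c)$ at $x$. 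Weak regularity provides the joint smoothness of $c_{k_I-j_I}[M;h_s](x)$ in $(s,x)$ for every smooth compactly supported family of variations, so that the hypotheses of the Peetre-Slov\'ak theorem recalled in Appendix \ref{Appendix: Peetre-Slovak} are met; the theorem then upgrades germ dependence to finite-order dependence, yielding that $c_{k_I-j_I}[M;h]$ is a differential operator in the background data $(g,A,c)$.

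Next I would exploit covariance under orientation-preserving isometric embeddings. Passing to geodesic normal coordinates centred at $x$ and invoking the classical reduction of metric jets to the Riemann tensor and its covariant derivatives, together with the analogous reduction of the jets of $A$ and $c$ to $\nabla_{e_1}\cdots\nabla_{e_n}A$ and $\nabla_{e_1}\cdots\nabla_{e_n}c$, forces $c_{k_I-j_I}$ to be assembled solely from $g^{ab}$, $R_{abcd}$, $A$, $c$ and their covariant derivatives; orientability additionally admits the Levi-Civita tensor $\epsilon^{a_1\dots a_n}$, while scalarity fixes the admissible index contractions. This yields the asserted form. The polynomiality and the homogeneity of the scaling would then follow from the almost homogeneous scaling of $C_{k_I-j_I}$: under $h\mapsto h_\lambda=(\lambda^{-2}g,A,\lambda^2 c)$ each admissible building block carries a fixed positive engineering weight, so only finitely many monomials share any prescribed scaling dimension, forcing $c_{k_I-j_I}$ to be a polynomial and collapsing the almost homogeneous scaling to a strictly homogeneous one of degree $k\big(\frac{\mathrm{D}-2}{2}\big)-\mathrm{D}(\ell-1)$.

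The hard part will be this last step: tracking the engineering weights of every factor through the transformation $h\mapsto h_\lambda$ so as to pin down the exact homogeneity degree and to rule out genuine logarithmic contributions, and verifying that the hypotheses of the Peetre-Slov\'ak theorem hold in the present tensor-valued, multilocal setting rather than merely by analogy with the Wick-power case. Once these points are secured, the remaining verifications are identical to those in the proof of Theorem \ref{Theorem: Unicity of Wick powers II}.
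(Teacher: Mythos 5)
Your proposal is correct and follows essentially the same route as the paper: the paper's own proof consists precisely of observing that, by Theorem \ref{Theorem: uniqueness of E-product I}, the coefficients $c_{k_I-j_I}[M;h]$ enjoy the same locality, covariance, weak regularity and scaling properties as the Wick-power coefficients, and then deferring to Theorem \ref{Theorem: Unicity of Wick powers II} and \cite[Theorem 3.1]{Khavkine:2014zsa}, whose Peetre-Slov\'ak/normal-coordinate/scaling argument you have simply unpacked. The only quibble is your phrase that each building block carries a \emph{positive} weight ($A$ and its covariant derivatives have weight zero under \eqref{Eqn: Scaling Transformation}); what actually bounds the monomials is that every fully contracted scalar monomial acquires strictly positive weight from the $g^{ab}$ and $\epsilon$ factors needed to saturate the indices, exactly as in \cite{Khavkine:2014zsa}.
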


The theorem is a direct consequence of Theorem \ref{Theorem: Unicity of Wick powers II} and of \cite[Theorem 3.1]{Khavkine:2014zsa}, since the proof is based only on the properties of the coefficients $c_{k_I-j_I}[M;h](x)$, which are of the same type of those of the Wick polynomials on account of Theorem \ref{Theorem: uniqueness of E-product I}, {\it cf.} Theorem \ref{Theorem: Unicity of Wick powers I} .

\section{The Case of a Scalar Field with Derivatives}\label{Section: The Case of a Scalar Field with Derivatives}

So far our discussion of the locally covariant algebra $\mathcal{A}$ and of its Wick powers and monomials has been confined to the case of polynomial functionals $F\in\mathcal{P}$ which do not contain derivatives of the field configurations $\varphi$ -- \textit{cf.} Remark \ref{Remark: functional without dependence on derivative of the field configuration}.
For example, we did not consider functionals of the form $F(\varphi)\vcentcolon=\int_M\mu_g\,\omega^{ab}\varphi\nabla_a\varphi\nabla_b\varphi$ where $\omega\in \Gamma_{\operatorname{c}}(TM)$.
There is no issue a priori in extending the previous discussion so to account for arbitrary derivatives of the field configurations. Yet, as pointed out in \cite{Hollands-Wald-05}, one needs to add to the axioms for Wick monomials $\Phi^{\text{\ul{k}}}$ two additional requirements -- \textit{cf.} section \ref{Subsection: Additional axioms: Leibniz rule and Principle of Perturbative Agreement}. It is important to stress that the extension to this larger class of configurations is of paramount relevance in many concrete applications, as one can infer from the Lorentzian scenario, see {\it e.g.} \cite{FR12,FR13}.

In Section \ref{Subsection: Wick polynomials with derivatives} we discuss succinctly how to adapt Definitions \ref{Def: Wick polynomials}-\ref{Def: E-product} to the case of functionals which contain derivatives of the field configuration $\varphi$. This part of our work benefits from \cite{Khavkine-Melati-Moretti-17}, where Wick powers are thoroughly studied for tensor fields on globally hyperbolic spacetimes. In Section \ref{Subsection: Additional axioms: Leibniz rule and Principle of Perturbative Agreement} we outline instead the additional requirements to be added to the axioms for the Wick monomials, following the analysis for the Lorentzian counterpart in \cite{Hollands-Wald-05}. 

Since many statements and proofs are similar to those already discussed in the previous parts of this paper, we will limit ourselves to pointing out the main differences avoiding wherever possible unnecessary repetitions.

\subsection{Wick polynomials with derivatives}\label{Subsection: Wick polynomials with derivatives}
Goal of this section will be to extend Definition \ref{Def: Wick polynomials}-\ref{Def: E-product} to include also derivatives of the field configurations $\varphi\in\mathcal{E}(M)$. To this end, we need to generalize the structures considered in Section \ref{Section: General Setting}. Hence, for any smooth vector bundle $B\to M$ over $M$ we consider as kinematic configurations $\Gamma(B)$, the space of smooth sections of $B$. As a consequence, smooth, local functionals $\widehat{F}:\Gamma(B)\to\mathbb{C}$ are defined analogously to Definition \ref{Def: functionals}, with the difference that the $k$-th functional derivative $\widehat{F}^{(k)}[\alpha]\in\Gamma^\prime(S^{\boxtimes k}B)$, $k\in\mathbb{N}$ and $\alpha\in\Gamma(B)$.

According to Definition \ref{Def: functionals} if $F\in\mathcal{P}[M;h]$ then there exists $N\in\mathbb{N}$ such that $F^{(k)}=0$ for all $k\leq N$.
It follows that there exists $\omega^{(F)}:=\sum_{\ell=1}^N\omega^{(F)}_\ell$, where $\omega_\ell\in\Gamma_{\mathrm{c}}(S^{\boxtimes \ell}J_\infty^*(M)\otimes \wedge^{\mathrm{top}}T^*M)$, such that $F(\varphi)=\sum_{\ell=1}^N\int_M\langle j_\infty\varphi^{\otimes \ell},\omega^{(F)}_\ell\rangle$.
Here $\langle j_\infty\varphi^{\otimes \ell},\omega^{(F)}_\ell\rangle$ denotes the top-density on $M$ obtained contracting $\omega_\ell$ with $j_\infty\varphi^{\otimes \ell}$, where $j_\infty\varphi\in\Gamma(J_n(M))$ is the $\infty$-th jet extension of $\varphi$, $j_\infty\colon\mathcal{E}(M)\to\Gamma(J_\infty(M))$.
\begin{Definition}\label{Def: functional depending on derivative of field configuration}
	A smooth polynomial functional $F\colon \mathcal{E}(M)\to\mathbb{C}$ is said to depend on the derivatives of $\varphi$ up to order $n\in\mathbb{N}$ if the associated density-valued form $\omega^{(F)}=\sum_{\ell=1}^N\omega^{(F)}_\ell$ is such that $\omega^{(F)}_\ell\in\Gamma_{\mathrm{c}}(S^{\boxtimes \ell}J_{r_\ell}^*(M)\otimes \wedge^{\mathrm{top}}T^*M)$ for $r_\ell\leq n$.
\end{Definition}

\begin{remark}
	In the following we will denote with $\mathcal{P}_{\mathrm{reg}}[M;h]$ ({\it resp.} $\mathcal{P}_{\operatorname{loc}}[M;h]$) the space of smooth, regular ({\it resp.} smooth, local) polynomial functionals depending on the derivative of $\varphi$ up to an arbitrary but finite order. Recall that, thanks to \cite[Prop. 2.3.12]{Brunetti-Fredenhagen-Ribeiro-12} -- see also \cite{Brouder-Dang-Gengoux-Rejzner-18} -- all smooth, local, polynomial functionals depend on a finite number of derivatives of the field configuration $\varphi$.
\end{remark}

\begin{remark}
	We observe that the definitions of the functors $\Gamma_{\mathrm{eq}}\colon\bkgg\to\vect$ and $\mathcal{A}_{\operatorname{reg}}\colon\bkgg\to\alg$ generalize slavishly to the case of functionals which depend on the derivatives of the fields -- \textit{cf.} Definitions \ref{Def: locally covariant algebra of regular observables}-\ref{Def: functor Gamma}. In particular Theorem \ref{Theorem: locally covariant algebra of interest} holds true in this setting.
\end{remark}	

\vskip .2cm

Similarly to the case without derivatives -- \textit{cf.} Definition \ref{Def: locally covariant observable} -- a \emph{locally covariant observable} is a natural transformation from a functor of compactly supported sections $\Gamma_{\mathrm{c}}$ to an Euclidean locally covariant theory $\mathcal{A}$ -- \textit{cf.} Definition \ref{Def: locally covariant observable} and Remark \ref{Remark: generalized locally covariant observable}. We give thus the definition of Wick powers and Wick monomials along the same lines of Definitions \ref{Def: Wick polynomials}-\ref{Def: E-product}. In what follows $\Phi$ will denote always the local and covariant observable as in Example \ref{Example: local and covariant linear observable with derivatives}.
\begin{Definition}\label{Def: Wick powers with derivatives}
	We define a family of \emph{Wick powers}, associated with $\Phi$, as a collection of natural transformations $\lbrace\Phi^k\rbrace_{k\in\N}$, with $\Phi^k\vcentcolon\Gamma^{k}_\mathrm{c}\to\Gamma_{\operatorname{eq}}$, such that axioms (1),(2) and (5) of Definition \ref{Def: Wick polynomials} hold true and in addition
	\begin{enumerate}
		\item[(3)] $\forall k\in\N\cup\lbrace0\rbrace$, $(M;h)\in\obj(\bkgg)$, $\omega\in \Gamma_{\mathrm{c}}^{k}[M;h]$, $P\in\operatorname{Par}[M;h]$ and $\varphi_1,\varphi_2\in \mathcal{E}(M)$,
		\begin{align}\label{Eqn: derivative of Wick powers with derivative}
			\langle\Phi^k[M;h](\omega,P)^{(1)}[\varphi_1],\varphi_2\rangle=k\,\Phi^{k-1}[M;h](j_\infty\varphi_2\lrcorner\omega,P,\varphi_1),
		\end{align}
		where $j_\infty\varphi_2\lrcorner\omega\in\Gamma_{\mathrm{c}}^{k-1}[M;h]$ denotes the contraction between $j_\infty\varphi_2$ and $\omega$.

		\item[(4)] let $n\in\N$ and let $(M;h_s)\in\obj(\bkgg)$, with $\lbrace h_s\rbrace_{s\in\R^n}$ a smooth and compactly supported $n$-dimensional family of variations of $h$ as per Definition \ref{Def: smooth compactly supported d-dimensional family of variations}. For any smooth family $\lbrace P_s\rbrace_{s\in\R^n}$ with $P_s\in\mathrm{Par}(M;h_s)$ and for any $s\in\R^n$, let $\mathcal{U}_{k}\in\Gamma_\mathrm{c}^\prime(\pi_n^*S^{\otimes k}J_\infty^*(M))$ be the distribution on the pull-back bundle $\pi_n^*S^{\otimes k}J_\infty^*(M)$ with base space $\R^n\times M$ such that, for any $\omega\in\Gamma_{\mathrm{c}}^{k}[M;h]$, 
		\begin{align*}
			\mathcal{U}_{k}(\chi,\omega)\vcentcolon=\int_{\R^n}\mathrm{d}s\;\Phi^k[M;h_s](\omega,P_s,0)\chi(s)\,,\quad
			\forall\chi\in C^\infty_\mathrm{c}(\R^n).
		\end{align*}
		We require that, $\forall k\in\N$,
		\begin{align*}
			\mathrm{WF}(\mathcal{U}_{k})=\emptyset.
		\end{align*}
	\end{enumerate}
\end{Definition}

A straightforward generalization of Equation \eqref{Eqn: definition of Wick powers} provides an example of a family of Wick powers $\{\wick{\Phi^k}_H\}_k$ which satisfies Definition \ref{Def: Wick powers with derivatives}.
The results on existence and uniqueness of Wick powers can be read as the vector-valued generalization of Theorems \ref{Theorem: Unicity of Wick powers I}-\ref{Theorem: Unicity of Wick powers II}, see also \cite[Section 6]{Khavkine-Melati-Moretti-17}.
In particular Equation \eqref{Eqn: ambiguities for Wick powers} holds true.

\begin{Definition}\label{Def: Wick monomials with derivatives}
	Let $\lbrace\Phi^k\rbrace_{k\in\N}$ be a family of Wick powers associated with the quantum field $\Phi$, as per Definition \ref{Def: Wick powers with derivatives} and let $\text{\ul{k}}=(k_n)_{n}$ be a finite sequence of $\ell(\underline{\mathrm{k}})$ many non-negative integers.
	We call family of Wick monomials $\{\Phi^{\text{\ul{k}}}\}_{\text{\ul{k}}}$ associated with that of Wick powers $\{\Phi^k\}_k$ to be a collection of natural transformations $\Phi^{\text{\ul{k}}}\colon\Gamma_{\mathrm{c}}^{\text{\ul{k}}}\to\Gamma_{\mathrm{eq}}$, one for each sequence $\text{\ul{k}}$, with the following properties:
	\begin{enumerate}
		\item
		for every finite sequence $\text{\ul{k}}$, $\Phi^k\colon\Gamma_{\operatorname{c}}^{\text{\ul{k}}}\to\Gamma_{\operatorname{eq}}$ scales almost homogeneously with dimension $\sum_{i=1}^\ell k_i\mathrm{d}_\varphi$;
		
		\item
		for every finite sequence $\text{\ul{k}}$ we have $\Phi^{\text{\ul{k}}}=\Phi^{k_1}$;
		
		\item let $\text{\ul{k}}=(k_1,\ldots,k_\ell)$ be an arbitrary sequence of non-negative integers, $\ell\in\N$ and $(M;h)\in\obj(\bkgg)$, $\omega_{k_j}\in\Gamma_{\mathrm{c}}(S^{k_j}J_\infty^*M)$ for $j=1,\ldots,\ell$.
		Let $I\subsetneq\lbrace1,\dots,\ell\rbrace$ be a proper subset and denote with $I^\mathrm{c}$ the complement of $I$ with respect to $\lbrace1,\dots,\ell\rbrace$.
		We require that, if
		\begin{align*}
			\bigcup_{i\in I}\mathrm{supp}(\omega_{k_i})\cap\bigcup_{j\in I^\mathrm{c}}\mathrm{supp}(\omega_{k_j})=\emptyset,
		\end{align*}
		then
		\begin{align*}
			\Phi^{\text{\ul{k}}}[M;h](\omega_{k_1}\otimes\dots\otimes\omega_{k_\ell})=
			\Phi^{\text{\ul{k}}_I}[M;h]\bigg(\bigotimes_{i\in I}\omega_{k_i}\bigg)
			\cdot\Phi^{\text{\ul{k}}_{I^\mathrm{c}}}[M;h]\bigg(\bigotimes_{j\in I^\mathrm{c}}\omega_{k_j}\bigg)\,;
		\end{align*}
		
		\item
		for all finite sequence $\text{\ul{k}}=(k_1,\ldots,k_\ell)$, $(M;h)\in\obj(\bkgg)$, $\omega\in \Gamma_{\mathrm{c}}^{\text{\ul{k}}}[M;h]$, $P\in\operatorname{Par}[M;h]$ and $\varphi,\psi\in \mathcal{E}(M)$,
		\begin{align}\label{Eqn: derivative of Wick monomials with derivatives}
			\langle\Phi^{\text{\ul{k}}}[M;h](\omega,P)^{(1)}[\varphi],\psi\rangle=
			\sum_{j=1}^\ell k_j\Phi^{\widehat{\text{\ul{k}}}_j}[M;h](j_\infty\psi\lrcorner\omega,P,\varphi)\,,
		\end{align}
		where $j_\infty\psi\lrcorner\omega\in\Gamma_{\mathrm{c}}^{\widehat{\text{\ul{k}}}_j}[M;h]$ is the $j$-th contraction -- that is the contraction $\Gamma(J_\infty(M))\times\Gamma_{\mathrm{c}}(S^{k_j}J_\infty^*(M))\to\Gamma_{\mathrm{c}}(S^{k_j-1}J_\infty^*(M))$ -- of $j_\infty\psi\in\Gamma(J_\infty(M))$ with $\omega\in\Gamma_{\mathrm{c}}^{\text{\ul{k}}}[M;h]$ -- we set $\Phi^{\widehat{\underline{\mathrm{k}}}_j}=0$ whenever $k_j-1<0$.
	
		\item
		for all sequences $\text{\ul{k}}$ and $\ell\in\mathbb{N}$, let $(M;h_s)\in\obj(\bkgg)$ with $\lbrace h_s\rbrace_{s\in\R^n}$ smooth and compactly supported family of variations of $h$, as per Definition \ref{Def: smooth compactly supported d-dimensional family of variations}, for any smooth family $\lbrace P_s\rbrace_{s\in\R^n}$ with $P_s\in\mathrm{Par}(M;h_s)$ and for any $s\in\R^n$ (see Remark \ref{Remark: dependence on s of the parametrix}), let $\mathcal{U}_{\text{\ul{k}}}\in\Gamma_\mathrm{c}(\pi^*_n\boxtimes_{j=1}^\ell S^{ k_j}J_\infty^*(M))^\prime$ be the distribution on the pull-back bundle $\pi^*_n\boxtimes_{j=1}^\ell S^{ k_j}J_\infty^*(M)\to M\times\mathbb{R}^n$ such that, for any $\omega\in \Gamma_{\mathrm{eq}}^{\text{\ul{k}}}[M;h]$ and for any $\chi\in C^\infty_\mathrm{c}(\R^n)$,
		\begin{align*}
			\mathcal{U}_{\text{\ul{k}}}(\chi,\omega)\vcentcolon=
			\int_{\R^n}\mathrm{d}s\;\Phi^{\text{\ul{k}}}[M;h_s](\omega,P_s,0)\chi(s)\,.
		\end{align*}
		We require that the wave front set $\mathrm{WF}(\mathcal{U}_{\text{\ul{k}}})$ lies in 
		\begin{align*}
			\nonumber
			C_{(\ell)}(M)=&\bigg\lbrace(x_1,p_1;\dots;x_\ell,p_\ell,s;\tau)\in T^*(\pi^*_n\boxtimes_{j=1}^\ell S^{ k_j}J_\infty^*(M))^\ell\setminus\lbrace0\rbrace\,|\\&\exists\,I=\lbrace i_1,\dots, i_{|I|}\rbrace\subset\lbrace1,\dots,\ell\rbrace,
			|I|\geqslant2 : (x_{i_1},\dots, x_{i_{|I|}})\in\mathrm{Diag}(M^{|I|}), \sum_{i\in I}p_i=0\bigg\rbrace;
		\end{align*}
	\end{enumerate}
\end{Definition}
Following the same arguments of Proposition \ref{Prop: local and covariant algebra of local polynomials}, given a family of Wick powers and Wick monomials we can identify an Euclidean local and covariant field theory $\mathcal{A}$.
Moreover the uniqueness theorems \ref{Theorem: uniqueness of E-product I}-\ref{Theorem: uniqueness of E-product II} still hold true. In particular Equation \eqref{Eqn: ambiguities of E-product} is valid in this context.

\begin{remark}\label{Remark: interpretation of vector fields as element in the dual of the jet bundle}
	Notice that any multivector field $\omega_k\in\Gamma_{\mathrm{c}}(S^kTM)$ leads to a unique section $\widehat{\omega}_k\in\Gamma_{\mathrm{c}}(S^kJ_1(M))\subseteq\Gamma_{\mathrm{c}}^{k}[M;h]$ defined by
	$\langle\widehat{\omega}_k,j_\infty\varphi[\otimes]\ldots[\otimes] j_\infty\varphi\rangle\vcentcolon=\langle\omega_k,\mathrm{d}\varphi[\otimes]\ldots[\otimes] \mathrm{d}\varphi\rangle$.
	For $k=0$ we recover the identification between $\mathcal{D}(M)$ and $\Gamma_{\mathrm{c}}^{0}[M;h]$.
	In the following we identify $\omega_k$ and $\widehat{\omega}_k$.
	Similarly a multivector field $\Omega\vcentcolon=\omega_{k_1}\otimes\ldots\otimes\omega_{k_\ell}\in\Gamma_{\mathrm{c}}(\boxtimes_{j=1}^\ell S^{k_j} TM)$ identifies a unique $\widehat{\Omega}\in\Gamma_{\mathrm{c}}(\boxtimes_{j=1}^\ell S^{k_j}J_1(M))\subset\Gamma_{\mathrm{c}}^{(k_1,\ldots,k_\ell)}[M,h]$.
\end{remark}

\subsection{Additional axioms: Leibniz rule and Principle of Perturbative Agreement}\label{Subsection: Additional axioms: Leibniz rule and Principle of Perturbative Agreement}
In this section we discuss two additional requirements which provide further structural constraints to Wick monomials: the Leibniz rule and the principle of perturbative agreement (PPA). These axioms have been introduced in \cite{Hollands-Wald-05} -- see also \cite{Drago-Hack-Pinamonti-2016,Zahn:2013ywa} -- as a requirement for internal consistency of Wick monomials. The PPA in particular is necessary to ensure that any term in the Lagrangian, which has a quadratic dependence on the fields of the underlying theory, can be equivalently included in the free or in the interacting part of the Lagrangian without changing the prediction of the model.

From a technical point of view, these new axioms have the merit of further restricting the ambiguities present in the definition of Wick powers and of Wick monomials. For this reason this prompts the question whether there exists a family of Wick powers and of Wick monomials, adhering to Definitions \ref{Def: Wick powers with derivatives} and \ref{Def: Wick monomials with derivatives}, which satisfies all axioms. Similarly, the proofs of Theorem \ref{Theorem: Unicity of Wick powers I} and \ref{Theorem: Unicity of Wick powers II} are no longer valid slavishly and they should be generalized to the case in hand. Luckily, these problems have been already tackled in \cite{Hollands-Wald-05} in the Lorentzian case and this allows us to avoid giving all the details, highlighting instead the main differences between Riemannian and Lorentzian theories.

We divide the analysis in two steps. In the first we state the so-called Leibniz rule and our  main result in this direction is contained in Proposition \ref{Prop: existence of Wick monomials with Leibniz rule}. Herein we show that there exists always a prescription of Wick monomials which satisfies both Definition \ref{Def: Wick monomials with derivatives} and \ref{Def: Leibniz rule}. In the second step, instead, we formulate the PPA and we investigate its implications, which are discussed mainly in Theorem \ref{Theorem: existence of Wick monomials with satisfy Leibniz rule and PPA}.

\subsubsection{Leibniz rule}\label{Subsubsection: Leibniz rule}
Definitions \ref{Def: Wick powers with derivatives}-\ref{Def: Wick monomials with derivatives} establish a list of properties on the families of Wick powers $\Phi^k$ and of Wick monomials $\Phi^{\text{\ul{k}}}$. Yet, there is no condition which links together polynomial expressions of the fields which are not functionally independent. As an example, consider the family of Wick powers $\{\wick{\Phi^k}_H\}_k$ defined in Section \ref{Section: Existence of Wick powers}.
Let $(M;h)\in\mathsf{Obj}(\bkgg)$ and let $\omega_2\in\Gamma_{\mathrm{c}}^{2}[M;h]$ -- \textit{cf.} equation \eqref{Eqn: compactly supported section bundle of fixed order}.
Moreover let $X\in\Gamma(TM)$ and consider $\operatorname{div}(\omega_2\otimes X)=\nabla_X\omega_2+\operatorname{div}(X)\omega_2\in\Gamma_{\operatorname{c}}^{2}[M;h]$.
Setting $\psi:=j_\infty\varphi$, it reads locally $\langle j_\infty\varphi^{[\otimes]2},\operatorname{div}(\omega_2\otimes X)\rangle=\psi^\alpha\psi^\beta\nabla_a(X^a(\omega_{2})_{\alpha\beta})$.
A direct computation gives
\begin{align*}
	\wick{\Phi^2}_H[M;h](-\operatorname{div}_g(\omega_2 \otimes X),P,\varphi)&=
	-\exp\big[\Upsilon_{W_P}\big]\int_M\mu_g\langle j_\infty\varphi^{[\otimes]2},\operatorname{div}_g(\omega_2 \otimes X)\rangle\\&=
	-\int_M\mu_g\,\langle j_\infty\varphi^{[\otimes 2]}+
	\textrm{$j_\infty$}
	[W_P],\operatorname{div}_g(\omega_2 \otimes X)\rangle\,.
\end{align*}
where $P\in\operatorname{Par}[M;h]$ and $\varphi\in \mathcal{E}(M)$.
Applying Stokes' theorem, one obtains
\begin{align}
	\nonumber
	\wick{\Phi^2}_H[M;h](-\operatorname{div}_g(\omega_2 \otimes X),P,\varphi)&=
	\int_M\mu_g\,\langle2j_\infty\varphi[\otimes]\nabla_Xj_\infty\varphi+2
	\textrm{$j_\infty$}
	[\nabla^{(1)}_XW_P],\omega_2\rangle\\\nonumber &=
	\exp\big[\Upsilon_{W_P}\big]\int_M\mu_g\langle 2j_\infty\varphi[\otimes]\nabla_Xj_\infty\varphi,\omega_2\rangle=
	2\Phi[M;h](\nabla_X\varphi\lrcorner \omega_2,P,\varphi)\\&=
	\label{Eqn: preliminary example for Leibniz rule}
	\big\langle\Phi^2[M;h](\omega_2,P)^{(1)}[\varphi],\nabla_X\varphi\big\rangle\,,
\end{align}
where the last equality is a consequence of Equation \eqref{Eqn: derivative of Wick powers with derivative}.
With a slight abuse of notation we denoted with $\nabla_Xj_\infty\varphi$ the covariant derivative along $X$ of $j_\infty\varphi$ with respect to the unique connection obtained by lifting to $J_\infty(M)$ that of Levi-Civita over $M$.
The symbol $\nabla^{(1)}\vcentcolon=\nabla\otimes\mathrm{Id}$ denotes the Levi-Civita connection acting on the first base point of $W_P$. Since $W_P$ is symmetric, it holds $\nabla_X[W_P]=2[\nabla_X^{(1)}W_P]$.

From Equation \eqref{Eqn: preliminary example for Leibniz rule}, one can infer that the Wick ordered expressions $\wick{\varphi^2}_H$ and $\wick{\varphi\nabla_a\varphi}_H$ are not independent, rather $\nabla_a\wick{\varphi^2}_H=2\wick{\varphi\nabla_a\varphi}_H$.
On account of Theorem \ref{Theorem: Unicity of Wick powers I} this constraint may not be implemented in a general family of Wick powers $\{\Phi^k\}_k$. The Leibniz rule discards these scenarios.
\begin{Definition}[Leibniz rule]\label{Def: Leibniz rule}
	A family of Wick powers $\{\Phi^k\}_{k\in\mathbb{N}}$ is said to satisfy the {\em Leibniz rule} if, for all $(M,h)\in\mathsf{Obj}(\bkgg)$, $\omega_k\in\Gamma_{\mathrm{c}}^{k}[M;h]$, $X\in\Gamma(TM)$ it holds
	\begin{align}\label{Eqn: Leibniz rule for Wick powers}
		\Phi^k[M;h](-\operatorname{div_g}(\omega_k\otimes X),P,\varphi)=
		\big\langle\Phi^k[M,h](\omega_k,P)^{(1)}[\varphi],\nabla_X\varphi\big\rangle\,,
	\end{align}
	for all $P\in\operatorname{Par}[M;h]$ and for all $\varphi\in \mathcal{E}(M)$.
	Here $\operatorname{div}_g(\omega_k\otimes X)\vcentcolon=\nabla_X\omega_k+\operatorname{div}_g(X)\omega_k\in\Gamma_{c}^{k}[M;h]$.
	Similarly, a family of Wick monomial $\{\Phi^{\text{\ul{k}}}\}_{\text{\ul{k}}}$ is said to satisfy the {\em Leibniz rule} if, for all $(M;h)\in\mathsf{Obj}(\bkgg)$, $\ell\in\mathbb{N}$, $\underline{\mathrm{k}}=(k_1,\ldots,k_\ell)\in\mathbb{Z}_+^\ell$, $\omega_{k_j}\in\Gamma_{\mathrm{c}}^{k_j}[M;h]$ for $j\in\{1,\ldots,\ell\}$, $X\in\Gamma(TM)$ it holds
	\begin{multline}\label{Eqn: Leibniz rule for Wick monomials}
		\Phi^{\text{\ul{k}}}[M;h](\omega_{k_1}\otimes\ldots\otimes-\operatorname{div}_g(X\otimes\omega_{k_j})\otimes\ldots\otimes\omega_{k_\ell},P,\varphi)\\
		=\big\langle\Phi^{\text{\ul{k}}}[M;h](\omega_{k_1}\otimes\ldots\otimes\omega_{k_\ell},P)^{(1)}[\varphi],\nabla_X\varphi\big\rangle\,.
	\end{multline}
	for all $P\in\operatorname{Par}[M;h]$ and for all $\varphi\in \mathcal{E}(M)$.
\end{Definition}
A straightforward application of Equation \eqref{Eqn: preliminary example for Leibniz rule} shows that the family of Wick powers $\{\wick{\Phi^k}_H\}_k$ satisfies Definition \ref{Def: Leibniz rule}.
Nevertheless the construction discussed in Section \ref{Section: Existence E-product of Wick of Wick powers} is less explicit and, in principle, we should repeat the whole argument in order to show that, at each order in the iterative process, we can adjust the construction so that the corresponding family of Wick monomials $\{\Phi^{\text{\ul{k}}}\}_{\text{\ul{k}}}$ satisfies the Leibniz rule as per Definition \ref{Def: Leibniz rule}. Yet the same problem in the Lorentzian case has been tackled in \cite[Prop. 3.1]{Hollands-Wald-05} and, since switching to the Riemannian case, lead to no changes, we omit the proof.
\begin{proposition}\label{Prop: existence of Wick monomials with Leibniz rule}
	Let $\{\Phi^k\}_k$ be a family of Wick powers which satisfies the Leibniz rule as per Definition \ref{Def: Leibniz rule}.
	Then there exists a family of Wick monomials $\{\Phi^{\text{\ul{k}}}\}_{\text{\ul{k}}}$ associated to $\{\Phi^k\}_k$ which satisfies the Leibniz rule as well.
\end{proposition}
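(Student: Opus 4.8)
The plan is to run the same induction on the length $\ell=\ell(\underline{\mathrm{k}})$ used to construct the E-product in Section \ref{Section: Existence E-product of Wick of Wick powers}, carrying the Leibniz rule along as an additional constraint at every step; within each fixed length I would further induct on the total degree $|\underline{\mathrm{k}}|=\sum_i k_i$, exactly as in the proof of Theorem \ref{Theorem: Unicity of Wick powers I}. For $\ell=1$ the Wick monomials coincide with the given Wick powers $\{\Phi^k\}_k$, which satisfy the Leibniz rule of Definition \ref{Def: Leibniz rule} by hypothesis, so the base case is immediate. For the inductive step I would assume that a family of Wick monomials satisfying \eqref{Eqn: Leibniz rule for Wick monomials} has been produced for all lengths $\leq n$ and construct the length-$(n+1)$ members so as to preserve it.

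First I would check that the Leibniz rule holds automatically \emph{away from the diagonal}. The prescription $\Phi_0^{\underline{\mathrm{k}}}$ of \eqref{Eqn: outside diagonal T-product} is fixed by support factorization as a finite sum of products $\Phi^{\underline{\mathrm{k}}_I}\cdot\Phi^{\underline{\mathrm{k}}_{I^\mathrm{c}}}$ of lower-length monomials. Two observations make \eqref{Eqn: Leibniz rule for Wick monomials} descend to this sum: first, replacing $\omega_{k_j}$ by $-\operatorname{div}_g(X\otimes\omega_{k_j})$ is the action of a first order differential operator and hence does not enlarge $\operatorname{supp}(\omega_{k_j})$, so it is compatible with the open cover $\{C_I\}$ and with any subordinate partition of unity; second, the first functional derivative is a derivation for the product $\cdot$ of \eqref{Eqn: Algebra product}, since that product is the exponential of a bidifferential contraction with which $\partial_\varphi$ commutes. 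Combining these two facts with the inductive hypothesis, the failure of \eqref{Eqn: Leibniz rule for Wick monomials} for $\Phi_0^{\underline{\mathrm{k}}}$ telescopes to zero on $M^{n+1}\setminus\operatorname{Diag}(M^{n+1})$.

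Consequently the \emph{Leibniz anomaly} $\mathcal{A}^{\underline{\mathrm{k}}}$, defined as the difference of the two sides of \eqref{Eqn: Leibniz rule for Wick monomials} for any admissible extension $\Phi^{\underline{\mathrm{k}}}$ of $\Phi_0^{\underline{\mathrm{k}}}$, is a distribution supported on $\operatorname{Diag}(M^{n+1})$. Differentiating $\mathcal{A}^{\underline{\mathrm{k}}}$ once in $\varphi$ and using the functional-derivative axiom \eqref{Eqn: derivative of Wick monomials with derivatives}, I would argue — paralleling Theorem \ref{Theorem: Unicity of Wick powers I} — that $\langle(\mathcal{A}^{\underline{\mathrm{k}}})^{(1)}[\varphi],\psi\rangle$ is governed by the Leibniz anomaly at total degree $|\underline{\mathrm{k}}|-1$, which vanishes by the inner induction; hence $\mathcal{A}^{\underline{\mathrm{k}}}$ is field-independent, i.e. proportional to $\operatorname{Id}_{\Gamma_{\mathrm{eq}}}$, and it is local, covariant and scales almost homogeneously with the weight of $\Phi^{\underline{\mathrm{k}}}$. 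By Lemma \ref{Lemma: extension} and Theorem \ref{Theorem: uniqueness of E-product I} the admissible extensions form an affine space modelled on the counterterms $\Phi^{\underline{j}}(c_{k_I-j_I}\lrcorner\,\cdot\,)$ of \eqref{Eqn: ambiguities of E-product}; I would then compute the variation of the anomaly under such a shift and, using that the lower monomials already obey the Leibniz rule, reduce the cancellation of $\mathcal{A}^{\underline{\mathrm{k}}}$ to solving a triangular linear system for the coefficients $c_{k_I-j_I}$.

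The main obstacle is precisely this last solvability statement: one must show that $\mathcal{A}^{\underline{\mathrm{k}}}$ lies in the range of the correction map, that is, that it is realised as the Leibniz-violation produced by an admissible finite renormalization of matching scaling degree, symmetry and covariance. Here the argument is structural rather than formal: the anomaly is built from a total divergence and is therefore a ``descendant'' of the very monomials entering the recursion, so its weight and tensorial type are exactly those of an allowed counterterm, and the system can be solved order by order in $|\underline{\mathrm{k}}|$. This is the content of \cite[Prop. 3.1]{Hollands-Wald-05}, and I would, as the authors do, invoke it rather than reproduce the combinatorial bookkeeping: its proof uses only locality, covariance, almost homogeneous scaling and support factorization, none of which is sensitive to the signature of the metric, so it transports verbatim to the present Riemannian setting.
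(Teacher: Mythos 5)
Your proposal is correct and takes essentially the same route as the paper: the paper omits the proof altogether, invoking \cite[Prop. 3.1]{Hollands-Wald-05} with exactly the justification you give in your final paragraph, namely that the argument rests only on locality, covariance, scaling and support factorization and is therefore insensitive to the signature. The inductive scaffolding you add (off-diagonal verification, diagonal-supported Leibniz anomaly, absorption into admissible counterterms) is a faithful reconstruction of the Hollands--Wald argument rather than a different proof, so there is nothing to flag.
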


\noindent To conclude we focus on the extension of Theorem \ref{Theorem: Unicity of Wick powers II}.
\begin{proposition}\label{Prop: constraint on Wick monomials ambiguities due to Leibniz rule}
	Let $\{\Phi^{\text{\ul{k}}}\}_{\text{\ul{k}}}$ be a family of Wick monomials which satisfies the Leibniz rule -- \textit{cf.} Definitions \ref{Def: Wick monomials with derivatives}-\ref{Def: Leibniz rule}.
	Let $c_{k_I-j_I}[M;h]\in\Gamma^{k_I-j_I}[M,h]$ be the tensor coefficients introduced in Theorems \ref{Theorem: uniqueness of E-product I}-\ref{Theorem: uniqueness of E-product II}.
	Then $c_{k_I-j_I}[M;h]$ is covariantly constant, that is $\nabla c_{k_I-j_I}[M;h]=0$.
\end{proposition}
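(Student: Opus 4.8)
The plan is to exploit the fact that the Leibniz rule holds for \emph{both} families of Wick powers whose difference is measured by the coefficients $c_{k-j}[M;h]$, and to show that the only way the two Leibniz identities can be compatible is that each $c_{k-j}$ be covariantly constant. I first treat the coefficients of Theorem \ref{Theorem: Unicity of Wick powers I} (the Wick-power case), the E-product coefficients of Theorems \ref{Theorem: uniqueness of E-product I}--\ref{Theorem: uniqueness of E-product II} being handled identically through the trivial-partition term \eqref{Eqn: counterterm for trivial partition}. Fix $k\geq 2$ and let $\{\widehat{\Phi}^k\}_k,\{\Phi^k\}_k$ be two families of Wick powers both satisfying the Leibniz rule \eqref{Eqn: Leibniz rule for Wick powers} and related by the ambiguity formula \eqref{Eqn: ambiguities for Wick powers}. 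Subtracting the Leibniz identity for $\Phi^k$ from that for $\widehat{\Phi}^k$ and inserting \eqref{Eqn: ambiguities for Wick powers} on both sides, using that the functional derivative commutes with the linear, $\varphi$-independent contraction by $c_{k-j}$, yields for all $\omega_k\in\Gamma_{\mathrm{c}}^{k}[M;h]$, $X\in\Gamma(TM)$, $P\in\operatorname{Par}[M;h]$ and $\varphi\in\mathcal{E}(M)$
\begin{equation*}
  \sum_{j=0}^{k-2}\binom{k}{j}\,\Phi^j[M;h]\big(c_{k-j}[M;h]\lrcorner(-\operatorname{div}_g(\omega_k\otimes X)),P,\varphi\big)
  =\sum_{j=0}^{k-2}\binom{k}{j}\,\big\langle\Phi^j[M;h](c_{k-j}[M;h]\lrcorner\omega_k,P)^{(1)}[\varphi],\nabla_X\varphi\big\rangle .
\end{equation*}

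The heart of the argument is a product-rule identity for the contraction. Since $\nabla$ is a derivation, combining $\nabla_X(c_{k-j}\lrcorner\omega_k)=(\nabla_X c_{k-j})\lrcorner\omega_k+c_{k-j}\lrcorner\nabla_X\omega_k$ with $\operatorname{div}_g(\eta\otimes X)=\nabla_X\eta+\operatorname{div}_g(X)\eta$ gives
\begin{equation*}
  c_{k-j}[M;h]\lrcorner\operatorname{div}_g(\omega_k\otimes X)
  =\operatorname{div}_g\big((c_{k-j}[M;h]\lrcorner\omega_k)\otimes X\big)-(\nabla_X c_{k-j}[M;h])\lrcorner\omega_k .
\end{equation*}
Inserting this into the left-hand side, the term carrying $\operatorname{div}_g((c_{k-j}\lrcorner\omega_k)\otimes X)$ is rewritten by applying the Leibniz rule \eqref{Eqn: Leibniz rule for Wick powers} to each $\Phi^j$ (valid by hypothesis), which reproduces exactly $\langle\Phi^j(c_{k-j}\lrcorner\omega_k,P)^{(1)}[\varphi],\nabla_X\varphi\rangle$; these cancel the entire right-hand side. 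What survives is the clean relation
\begin{equation*}
  \sum_{j=0}^{k-2}\binom{k}{j}\,\Phi^j[M;h]\big((\nabla_X c_{k-j}[M;h])\lrcorner\omega_k,P,\varphi\big)=0 ,
\end{equation*}
holding for all $\omega_k,X,P,\varphi$.

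I then conclude by linear independence of Wick powers of distinct orders. Iterating axiom $(3)$ of Definition \ref{Def: Wick powers with derivatives} (Equation \eqref{Eqn: derivative of Wick powers with derivative}) shows that $\Phi^j(\eta,P,\varphi)$ is polynomial of degree at most $j$ in $\varphi$, with non-trivial top homogeneous part; hence successive functional derivatives evaluated at $\varphi=0$ isolate the individual summands. Taking $k-2$ derivatives singles out the $j=k-2$ term and forces $(\nabla_X c_{2}[M;h])\lrcorner\omega_k=0$ for all $\omega_k,X$, i.e. $\nabla c_2[M;h]=0$; descending in $j$ and using the vanishing already obtained for the lower coefficients gives $\nabla c_{k-j}[M;h]=0$ for every $j$, which is the claim. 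For Wick monomials the same computation applies verbatim to the Leibniz rule \eqref{Eqn: Leibniz rule for Wick monomials} and to the ambiguity \eqref{Eqn: ambiguities of E-product}, the nontrivial information again residing in the trivial-partition coefficients \eqref{Eqn: counterterm for trivial partition}.

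The main obstacle I anticipate is purely the bookkeeping of the jet-bundle contractions $\lrcorner$ and of the symmetrizations, so that the product rule for $\nabla$ and the cancellation against the Leibniz rule carry through cleanly for the vector-valued coefficients; the conceptual steps, namely the product-rule identity and the graded linear independence, are then immediate once the notation is under control.
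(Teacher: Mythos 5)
Your proof is correct, and its engine is the same as the paper's: both exploit that the two prescriptions related by the ambiguity coefficients each satisfy the Leibniz rule, move the covariant derivative onto the coefficient via the product rule for $\nabla$ and the contraction $\lrcorner$, and conclude by nondegeneracy of the pairing against arbitrary $\omega_k$. The organization differs, though, in a way worth noting. The paper argues by induction on $\ell,k_1,\ldots,k_\ell$ and only writes out the base case $\ell=1$, $k=2$: there the correction $C_2$ is a c-number functional, so imposing the Leibniz rule and axiom $(3)$ makes the two sides collapse to $C_2(\operatorname{div}_g(X\otimes\omega))=0$, and a single integration by parts (Stokes) gives $\nabla c_2=0$; at higher orders the inductive hypothesis kills all terms except the new $\Phi^0(c_k\lrcorner\,\cdot\,)$ contribution, so every step reduces to that same c-number computation. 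You instead work at fixed general $k$, derive the full identity $\sum_{j=0}^{k-2}\binom{k}{j}\Phi^j\big((\nabla_X c_{k-j})\lrcorner\omega_k\big)=0$ by applying the Leibniz rule of the reference family to each correction term $\Phi^j(c_{k-j}\lrcorner\omega_k)$ (this is where your pointwise product-rule identity is genuinely needed, whereas the paper only needs its integrated $j=0$ instance), and then peel off the coefficients one at a time by iterated functional derivatives, descending from $j=k-2$. Two small points of care in your route: the derivative-isolation must indeed proceed from the top down, since after $m<k-2$ derivatives the surviving terms with $j>m$ do not vanish at $\varphi=0$ (e.g. $\wick{\Phi^{2k}}_H(f,P,0)\neq 0$), which your descent handles correctly; and the final nondegeneracy step uses that jets of smooth functions span the fibres of $J_\infty(M)$, so vanishing of the symmetrized pairings forces $\nabla_X c_{k-j}=0$. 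What each approach buys: the paper's induction is shorter per step but leaves the general case entirely implicit ("all others following suit"), while your argument makes the general case explicit and shows precisely where the Leibniz rule of the lower-order Wick powers enters; both extend to the monomial case through the trivial-partition coefficients in the same way.
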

\begin{proof}
	The proof goes by induction with respect to the indices $\ell, k_1,\ldots,k_\ell$ appearing in Equation \eqref{Eqn: ambiguities of E-product}. For simplicity in the notation we consider the case $\ell=1$, $k=2$, all others following suit. Equation \eqref{Eqn: ambiguities of E-product} reduces to \eqref{Eqn: ambiguities for Wick powers}, namely
	\begin{align*}
		\Phi^2[M;h](\omega)=\wick{\Phi^2}_H[M;h](\omega)+C_2[M;h](\omega)\,,
	\end{align*}
	for all $\omega\in\Gamma_{\operatorname{c}}^{k}[M;h]$, where $C_2[M;h](\omega)=\int_M\mu_g\, c_2[M,h]\lrcorner\omega$, being $c_2[M,h]\in\Gamma^{k}[M;h]$.
	Imposing the Leibniz rule \eqref{Eqn: Leibniz rule for Wick powers} and using Equation \eqref{Eqn: derivative of Wick powers with derivative} as well as $\wick{\Phi^1}=\Phi^1=\Phi$ we find that, for all $\omega\in\Gamma_{\mathrm{c}}^{k}[M,h]$ and for all $X\in\Gamma(TM)$,
	\begin{align*}
		0=C_2[M;h](\operatorname{div}_g(X\otimes\omega))=
		\int_M\mu_gc_2[M;h]\lrcorner\operatorname{div}_g(X\otimes\omega)=-
		\int_M\mu_g\nabla_X(c_2[M,h])\lrcorner\omega\,.
	\end{align*}
	Since $\omega$ is arbitrary, it descends $\nabla_Xc_2[M,h]$ for all $X$, that is $\nabla c_2[M;h]=0$, which is the sought statement.
\end{proof}

\subsubsection{Principle of Perturbative Agreement}\label{Subsubsection: PPA}
The second axiom we impose in addition to those in Definition \ref{Def: Wick powers with derivatives} and \ref{Def: Wick monomials with derivatives} goes under the name of principle of perturbative agreement (PPA).
This has been introduced in \cite{Hollands-Wald-05}, see also \cite{Drago-Hack-Pinamonti-2016,Zahn:2013ywa} and it is essential to guarantee that, in the construction of the algebra of Wick polynomials, one can include equivalently any term in the Lagrangian, which has a quadratic dependence on the underlying fields, either in the free or in the interacting part of the Lagrangian.

The original formulation of the PPA on Lorentzian backgrounds exploits the perturbative approach to interacting field theories -- \textit{cf.} \cite{Hollands-Wald-05}.
The same formulation in the Riemannian setting is not immediately available because, as we shall see in Section \ref{Section: Interacting models}, the formulation of the perturbative approach to interacting theories seems to require additional structures.
Nevertheless in \cite{Drago-Hack-Pinamonti-2016} an equivalent formulation to the PPA has been given and this turns out to be more suitable to be adapted to the Riemannian setting. In this framework the PPA becomes a natural requirement which strengthens the covariance axiom -- \textit{cf.} Definition \ref{Def: Euclidean locally covariant theory}.

In particular, let $(M,h),(M,h_s)\in\mathsf{Obj}(\bkgg)$ be such that $\{h_s\}_{s\in\mathbb{R}}$ is a smooth compactly supported family of variations of $h$ as per definition \ref{Def: smooth compactly supported d-dimensional family of variations}. In this situation we may consider the algebras $\mathcal{A}[M;h],\mathcal{A}[M;h_s]$ as per definition \ref{Def: locally covariant algebra of regular observables} and proposition \ref{Prop: local and covariant algebra of local polynomials}.
For perturbations $(M,h_s)$ of $(M,h)$ arising from a diffeomorphism $\chi\colon M\to M$ the requirement of covariance on $\mathcal{A}$ -- \textit{cf.} Definition \ref{Def: Euclidean locally covariant theory} -- yields a $\ast$-isomorphism between $\mathcal{A}[M;h]$ and $\mathcal{A}[M;h_s]$. Heuristically speaking, the PPA requires that a similar $\ast$-isomorphism exists also in the case of an arbitrary compactly supported perturbation $(M,h_s)$ of $(M,h)$ -- \textit{cf.} Definition \ref{Def: PPA}.
This implies in particular that, whenever the ambiguities in defining the algebra $\mathcal{A}[M,h]$  have been fixed -- \textit{cf.} Proposition \ref{Prop: local and covariant algebra of local polynomials} -- the same happens for those arising in the definition of $\mathcal{A}[M,h_s]$. This is a rather strong requirement because the Hadamard parametrices $H_s$, $H$ associated with the elliptic operators $E_s$ and $E$ are different -- \textit{cf.} Remark \ref{Remark: Hadamard representation}. Therefore, the PPA cannot be imposed naively, meaning that it is not possible to compare directly the algebras $\mathcal{A}[M;h]$ and $\mathcal{A}[M;h_s]$. On the contrary one has to consider a Taylor expansion in $s$ of $h_s$, regarding the parameter $s$ as formal and the PPA can be formulated as a requirement on the Wick monomial $\Phi^{\text{\ul{k}}}[M;h]$ which generate $\mathcal{A}[M;h]$ -- \textit{cf.} Definition \ref{Def: PPA}.

In the following we discuss the PPA for a scalar field theory on a Riemannian manifold along the lines of \cite{Hollands-Wald-05, Drago-Hack-Pinamonti-2016}.

\paragraph{The PPA for the regular algebra.}

For definiteness, let $(M,h)\in\mathsf{Obj}(\bkgg)$. In the following $h_s$, $s\in\mathbb{R}$ denotes a smooth and compactly supported family of variations of $h$ -- \textit{cf.} Definition \ref{Def: smooth compactly supported d-dimensional family of variations}.

The PPA calls for a comparison between the algebras $\mathcal{A}[M;h]$, $\mathcal{A}[M,h_s]$.
To this end, let us start from the regular counterpart, $\mathcal{A}_{\mathrm{reg}}[M,h]$, $\mathcal{A}_{\mathrm{reg}}[M,h_s]$ respectively.
The spaces of parametrices $\operatorname{Par}[M;h]$ and $\operatorname{Par}[M;h_s]$ associated with $E$ and $E_s$ turn out to be isomorphic.
This is a consequence of the following result, which is the Euclidean counterpart of a well-known construction in the Lorentzian framework -- \textit{cf.} \cite{Dappiaggi-Drago-16,Drago-Hack-Pinamonti-2016,Hollands-Wald-05}.
\begin{proposition}\label{Prop: 1-1 correspondence between E-parametrices and Eg-parametrices}
	There exists an isomorphism $R_s\colon\operatorname{Par}[M;h]\to\operatorname{Par}[M;h_s]$ of affine spaces between $\operatorname{Par}[M;h]$ and $\operatorname{Par}[M;h_s]$.
\end{proposition}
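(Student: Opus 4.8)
The plan is to reduce the statement to a question of affine geometry and then to supply a canonical base-point correspondence built from the pseudodifferential calculus. First I would record that both $\operatorname{Par}[M;h]$ and $\operatorname{Par}[M;h_s]$ are non-empty: each $(M;h_s)\in\obj(\bkgg)$ determines an elliptic, formally self-adjoint operator $E_s$ of Laplace type, so \cite[Th. 4.4]{Wells} produces a symmetric parametrix exactly as for $E$. Moreover, by the defining property \eqref{Eqn: defining property of parametrix}, the difference of any two parametrices of the \emph{same} operator is a smoothing operator, i.e. an element of $\mathcal{E}(M\times M)$; hence $\operatorname{Par}[M;h]$ and $\operatorname{Par}[M;h_s]$ are affine spaces modelled on the \emph{same} vector space $\mathcal{E}_{\mathrm{sym}}(M\times M)$ of symmetric smooth kernels, a space insensitive to the metric. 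Consequently, to produce an isomorphism of affine spaces it suffices to attach a single point $P_s^0\in\operatorname{Par}[M;h_s]$ to a fixed $P^0\in\operatorname{Par}[M;h]$ and to declare the linear part of $R_s$ to be $\operatorname{Id}_{\mathcal{E}_{\mathrm{sym}}(M\times M)}$, setting $R_s(P^0+w):=P_s^0+w$ for $w\in\mathcal{E}_{\mathrm{sym}}(M\times M)$.

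The content therefore lies in a natural choice of $P_s^0$, which is the Euclidean analogue of the Lorentzian M\o ller map. Let $\widetilde P$ be the properly supported parametrix operator associated with $P^0$, and set $V_s:=E_s-E$, a differential operator whose coefficients are supported in the fixed compact set where $h_s\neq h$. Then $E_s\widetilde P=\operatorname{Id}+V_s\widetilde P$ is a properly supported pseudodifferential operator of order $0$ whose principal symbol is $\sigma_2(E_s)/\sigma_2(E)$, a strictly positive function equal to $1$ outside $\operatorname{supp}V_s$; in particular it is elliptic of order $0$. By the symbolic calculus \cite{Shubin} it admits a properly supported parametrix $(E_s\widetilde P)^{-1}$, and I would set $\widetilde P_s:=\widetilde P\,(E_s\widetilde P)^{-1}$. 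A direct composition gives $E_s\widetilde P_s\equiv\operatorname{Id}$ modulo smoothing operators, and since $E_s$ is elliptic this right parametrix is automatically two-sided; replacing $\widetilde P_s$ by its symmetric part with respect to $\langle\,,\,\rangle_{g_s}$ (legitimate because $E_s$ is formally self-adjoint and real) yields the sought base point $P_s^0\in\operatorname{Par}[M;h_s]$. This construction is manifestly smooth in the parameter $s$, precisely the feature needed for the Taylor-expanded formulation of the PPA.

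It then remains to check that $R_s$ is well defined and bijective. The only choice made above is that of the order-$0$ parametrix $(E_s\widetilde P)^{-1}$, which is unique modulo $\mathcal{E}(M\times M)$; hence $P_s^0$ is determined up to an element of the model space, and any two choices differ by a translation absorbed into the affine structure. Bijectivity is immediate since the linear part is the identity, the inverse $R_s^{-1}$ being the affine map with base-point pair $(P_s^0,P^0)$; equivalently one repeats the construction with the roles of $E$ and $E_s$ interchanged, noting that $E\widetilde P_s=\operatorname{Id}-V_s\widetilde P_s$ is likewise elliptic of order $0$. Symmetry and reality of the kernels are preserved at each step, so $R_s$ maps $\operatorname{Par}[M;h]$ isomorphically onto $\operatorname{Par}[M;h_s]$.

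The main obstacle, and the point at which the argument departs from its Lorentzian model, is the inversion of $\operatorname{Id}+V_s\widetilde P$. In the globally hyperbolic case the analogous operator is inverted by a Neumann series that terminates on each test function thanks to the support properties of the retarded and advanced fundamental solutions together with the compact support of $V_s$. No such support structure is available in the Riemannian regime, and because a generic variation moves the metric the perturbation $V_s$ has order $2$, so $V_s\widetilde P$ is merely of order $0$ and the Neumann series does not converge. The way around this is exactly the ellipticity of $\operatorname{Id}+V_s\widetilde P$ noted above: one inverts it within the pseudodifferential calculus modulo smoothing, and the resulting ambiguity, being a smooth kernel, lives in the model space $\mathcal{E}(M\times M)$ and is harmlessly absorbed by the affine structure. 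Verifying that this symbolic inversion can be carried out uniformly and smoothly in $s$ is the one genuinely technical step.
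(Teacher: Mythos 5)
Your proposal is correct, and its mathematical core coincides with the paper's proof: since $\operatorname{Par}[M;h]$ and $\operatorname{Par}[M;h_s]$ are both non-empty (by \cite[Th. 4.4]{Wells} applied to $E$ and to $E_s$) and both are affine spaces modelled on the same vector space of symmetric smooth kernels in $\mathcal{E}(M\times M)$, the paper simply fixes arbitrary base points $\widehat{P}\in\operatorname{Par}[M;h]$, $\widehat{P}_s\in\operatorname{Par}[M;h_s]$ and sets $R_sP:=\widehat{P}_s+(P-\widehat{P})$, checking injectivity and surjectivity directly --- exactly your first two paragraphs. Where you diverge is in insisting on a \emph{canonical} target base point built by pseudodifferential calculus, $\widetilde{P}_s:=\widetilde{P}\,(E_s\widetilde{P})^{-1}$ with $E_s\widetilde{P}=\operatorname{Id}+V_s\widetilde{P}$ elliptic of order $0$; the paper does not need this for the proposition itself, and instead defers the smooth dependence on $s$ to the remark immediately following the proof, where $\widehat{P}_s$ is chosen smooth in $s$ by invoking Remark \ref{Remark: dependence on s of the parametrix} (\textit{i.e.} the construction of parametrices as pseudodifferential operators, \cite[Thm. 5.1]{Shubin}). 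Your construction is nevertheless sound (the order-$0$ symbol $\sigma_2(E_s)/\sigma_2(E)$ is strictly positive, the right parametrix of an elliptic operator is two-sided, and the ambiguity modulo smoothing kernels is absorbed by the affine structure), and it buys something the paper only obtains later: expanding $(\operatorname{Id}+V_s\widetilde{P})^{-1}$ as $\sum_n(-V_s\widetilde{P})^n$ reproduces, at the operator level, precisely the formal series $P_{[[s]]}=\sum_{n\geq0}(-PG_s)^nP$ that the paper introduces in Proposition \ref{Prop: well-definitenss of perturbative expansion of beta} when expanding $\beta_s$; so your route packages into the proof of this proposition both the smoothness in $s$ and the seed of the perturbative expansion, at the cost of the technical burden (which you correctly flag) of carrying out the symbolic inversion uniformly in $s$.
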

\begin{proof}
	Let $\widehat{P}\in\operatorname{Par}[M;h]$ and $\widehat{P}_s\in\operatorname{Par}[M;h_s]$.
	We define an isomorphism $R_s\colon\operatorname{Par}[M;h]\to\operatorname{Par}[M;h_s]$ by setting $R_sP:=\widehat{P}_s+(P-\widehat{P})$.
	Since $P,\widehat{P}\in\operatorname{Par}[M;h]$ we have $P-\widehat{P}\in\mathcal{E}(M\times M)$, therefore $R_sP\in\operatorname{Par}[M;h_s]$.
	Moreover $R_s$ is injective because, $\forall P,Q\in\operatorname{Par}[M;h]$ such that $R_sP=R_sQ$, it holds $\widehat{P}_s+P-\widehat{P}=\widehat{P}_s+Q-\widehat{P}$.
	Furthermore $R_s$ is surjective. For all $P_s\in\operatorname{Par}[M;h_s]$, one can write $P:=\widehat{P}+(P_s-\widehat{P}_s)\in\operatorname{Par}[M;h]$ and $R_sP=P_s$.
\end{proof}

On account of Remark \ref{Remark: dependence on s of the parametrix} we can choose $\widehat{P}_s$ in the proof of Proposition \ref{Prop: 1-1 correspondence between E-parametrices and Eg-parametrices} so that $\{P_s\}_s$ is a smooth family of parametrices -- \textit{cf.} Remark \ref{Remark: dependence on s of the parametrix}. This entails analogous smoothness properties for the map $R_s$.
Henceforth we shall implicitly choose $P_s$ smoothly dependent from $s$.

Due to the regularity of their elements, the algebras $\mathcal{A}_{\mathrm{reg}}[M;h]$ and $\mathcal{A}_{\mathrm{reg}}[M;h_s]$ are $\ast$-isomorphic as we establish in the following proposition.
\begin{proposition}\label{Prop: PPA for regular algebras}
	The algebras $\mathcal{A}_{\mathrm{reg}}[M;h]$, $\mathcal{A}_{\mathrm{reg}}[M,h_s]$ are $\ast$-isomorphic, the $\ast$-isomorphism being realized by $\beta_s\colon\mathcal{A}_{\mathrm{reg}}[M;h]\to\mathcal{A}_{\mathrm{reg}}[M,h_s]$ where, for all $F\in\mathcal{A}_{\mathrm{reg}}[M;h]$ and for all $P\in\operatorname{Par}[M;h]$,
	\begin{align}\label{Eqn: PPA maps for regular functionals}
		(\beta_s F)[P_s]=\exp\big[\Upsilon_{P_s-P}\big]F[P]\,,
	\end{align}
	where $P_s=R_sP$ has been defined in Proposition \ref{Prop: 1-1 correspondence between E-parametrices and Eg-parametrices}.
\end{proposition}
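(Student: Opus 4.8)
The plan is to verify, in order, that $\beta_s$ is well defined on $\mathcal{A}_{\mathrm{reg}}[M;h]$, that its image lies in $\mathcal{A}_{\mathrm{reg}}[M;h_s]$, that it is a unital $\ast$-homomorphism, and finally that it is bijective. The structural observation I would isolate first is that, since $R_sP=\widehat{P}_s+(P-\widehat{P})$ by Proposition~\ref{Prop: 1-1 correspondence between E-parametrices and Eg-parametrices}, the difference $D_s\vcentcolon=P_s-P=\widehat{P}_s-\widehat{P}$ is a \emph{fixed}, symmetric, real bidistribution which is \emph{independent of the base point} $P\in\operatorname{Par}[M;h]$. For well-definedness I would note that, although $D_s$ is genuinely singular, every $F[P]\in\mathcal{F}_{\operatorname{reg},P}[M;h]$ is regular, so each derivative $F[P]^{(2n)}[\varphi]\in\mathcal{D}(M^{2n})$ is smooth and compactly supported; hence every pairing $\langle D_s^{\otimes n},F[P]^{(2n)}[\varphi]\rangle$ in the expansion \eqref{Eqn: Gamma contraction operator} of $\exp[\Upsilon_{D_s}]$ is well defined, the sum terminates by polynomiality, and $\exp[\Upsilon_{D_s}]F[P]$ is again regular.

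Next I would establish equivariance of $\beta_sF$. Given $P_s,Q_s\in\operatorname{Par}[M;h_s]$ with preimages $P=R_s^{-1}P_s$ and $Q=R_s^{-1}Q_s$, the affine structure of $R_s$ yields $P_s-P=Q_s-Q=D_s$ together with $P_s-Q_s=P-Q$, the latter being smooth. Since all the operators $\Upsilon_K$ are second-order functional differential operators with field-independent kernels, they mutually commute and satisfy $\exp[\Upsilon_A]\exp[\Upsilon_B]=\exp[\Upsilon_{A+B}]$. Using the equivariance of $F$, namely $F[P]=\exp[\Upsilon_{P-Q}]F[Q]$ as in Definition~\ref{Def: locally covariant algebra of regular observables}, I would then compute
\begin{align*}
(\beta_sF)[P_s]=\exp[\Upsilon_{D_s}]F[P]=\exp[\Upsilon_{D_s}]\exp[\Upsilon_{P-Q}]F[Q]=\exp[\Upsilon_{P_s-Q_s}](\beta_sF)[Q_s]\,,
\end{align*}
which is exactly the equivariance condition over $\operatorname{Par}[M;h_s]$; thus $\beta_sF\in\mathcal{A}_{\mathrm{reg}}[M;h_s]$.

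Because the product on equivariant sections is fiberwise, the homomorphism property $\beta_s(F\cdot G)=\beta_sF\cdot\beta_sG$ reduces to the fiberwise intertwining $\exp[\Upsilon_{D_s}](F[P]\cdot_PG[P])=(\exp[\Upsilon_{D_s}]F[P])\cdot_{P_s}(\exp[\Upsilon_{D_s}]G[P])$. This is precisely the identity \eqref{Eqn: star-isomorphism between different algebras} of Proposition~\ref{Prop: star-isomorphism between different algebras}, with $P$ and $P_s$ playing the rôles of $Q$ and $P$ in that statement. I would stress that the proof of that identity is purely algebraic, relying only on the universal form \eqref{Eqn: Algebra product} of the products and on the regularity of the functionals, and is therefore insensitive to the fact that $P$ and $P_s$ are parametrices of the \emph{different} operators $E$ and $E_s$. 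Unitality is immediate since the unit is the constant functional $1$, which is annihilated by $\Upsilon_{D_s}$, so $\exp[\Upsilon_{D_s}]1=1$; and the $\ast$-property $\beta_s(F^*)=(\beta_sF)^*$ follows because $D_s$ is real and symmetric, whence $\exp[\Upsilon_{D_s}]$ commutes with complex conjugation, the involution acting fiberwise as in Proposition~\ref{Prop: regular algebra}.

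Finally, bijectivity I would obtain by exhibiting the explicit inverse $\gamma_s\colon\mathcal{A}_{\mathrm{reg}}[M;h_s]\to\mathcal{A}_{\mathrm{reg}}[M;h]$, $(\gamma_sG)[P]\vcentcolon=\exp[\Upsilon_{-D_s}]G[P_s]$, and checking $\gamma_s\circ\beta_s=\operatorname{id}$ and $\beta_s\circ\gamma_s=\operatorname{id}$ via $\exp[\Upsilon_{D_s}]\exp[\Upsilon_{-D_s}]=\operatorname{id}$. The one genuinely delicate point, which I would flag as the main obstacle, is the well-definedness of $\exp[\Upsilon_{D_s}]$: since $\widehat{P}_s$ and $\widehat{P}$ invert distinct elliptic operators, their difference $D_s$ need not be smooth, so the entire construction hinges on restricting to \emph{regular} functionals, whose derivatives are test functions that can be paired with the singular kernels $D_s^{\otimes n}$. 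This is precisely why the statement is confined to $\mathcal{A}_{\mathrm{reg}}$ and cannot be transported verbatim to the algebra $\mathcal{A}$ generated by local functionals.
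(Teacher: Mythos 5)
Your proof is correct and follows essentially the same route as the paper's: well-definedness from regularity and polynomiality, equivariance via the composition law $\exp[\Upsilon_A]\exp[\Upsilon_B]=\exp[\Upsilon_{A+B}]$ together with the identities $P_s-P=Q_s-Q$ and $P_s-Q_s=P-Q$ supplied by the affine structure of $R_s$, and the $\ast$-homomorphism property obtained by transporting Proposition \ref{Prop: star-isomorphism between different algebras} to the present setting, noting that its purely algebraic proof survives the kernel $P_s-P$ being merely distributional — exactly the adaptation the paper alludes to. The extra details you supply (base-point independence of $D_s$, the explicit inverse $\gamma_s$, unitality, and compatibility with the involution) are correct fillings-in of steps the paper leaves implicit.
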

\begin{proof}
	For all $F\in\mathcal{A}_{\mathrm{reg}}[M;h]$, the functional $(\beta_sF)(P_s)$ introduced in \eqref{Eqn: PPA maps for regular functionals} is well-defined on account of the regularity of $F$. In addition the map $P_s\mapsto(\beta_sF)[P_s]$ is equivariant.
	As a matter of fact, for $P_s=R_sP,Q_s=R_sQ\in\operatorname{Par}[M;h_s]$ we have
	\begin{align*}
		\alpha_{P_s}^{Q_s}(\beta_sF)(Q_s)=
		\exp\big[\Upsilon_{P_s-Q_s}\big]\exp\big[\Upsilon_{Q_s-Q}\big]F[Q]&=
		\exp\big[\Upsilon_{P_s-P+P-Q}\big]F[Q]\\&=
		\exp\big[\Upsilon_{P_s-P}\big]\alpha^Q_PF[Q]=
		(\beta_sF)(P_s)\,,
	\end{align*}
	where in the last equality we used the equivariance property of $F$, namely $F[P]=\alpha^Q_PF[Q]$ -- \textit{cf.} Definition \ref{Def: locally covariant algebra of regular observables}.
	The $\ast$-isomorphism can be proven adapting to the case in hand the analysis of Proposition \ref{Prop: star-isomorphism between different algebras}.
\end{proof}

\begin{remark}\label{Remark: PPA and regular local and covariant observable}
	Since $\beta_s$ is a $\ast$-isomorphism between $\mathcal{A}_{\mathrm{reg}}[M;h]$ and $\mathcal{A}_{\mathrm{reg}}[M;h_s]$ one may wonder whether it preserves local and covariant observables as per Definition \ref{Def: locally covariant observable}.
	This holds true in the following sense. Let $\mathcal{O}$ be a local and covariant observable -- \textit{cf.} Definition \ref{Def: locally covariant observable} -- such that $\mathcal{O}[M;h]\in\mathcal{A}_{\mathrm{reg}}[M;h]$ for all $(M;h)\in\mathsf{Obj}(\bkgg)$. A canonical example is $\mathcal{O}=\Phi$ as defined in Example \ref{Example: quantum fields} and \ref{Example: local and covariant linear observable with derivatives}. Considering the  same setting of Proposition \ref{Prop: 1-1 correspondence between E-parametrices and Eg-parametrices} an explicit computation yields
	\begin{align}\label{Eqn: PPA and regular local and covariant observable}
		\mathcal{O}[M;h_s](\omega_{(s)},P_s)=\beta_s\big[\mathcal{O}[M;h](\omega)\big](P_s)=\exp\big[\Upsilon_{P_s-P}\big]\mathcal{O}[M;h](\omega,P)\,,
	\end{align}
	where $\omega,\omega_{(s)}\in\Gamma_{\mathrm{c}}^{1}[M;h]$ are such that $\omega_{(s)}\mu_{g_s}=\omega\mu_g$. Thus, up to the change of volume measure, $\beta_s$ preserves local and covariant, regular observables.
	This example suggests also that one might consider working directly with densitized observables so to account for the change in the volume measure.
\end{remark}

\paragraph{The PPA for the full algebra.}

According to Proposition \ref{Prop: PPA for regular algebras}, $\beta_s$ is a $\ast$-isomorphism between $\mathcal{A}_{\mathrm{reg}}[M;h]$ and $\mathcal{A}_{\mathrm{reg}}[M;h_s]$, but it does not lift to a counterpart between $\mathcal{A}[M;h]$ and $\mathcal{A}[M,h_s]$.
This can be realized by a close scrutiny of the local Hadamard representation of $P$ and of $P_s=R_sP$ which shows that $P_s-P$ is not smooth, \textit{cf.} Remark \ref{Remark: Hadamard representation}.
Therefore $\Upsilon_{P_s-P}$ cannot be applied to a local and polynomial functional unless it lies $\mathcal{P}_{\mathrm{loc}}[M;h]\cap\mathcal{P}_{\mathrm{reg}}[M;h]$. Consequently $\beta_s$ cannot be lifted to $\mathcal{A}[M;h]$ or to $\mathcal{A}[M,h_s]$.

Notwithstanding, we can still require Equation \eqref{Eqn: PPA and regular local and covariant observable} to hold  true for $\mathcal{O}=\Phi^{\text{\ul{k}}}$.
Since this cannot be achieved exactly, the strategy is to expand Equation \eqref{Eqn: PPA and regular local and covariant observable} as a formal power series in $s$.
This leads to a hierarchy of equations which constraint $\Phi^{\text{\ul{k}}}$ -- \textit{cf.} Definition \ref{Def: PPA}.

To follow this line of thought, we need to prove that the expansion of $\beta_s$ as a perturbative series in $s$ is well-defined as a map $\beta_s\colon\mathcal{A}[M;h]\to\Gamma(\mathsf{E}[M;h])[[s]]$.
Here $\Gamma(\mathsf{E}[M;h])[[s]]$ denotes the $\ast$-algebra of formal power series in $s$ with coefficients in $\Gamma(\mathsf{E}[M;h])$ -- \textit{cf.} definitions \ref{Def: bundle over parametrices}-\ref{Def: locally covariant algebra of regular observables}.

\begin{proposition}\label{Prop: well-definitenss of perturbative expansion of beta}
	Let $\beta_s\colon\colon\mathcal{A}_{\mathrm{reg}}[M;h]\to\Gamma_{\mathrm{reg}}(\mathsf{E}[M;h])[[s]]$ be the linear operator obtained by expanding $\beta_s\colon\mathcal{A}_{\mathrm{reg}}[M;h]\to\mathcal{A}[M;h_s]$ as formal power series in $s$.
	Then the map $\beta_s$ can be extended to a counterpart, still denoted with $\beta_s$, from $\mathcal{A}[M;h]$ to $\Gamma(\mathsf{E}[M;h])[[s]]$.
\end{proposition}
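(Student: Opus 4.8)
The plan is to define $\beta_s$ on the generators of $\mathcal{A}[M;h]$ -- namely the Wick monomials $\Phi^{\underline{\mathrm{k}}}$ -- and to proceed order by order in the formal parameter $s$. Recall from Proposition \ref{Prop: PPA for regular algebras} that on regular functionals $(\beta_s F)[P_s]=\exp[\Upsilon_{P_s-P}]F[P]$ and that, by Proposition \ref{Prop: 1-1 correspondence between E-parametrices and Eg-parametrices}, $P_s-P=\widehat{P}_s-\widehat{P}=:d_s$ is a fixed (i.e.\ $P$-independent) distribution which is smooth on $M\times M\setminus\mathrm{Diag}(M^2)$ but singular on the diagonal. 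Writing any $F\in\mathcal{A}[M;h]$ in $H$-normal ordered form, $F[P]=\exp[\Upsilon_{W_P}]\phi_F$ with $\phi_F\in\mathcal{P}_{\operatorname{loc}}[M;h]$, one finds $(\beta_sF)[P_s]=\exp[\Upsilon_{W_{P_s}}]\exp[\Upsilon_{H_s-H}]\phi_F$ (the $\Upsilon$ being additive in their subscript and hence mutually commuting), so that the entire difficulty is isolated in the factor $\exp[\Upsilon_{H_s-H}]\phi_F$, i.e.\ in the difference of the Hadamard parametrices associated with $E$ and $E_s$ -- \textit{cf.}\ Remark \ref{Remark: Hadamard representation}. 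Since the $W_P$-part is smooth, it suffices to make sense of this factor as a formal power series in $s$ with coefficients in $\mathcal{P}[M;h]$.

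First I would expand in $s$: the coefficient of $s^N$ is a finite sum of contractions of the $s$-Taylor coefficients $d^{(n)}:=\frac{1}{n!}\partial_s^n\widehat{P}_s|_{s=0}$ against functional derivatives of the $\Phi^{\underline{\mathrm{k}}}$. Away from the total diagonal these contractions are well-defined, because $d_s$ (hence each $d^{(n)}$) is smooth there, and they are completely fixed by the regular formula of Proposition \ref{Prop: PPA for regular algebras}, by the support-factorization axiom of Definition \ref{Def: E-product}, and by the inductive data at lower orders in $s$ and in the number of factors. The only genuine task is therefore to extend the resulting distributions across the diagonal. This is precisely the extension problem already solved in Section \ref{Section: Existence E-product of Wick of Wick powers} for the E-product: the almost-homogeneous scaling of the parametrices under $h\mapsto h_\lambda$ carries over to a control of the scaling degree of each $d^{(n)}$, so that, after the scaling expansion of Theorem \ref{Theorem: properties of the scaling expansion}, the Brunetti--Fredenhagen extension theorem \cite{Brunetti:1999jn} together with the extension Lemma \ref{Lemma: extension} produce, at each order in $s$, well-defined coefficients in $\mathcal{P}[M;h]$. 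These assemble into sections of $\mathsf{E}[M;h]$ -- the residual $P$-dependence entering only through the smooth $W_P$ -- whence $\beta_sF\in\Gamma(\mathsf{E}[M;h])[[s]]$.

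To conclude I would check that the construction is linear, that it restricts to the regular operator of Proposition \ref{Prop: PPA for regular algebras} on $\mathcal{A}_{\mathrm{reg}}[M;h]$ (so that it is genuinely an extension), and that it is compatible with the $\ast$-algebra and formal-power-series structures; all of this follows, \emph{mutatis mutandis}, from the corresponding verifications for the E-product and from \cite{Hollands-Wald-05,Drago-Hack-Pinamonti-2016}. The main obstacle is exactly this extension-across-the-diagonal step: one must control the scaling degrees uniformly in the order $N$ of the $s$-expansion and verify the joint microlocal spectrum and almost-homogeneous scaling conditions in the variables $(s,x_1,\dots)$, so that the order-by-order prescription is consistent and independent of the auxiliary choices (partition of unity, cut-offs, reference parametrices). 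This parallels the Lorentzian analysis of \cite{Hollands-Wald-05,Drago-Hack-Pinamonti-2016}; the essential difference, as throughout this paper, is that in the elliptic setting the wavefront set of the parametrices is two-sided -- \textit{cf.}\ Equation \eqref{Eqn: Parametrix wave front set} -- so no one-sided/causal argument is available and the scaling-expansion extension must be used at every step.
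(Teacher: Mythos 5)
Your proposal is essentially correct, but it follows a genuinely different route from the paper. Where you Taylor-expand the fixed ($P$-independent) difference $d_s=\widehat{P}_s-\widehat{P}$ and, via the Hadamard splitting of Remark \ref{Remark: Hadamard representation}, isolate the singular content in $H_s-H$, the paper never differentiates the Hadamard family at all: it replaces $P_s$ by the explicit Neumann-type series
\begin{align*}
P_{[[s]]}\vcentcolon=\sum_{n\geq 0}(-PG_s)^nP\,,\qquad G_s\vcentcolon=E_s-E=O(s)\,,
\end{align*}
shows that $P_{[[s]]}$ is a formal parametrix for $E_s$, is formally symmetric, and differs from $P_s$ by terms which are smooth at every order in $s$, and then defines $(\beta_{[[s]]}F)(P_{[[s]]})=\exp[\Upsilon_{P_{[[s]]}-P}]F[P]$ -- \textit{cf.} Equation \eqref{Eqn: Moeller operator for parametrices}. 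The singular objects whose coinciding-point limits must be renormalized are then the explicit compositions $(PG_s)^nP$, and the paper observes that their renormalization is exactly the one already performed in constructing $\mathcal{A}[M;h]$, since these are the kernels arising in E-products of quadratic functionals. This buys three things your route does not give directly: the symmetry of the expansion is manifest; no new scaling-expansion analysis is needed, because the extension problem is literally the one of Section \ref{Section: Existence E-product of Wick of Wick powers} rather than an analogous one for the new distributions $\partial_s^nH_s|_{s=0}$; and the closed form of $(PG_s)^nP$ underlies the subsequent scaling-degree estimate $(\mathrm{D}-2)(n+1)+n\deg(G_s)$ showing that, when the metric is not varied, only finitely many orders carry renormalization ambiguities. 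Conversely, your route has the merit of staying entirely within the locally covariant Hadamard framework and parallels the Lorentzian PPA analysis of \cite{Hollands-Wald-05}; but note that the step you flag as the "main obstacle" -- verifying the joint microlocal and almost-homogeneous scaling hypotheses for the $s$-derivatives of the Hadamard family so that Theorem \ref{Theorem: properties of the scaling expansion}, Lemma \ref{Lemma: extension} and \cite{Brunetti:1999jn} apply -- genuinely has to be carried out there, whereas the paper sidesteps it by construction. One further imprecision: a generic $F\in\mathcal{A}[M;h]$ is not of the form $\exp[\Upsilon_{W_P}]\phi_F$ with $\phi_F\in\mathcal{P}_{\operatorname{loc}}[M;h]$; products of Wick powers at distinct points yield, via the local Wick expansion \eqref{Eqn: local Wick expansion}, multilocal functionals whose coefficients are the distributions $\widetilde{t}_{\underline{j}}$, so your normal-ordered representation should be formulated for these generators, as you implicitly do when restricting attention to the Wick monomials.
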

\begin{proof}
	Since we are interested in proving that the expansion in formal power series in $s$ of $\beta_s$ is well-defined we can discard smooth contributions from the expansion in $s$, focusing only on the singular contributions.
	This procedure will yield a map $\beta_{[[s]]}$ such that $\beta_{s}-\beta_{[[s]]}$ is smooth at all orders in $s$.
	We shall discuss, moreover, whether $\beta_{[[s]]}$ and thus also $\beta_{s}$ are extensible.
	
	We expand each parametrix $P_s=R_sP$ of $E_s$ as a formal power series in $s$ built out of the corresponding counterpart $P$ of $E$.
	Observe that, by Definition \ref{Def: smooth compactly supported d-dimensional family of variations}, $h_s-h=O(s)$ as well as $G_s\vcentcolon=E_s-E=O(s)$.	Since $G_s$ is compactly supported, we may write
	\begin{align*}
	E_s=E+G_s=E(I+PG_s)-S_PG_s\,,
	\end{align*}
	where $S_P\colon\mathcal{D}(M)\to\mathcal{E}(M)$ is such that $EP=\mathrm{Id}_{\mathcal{D}(M)}+S_P$.
	Let us consider the operator $P_{[[s]]}\colon\mathcal{D}(M)\to\mathcal{E}(M)[[s]]$ defined by
	\begin{align}\label{Eqn: Moeller operator for parametrices}
		P_{[[s]]}\vcentcolon=\sum_{n\geq 0}(-PG_s)^nP\,.
	\end{align}
	A direct computation shows that $P_{[[s]]}$ satisfies, at each perturbative order in $G_s=O(s)$, $P_{[[s]]}E_sf=E_sP_{[[s]]}f=f+S_{[[s]]}f$ for all $f\in\mathcal{D}(M)$, where $S_{[[s]]}$ is a perturbative, smoothing remainder.
	Moreover $P_{[[s]]}$ is formally symmetric because
	\begin{align*}
	P_{[[s]]}^*=P^*\sum_{n\geq 0}(-G_s^*P^*)^n=P_{[[s]]}\,,
	\end{align*}
	where we exploited that $P=P^*$ and $G_s=G_s^*$.
	It follows that $P_{[[s]]}-P_{s}$ is smooth at each perturbative order in $s$. Therefore, we may consider the formal map $P_{[[s]]}$ as the perturbative expansion in $s$ of $P_s$ up to a smooth remainder.
	
	The perturbative expansion of $\beta_s$ up to a smooth remainder is obtained by replacing $P_s$ with $P_{[[s]]}$.
	In particular it holds that, for all $F\in\mathcal{A}_{\mathrm{reg}}[M;h]$ and $P\in\operatorname{Par}[M;h]$,
	\begin{align*}
	(\beta_{[[s]]}F)(P_{[[s]]})=\exp\big[\Upsilon_{P_{[[s]]}-P}\big]F[P]\,,
	\end{align*}
	so that $\beta_{[[s]]}F\in\Gamma(\mathsf{E}[M;h])[[s]]$.
	To make $\beta_{[[s]]}$ and thus $\beta_s$ well-defined on $\mathcal{A}[M;h]$ we need to study the coinciding point limit of $P_{[[s]]}-P$.
	This amounts to observe that
	\begin{align*}
	P_{[[s]]}-P=\sum_{n\geq 1}(-PG_s)^nP\,,
	\end{align*}
	so that the coinciding point limit is well-defined at each order in $s$ provided that the renormalization freedoms of $(PG_s)^nP$ have been accounted for. However, this is a consequence of the construction of the algebra $\mathcal{A}[M;h]$. Therefore, $\beta_s\colon\mathcal{A}[M;h]\to\Gamma(\mathsf{E}[M;h])[[s]]$ is well-defined.
\end{proof}

\begin{remark}
	As observed in \cite[Remark 3.26]{Drago-Hack-Pinamonti-2016}, the map $\beta_s$ requires to be renormalized at a perturbative level in each order in $s$.
	This may seem unsatisfactory at first glance; however, it can be shown that, in particular circumstances, for each $F\in\mathcal{A}[M;h]$, one needs to renormalize a finite number of terms of the form $(PG_s)^nP$.
	Indeed, observe that, since $F\in\mathcal{A}[M;h]$ is a polynomial functional, the exponential series which defines $\exp\big[\Upsilon_{P_{[[s]]}-P}\big]F[P]$ is finite.
	Moreover, notice that $G_s=E_s-E$ is a differential operator of degree $\deg(G_s)\leq 2$ with smooth compactly supported coefficients.
	Then for each $n\geq 1$, the distribution $(PG_s)^nP$ acts on a $n\mathrm{D}$-dimensional space with scaling degree 
	\begin{align}
		(\mathrm{D}-2)(n+1)+n\deg(G_s)=n\mathrm{D}+(\deg(G)-2)n+\mathrm{D}-2\,.
	\end{align}
	It descends that, if $h_s=(g,A_s,c_s)$ does not involve a variation of the background metric, then $\deg(G_s)\leq 1$ and the scaling degree is strictly lower than $n\mathrm{D}$ for $n>\mathrm{D}-2$.
	In this case we may apply \cite[Thm. 5.2]{Brunetti:1999jn} to conclude that the distribution $(PG_s)^nP$ has a unique extension to the whole space. Hence there are no renormalization ambiguities when $n$ is large enough.
\end{remark}

\begin{Definition}[PPA]\label{Def: PPA}
	A family of Wick monomials $\{\Phi^{\text{\ul{k}}}\}_{\text{\ul{k}}}$ is said to satisfy the principle of perturbative agreement (PPA) if, for all $(M;h)\in\mathsf{Obj}(\bkgg)$ and for any smooth compactly supported family of variations $\{h_s\}_{s\in\mathbb{R}}$ of $h$ -- \textit{cf.} Definition \ref{Def: smooth compactly supported d-dimensional family of variations} --, it holds
	\begin{align}\label{Eqn: PPA}
		\frac{\mathrm{d}^n}{\mathrm{d}s^n}\Phi^{\text{\ul{k}}}[M;h_s](\omega_{(s)},P_s)\bigg|_{s=0}
		=\frac{\mathrm{d}^n}{\mathrm{d}s^n}\beta_{s}\big[\Phi^{\text{\ul{k}}}[M;h](\omega)\big](P_s)\bigg|_{s=0}\,,
	\end{align}
	where $\omega,\omega_{(s)}\in\Gamma_{\mathrm{c}}^{\text{\ul{k}},\ell}[M;h]$ are such that $\omega_{(s)}\mu_{g_s}^{\otimes\ell}=\omega\mu_g^{\otimes \ell}$.
	Notice that the right-hand side of equation \eqref{Eqn: PPA} is well-defined on account of proposition \ref{Prop: well-definitenss of perturbative expansion of beta}.
\end{Definition}

\begin{remark}\label{Remark: comments on PPA}
	As observed in \cite{Hollands-Wald-05}, the PPA is satisfied for all $n\in\mathbb{N}$ whenever it holds true for $n=1$.
	This can be proved by induction. Let us assume that equation \eqref{Eqn: PPA} holds for all $n\leq k$ where $k\geq 1$. We can prove that equation \eqref{Eqn: PPA} is verified for $n=k$. To begin with we observe that
	\begin{align*}
		\frac{\mathrm{d}^k}{\mathrm{d}s^k}\Phi^k[M;h_s](\omega_s,P_s)\bigg|_{s=0}&=
		\frac{\mathrm{d}}{\mathrm{d}t}\frac{\mathrm{d}^{k-1}}{\mathrm{d}s^{k-1}}
		\Phi^k[M;h_{t+s}](\omega_{t+s},P_{t+s})\bigg|_{t,s=0}\\&=
		\frac{\mathrm{d}}{\mathrm{d}t}\frac{\mathrm{d}^{k-1}}{\mathrm{d}s^{k-1}}
		\beta_s\big[\Phi^k[M;h_t](\omega_t)\big](P_{t+s})\bigg|_{t,s=0}\,,
	\end{align*}
	where in the last equality we used equation \eqref{Eqn: PPA} for $n=k-1$. Notice that, at this point, $h_{t+s}$ has been regarded as a smooth, compactly supported $1$-dimensional family of variations of $h_t$. In view of the definition of $\beta_s$ we find
	\begin{align*}
		\frac{\mathrm{d}}{\mathrm{d}t}\frac{\mathrm{d}^{k-1}}{\mathrm{d}s^{k-1}}
		\beta_s\big[\Phi^k[M;h_t](\omega_t)\big](P_{t+s})\bigg|_{t,s=0}&=
		\frac{\mathrm{d}}{\mathrm{d}t}\frac{\mathrm{d}^{k-1}}{\mathrm{d}s^{k-1}}
		\exp\big[\Upsilon_{P_{t+s}-P_t}\big]\Phi^k[M;h_t](\omega_t,P_t)\bigg|_{t,s=0}\\&=
		\frac{\mathrm{d}^{k-1}}{\mathrm{d}s^{k-1}}
		\frac{\mathrm{d}}{\mathrm{d}t}\exp\big[\Upsilon_{P_{t+s}-P_t}\big]\bigg|_{t,s=0}\Phi^k[M;h](\omega,P)\\&+
		\frac{\mathrm{d}^{k-1}}{\mathrm{d}s^{k-1}}
		\exp\big[\Upsilon_{P_{t+s}-P_t}\big]\frac{\mathrm{d}}{\mathrm{d}t}\Phi^k[M;h_t](\omega_t,P_t)\bigg|_{t,s=0}\\&=
		\frac{\mathrm{d}^{k-1}}{\mathrm{d}s^{k-1}}
		\frac{\mathrm{d}}{\mathrm{d}t}\exp\big[\Upsilon_{P_{t+s}-P_t}\big]\bigg|_{t,s=0}\Phi^k[M;h](\omega,P)\\&+
		\frac{\mathrm{d}^{k-1}}{\mathrm{d}s^{k-1}}
		\exp\big[\Upsilon_{P_{t+s}-P_t}\big]\frac{\mathrm{d}}{\mathrm{d}t}\beta_t\big[\Phi^k[M;h](\omega)\big](P_t)\bigg|_{t,s=0}\,.
	\end{align*}
	In the last equality we used equation \eqref{Eqn: PPA} for $k=1$. In view of the definition of $\beta_t$ we find
	\begin{align*}
		\frac{\mathrm{d}^{k-1}}{\mathrm{d}s^{k-1}}
		\frac{\mathrm{d}}{\mathrm{d}t}\exp\big[\Upsilon_{P_{t+s}-P_t}\big]\bigg|_{t,s=0}\Phi^k[M;h](\omega,P)&+
		\frac{\mathrm{d}^{k-1}}{\mathrm{d}s^{k-1}}
		\exp\big[\Upsilon_{P_{t+s}-P_t}\big]\frac{\mathrm{d}}{\mathrm{d}t}\beta_t\big[\Phi^k[M;h](\omega)\big](P_t)\bigg|_{t,s=0}\\&=
		\frac{\mathrm{d}^{k-1}}{\mathrm{d}s^{k-1}}
		\frac{\mathrm{d}}{\mathrm{d}t}\exp\big[\Upsilon_{P_{t+s}-P_t}\big]\bigg|_{t,s=0}\Phi^k[M;h](\omega,P)\\&+
		\frac{\mathrm{d}^{k-1}}{\mathrm{d}s^{k-1}}
		\exp\big[\Upsilon_{P_{t+s}-P_t}\big]\frac{\mathrm{d}}{\mathrm{d}t}
		\exp\big[\Upsilon_{P_t-P}\big]\Phi^k[M;h](\omega,P)\bigg|_{t,s=0}\\&=
		\frac{\mathrm{d}^{k-1}}{\mathrm{d}s^{k-1}}
		\frac{\mathrm{d}}{\mathrm{d}t}\exp\big[\Upsilon_{P_{t+s}-P}\big]\Phi^k[M;h](\omega,P)\bigg|_{t,s=0}\\&=
		\frac{\mathrm{d}^k}{\mathrm{d}s^k}\beta_{s}\big[\Phi^k[M;h](\omega)\big](P_s)\bigg|_{s=0}\,.
	\end{align*}
	This entails the result sought.
\end{remark}

\noindent We end this section with the following result, whose proof can be adapted mutatis mutandis from the counterpart in \cite[Section 6]{Hollands-Wald-05}.

\begin{theorem}\label{Theorem: existence of Wick monomials with satisfy Leibniz rule and PPA}
	If $\mathrm{D}>2$ there exists a family of Wick monomials $\lbrace\Phi^{\text{\ul{k}}}\rbrace_{\text{\ul{k}}}$ as per Definition \ref{Def: Wick monomials with derivatives} which satisfies both the Leibniz rule and the PPA as per Definition \ref{Def: Leibniz rule} and \ref{Def: PPA}.
\end{theorem}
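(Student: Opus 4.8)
The plan is to start from a family of Wick monomials that already satisfies the Leibniz rule, whose existence is guaranteed by Proposition \ref{Prop: existence of Wick monomials with Leibniz rule}, and to show that within such families the PPA can be enforced, i.e. that the obstruction to it is trivial once $\mathrm{D}>2$. By Remark \ref{Remark: comments on PPA} it suffices to impose \eqref{Eqn: PPA} for $n=1$, since validity at first order propagates to all orders by the inductive argument therein. I would organize the construction as a double induction on the length $\ell=\ell(\underline{\mathrm{k}})$ and on the total degree $|\underline{\mathrm{k}}|$, exactly as in the existence and uniqueness results of Sections \ref{Section: Existence E-product of Wick of Wick powers}-\ref{Section: Uniqueness of E-product of Wick Ordered Powers of Quantum Fields}, following the Lorentzian blueprint of \cite[Sec. 6]{Hollands-Wald-05}.

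Next I would introduce the first-order PPA defect. Given a family $\{\Phi^{\underline{\mathrm{k}}}\}$ satisfying Definition \ref{Def: Wick monomials with derivatives} and the Leibniz rule, set
$$\mathcal{A}^{\underline{\mathrm{k}}}[M;h](\omega,P):=\frac{\mathrm{d}}{\mathrm{d}s}\bigg|_{s=0}\Big(\Phi^{\underline{\mathrm{k}}}[M;h_s](\omega_{(s)},P_s)-\beta_s\big[\Phi^{\underline{\mathrm{k}}}[M;h](\omega)\big](P_s)\Big),$$
which is well-posed by Proposition \ref{Prop: well-definitenss of perturbative expansion of beta}. The first step is to argue that $\mathcal{A}^{\underline{\mathrm{k}}}$ is a local and covariant observable which, by axiom $(4)$ of Definition \ref{Def: Wick monomials with derivatives} together with the inductive hypothesis (validity of the PPA for all lower $\ell$ and $|\underline{\mathrm{k}}|$), has vanishing functional derivative. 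Hence, just as in the proof of Theorem \ref{Theorem: uniqueness of E-product I}, it is a $\mathbb{C}$-number proportional to $\mathrm{Id}_{\Gamma_{\mathrm{eq}}[M;h]}$, and is therefore encoded by a local and covariant scalar coefficient $a^{\underline{\mathrm{k}}}[M;h]$ of definite homogeneous scaling dimension, covariantly built out of $g$, $R_{abcd}$, $A$, $c$ and their covariant derivatives as in Theorem \ref{Theorem: uniqueness of E-product II}.

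The decisive point is then the vanishing of $a^{\underline{\mathrm{k}}}$. Because both families entering the defect are compatible with the Leibniz rule — for $\beta_s[\Phi^{\underline{\mathrm{k}}}[M;h]]$ this has to be checked using that $\beta_s$ is built from $\exp[\Upsilon_{P_s-P}]$ and commutes with the $\operatorname{div}_g$-manipulations of Definition \ref{Def: Leibniz rule} — the coefficient $a^{\underline{\mathrm{k}}}$ inherits the constraint of Proposition \ref{Prop: constraint on Wick monomials ambiguities due to Leibniz rule}, namely $\nabla a^{\underline{\mathrm{k}}}[M;h]=0$. A covariantly constant scalar on a connected manifold is a constant, hence scales homogeneously with dimension $0$; but for $\mathrm{D}>2$ the engineering dimension $\mathrm{D}_\varphi=\frac{\mathrm{D}-2}{2}$ is strictly positive, so the dimension assigned to $a^{\underline{\mathrm{k}}}$ is nonzero and the homogeneity forces $a^{\underline{\mathrm{k}}}=0$. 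Any part of the defect not already covariantly constant is instead removable through the residual ambiguity of Theorem \ref{Theorem: uniqueness of E-product I} (whose coefficients are, under the Leibniz rule, themselves covariantly constant): redefining $\Phi^{\underline{\mathrm{k}}}$ by such a counterterm shifts $\mathcal{A}^{\underline{\mathrm{k}}}$ by the $s$-derivative of the counterterm, and one selects it to cancel the removable contribution while preserving the Leibniz rule and all axioms of Definition \ref{Def: Wick monomials with derivatives}.

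The main obstacle I anticipate is twofold. First, one must verify the integrability of the procedure: the defects $\mathcal{A}^{\underline{\mathrm{k}}}$ at successive orders have to satisfy the cocycle relation that follows from the symmetry of the mixed derivatives $\frac{\mathrm{d}}{\mathrm{d}s}\frac{\mathrm{d}}{\mathrm{d}t}$ — the same algebraic mechanism exploited in Remark \ref{Remark: comments on PPA} — so that the counterterms fixed at each step do not spoil the cancellations achieved at lower orders. Second, and most importantly, the genuine obstruction is precisely the covariantly constant scalar invariant just discussed, and its vanishing is a structural feature of the \emph{scalar} theory in dimension $\mathrm{D}>2$: there is no Levi-Civita tensor available to build a parity-odd invariant and no dimension-zero covariantly constant scalar surviving the scaling constraint, in sharp contrast with the Lorentzian parity-violating or two-dimensional cases of \cite{Hollands-Wald-05}, where such an invariant does appear and the PPA develops an anomaly. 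Granting this dimensional vanishing, the remaining steps are, \emph{mutatis mutandis}, identical to \cite[Sec. 6]{Hollands-Wald-05}, and yield a family of Wick monomials satisfying simultaneously the Leibniz rule and the PPA.
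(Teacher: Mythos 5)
Your overall skeleton --- reduction of the PPA to the $n=1$ case via Remark \ref{Remark: comments on PPA}, double induction in $\ell(\underline{\mathrm{k}})$ and $|\underline{\mathrm{k}}|$, the observation that axiom $(4)$ of Definition \ref{Def: Wick monomials with derivatives} plus the inductive hypothesis forces the first-order defect $\mathcal{A}^{\underline{\mathrm{k}}}$ to be a $\mathbb{C}$-number characterized by a local covariant coefficient --- is indeed the blueprint of \cite[Sec. 6]{Hollands-Wald-05}, which is all the paper itself invokes (its ``proof'' is a mutatis mutandis citation). However, the central mechanism you propose for killing the obstruction does not work. Proposition \ref{Prop: constraint on Wick monomials ambiguities due to Leibniz rule} applies to the difference of two families of Wick monomials defined on the \emph{same} background $(M;h)$, both satisfying the Leibniz rule with respect to the \emph{same} metric. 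The defect $\mathcal{A}^{\underline{\mathrm{k}}}$ is not of this form: it is linear in the variation $\dot h$ and compares $\Phi^{\underline{\mathrm{k}}}[M;h_s]$, whose Leibniz identity involves $\operatorname{div}_{g_s}$, against $\beta_s\big[\Phi^{\underline{\mathrm{k}}}[M;h]\big]$, whose Leibniz identity involves $\operatorname{div}_g$. Differentiating at $s=0$, the mismatch between the two divergence operators produces extra terms linear in $\dot g$ and $\nabla\dot g$, and what one actually obtains is a conservation-type (divergence) identity for the local curvature tensor dual to $\dot g$ --- the Riemannian analogue of $\nabla^a A_{ab}=0$ in \cite{Hollands-Wald-05} --- not the covariant constancy $\nabla a^{\underline{\mathrm{k}}}=0$ that you assert.

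A quick way to see that your constraint cannot be right is that it proves too much. On your own characterization, $a^{\underline{\mathrm{k}}}$ is a local covariant function of the background data and is linear in $\dot h$, which is compactly supported by Definition \ref{Def: smooth compactly supported d-dimensional family of variations}; hence $a^{\underline{\mathrm{k}}}$ vanishes outside $\operatorname{supp}\dot h$. If it were in addition covariantly constant, connectedness of $M$ would force $a^{\underline{\mathrm{k}}}\equiv 0$ in \emph{every} dimension $\mathrm{D}\geq 2$, rendering the hypothesis $\mathrm{D}>2$ superfluous --- in contradiction with the two-dimensional failure of the PPA recalled in the Introduction, which is precisely why the theorem excludes $\mathrm{D}=2$. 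The step you bypass is the actual content of \cite[Sec. 6]{Hollands-Wald-05}: one must show that every conserved, parity-even, local curvature tensor with the scaling dimension dictated by $\mathcal{A}^{\underline{\mathrm{k}}}$ is the variational derivative of a local curvature scalar (an inverse problem of the calculus of variations), so that it can be reabsorbed into a Leibniz-compatible redefinition of $\Phi^{\underline{\mathrm{k}}}$; it is here that $\mathrm{D}>2$ genuinely enters, since in $\mathrm{D}=2$ the tensor $g_{ab}$ itself is conserved, has the correct dimension, and is not such a variation (the trace-anomaly obstruction). Your Wess--Zumino consistency concern is legitimate and is part of that analysis, but without the inverse-variational step the argument is incomplete, and with the covariant-constancy shortcut it is incorrect.
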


\section{Interacting models}\label{Section: Interacting models}

Up to this point, we have considered the $\ast$-algebra of observables $\mathcal{A}[M;h]$ associated with the quadratic Lagrangian $\mathcal{L}$ defined in Equation \eqref{Eqn: Lagrangian Density}. In this last section we outline the construction of a $\ast$-algebra $\mathcal{A}_V[M;h]$ of observables instead associated with a local perturbation of $\mathcal{L}$, that is $\mathcal{L}_V\vcentcolon=\mathcal{L}+V$, where $V=V^*\in\mathcal{A}[M;h]$ plays the r\^ole of an interaction term. Note that we could relax the requirement of $V$ being local provided that covariance is preserved; yet, we will not discuss further this option. In the following our analysis will rely on the perturbative approach to interacting AQFT \cite{Brunetti:2015vmh,Rejzner:2016hdj}, see also \cite{Hollands-Wald-03}. In this framework $V$, the perturbation, is multiplied by a formal parameter $\lambda$ with respect to which observables are expanded as a formal power series.
Convergence of the such series will not be discussed, since this problem can be dealt with only in special cases \cite{Bahns-Rejzner-18,Bahns-Fredenhagen-Rejzner-17,Drago-19}.

More precisely our goal is the following. We consider a local and covariant algebra built via the functor $\mathcal{A}$ defined in Proposition \ref{Prop: local and covariant algebra of local polynomials}. Moreover we call $\mathsf{P}_{\mathrm{loc}}\colon\bkgg\to\vect$ the functor such that, for any $[M;h]\in\mathsf{Obj}(\bkgg)$,  $\mathsf{P}_{\mathrm{loc}}[M;h]\subseteq\mathcal{P}_{\mathrm{loc}}[M;h]$ is the vector space generated by a family of Wick powers $\lbrace\Phi^k\rbrace_{k\in\mathbb{N}}$.
We shall construct a linear map $\mathsf{R}_V\colon\mathsf{P}_{\mathrm{loc}}[M;h]\to\mathcal{A}[M;h][[\lambda]]$ such that, given any local and covariant observable $F$ such that $F[M;h](\Gamma_{\mathrm{c}}^{1}[M;h])\subseteq\mathsf{P}_{\mathrm{loc}}[M;h]$, $\mathsf{R}_V(F)$ is the expansion of $F$ as a formal power series with respect to $\lambda$ with coefficients being local and covariant observables.
Heuristically $F$ should be thought as an element in the algebra of interacting observables associated with $\mathcal{L}_V$, while $\mathsf{R}_V(F)$ represents its expansion in terms of observables of the free algebra $\mathcal{A}[M;h]$. Hence the image of $\mathsf{R}_V$, that is $\mathsf{R}_V(\mathsf{P}_{\mathrm{loc}}[M;h])\subseteq\mathcal{A}[M;h][[\lambda]]$, yields a perturbative representation of the $\ast$-algebra of interacting observables.

\paragraph{Construction of the map $\mathsf{R}_V$.}
Our starting point is the work of \cite{Keller-09}. In this paper it is argued that the map $\mathsf{R}_V$ should be realized as the algebraic version of the formal path integral formula
\begin{align}\label{Eqn: naive formula for Moeller}
	\mathsf{R}_V(F)(\psi)\sim\mathcal{N}^{-1}\int_{\mathcal{C}[\varphi]}\mathrm{d}[\varphi]\, e^{-\mathcal{L}(\varphi)-\lambda V(\varphi)}F(\psi-\varphi)=
	(Z_V[P])^{-1}\big(Z_V[P]\cdot_P F[P]\big)(\psi)\,,
\end{align}
where $P$ is any parametrix of the underlying elliptic operator $E$, while $e^{-\mathcal{L}(\varphi)}\mathrm{d}[\varphi]$ represents a Gaussian measure on $\mathcal{C}[\varphi]$ a space of chosen kinematic configurations. In addition, $F\in\mathsf{P}_{\mathrm{loc}}[M;h]$, $\mathcal{N}$ is a normalization factor, while
$Z_V(\psi)\vcentcolon=\exp_\cdot(\lambda V)$ is defined as a formal power series in $\lambda$ and it represents the algebraic version of the partition function of statistical field theory. Thus, it is tempting to interpret \eqref{Eqn: naive formula for Moeller} as a Bogoliubov-like formula. Yet a closer scrutiny of \eqref{Eqn: naive formula for Moeller} unveils that is neither a local nor a covariant expression since a change of parametrix $P\in\operatorname{Par}[M;h]$ yields
\begin{align}\label{Eqn: naive formula for Moeller for generic parametrix}
	\alpha^P_Q\big[(Z_V[P])^{-1}\big(Z_V[P]\cdot_P F[P]\big)\big]=
	Z_V[Q]^{\cdot_{P-Q} -1}\cdot_{P-Q}\big(Z_V[Q]\cdot_Q F[Q]\big)\,,
\end{align}
where $\cdot_{P-Q}$ is defined as in Equation \eqref{Eqn: Algebra product} while $Z_V[Q]^{\cdot_{P-Q} -1}$ denotes the inverse of $Z_V[Q]$ with respect to $\cdot_{P-Q}$.

A possibility to restore the interpretation as a local and covariant observable occurs if, in place of letting the parametrix vary, there would exist a fixed choice of $P_0\in\operatorname{Par}[M;h]$ which is both local and covariant. A close scrutiny of the Lorentzian scenario unveils that, in such a case, this is the solution adopted, see \cite{Brunetti:2015vmh}. As a matter of fact the r\^ole of $P_0$ is played by a fundamental solution of the underlying normally hyperbolic operator, {\it e.g.} the advanced or the retarded propagators. In the category of globally hyperbolic spacetimes, this procedure is manifestly local and covariant. 

On the contrary, working with $(M,g)\in\mathsf{Obj}(\bkgg)$, also this viewpoint is slightly problematic. Following the seminal work \cite{Li_Tam}, the existence of the fundamental solutions of $E$, as in Equation \eqref{Eqn: local_form_E}, is ruled by $\ker E$ on $C^\infty_0(M)$, while its uniqueness by $\ker E$ on $\mathcal{E}(M)$. It is thus not hard to choose $E$ so that one can construct $(M_1,g_1),(M_2,g_2)\in\mathsf{Obj}(\bkgg)$ with an orientation preserving isometric embedding $\chi:M_1\to M_2$, such that, given $G_1$ a fundamental solution of $E$ in $M_1$, there does not exist $G_2$ fundamental solution of $E$ in $M_2$ obeying $\chi^*[G_2|_{\chi[M_1]}]=G_1$.

In view of this last remark and of the preceding discussion, in order to give a local and covariant description of the map $R_V$, we need to hard code in the background data a local and covariant choice of a fundamental solution of the underlying elliptic operator $E$. 

\begin{Definition}\label{Def: Augmented background category}
We call $\bkgg_G$ the category such that
\begin{itemize}
	\item $\obj(\bkgg_G)$ is the collection of pairs $(M;h,G)$, where $M$ denotes a smooth, connected and oriented manifold with empty boundary and with $\dim M=\mathrm{D}\geq 2$. In addition $h\equiv (g, A, c)$ identifies the background data, that is $A\in\Gamma(T^*M)$, $c\in C^\infty(M)$ while $g\in \Gamma(S^2T^*M)$ is a Riemannian metric, while $G\in\mathcal{D}^\prime(M\times M)$ is a fundamental solution of $E$, as in \eqref{Eqn: local_form_E}, {\it i.e.} $GE=EG=\operatorname{Id}_{\mathcal{D}(M)}$.
	\item $\arr(\bkgg_G)$ is the collection of morphisms between $(M;h,G),(M^\prime;h^\prime,G^\prime)\in\obj(\bkgg_G)$ which are specified by an orientation preserving isometric embedding between $\chi:M\to M^\prime$ such that $h=\chi^*h^\prime$ where $h^\prime\equiv(g^\prime,A^\prime,c^\prime)$ and $\chi^*G^\prime = G$.
\end{itemize}
\end{Definition}

\noindent We observe that there exists a forgetful functor $\pi_G$ from $\bkgg_G$ to a subcategory $\pi_G(\bkgg_G)$ of $\bkgg$ which is defined as $\pi_G(M;h,G)=(M;h)$ for every $(M;h,G)\in\mathsf{Obj}(\bkgg)$, while it acts as the identity on the arrows.
As a consequence, for every Euclidean locally covariant theory $\mathcal{A}$ as per Definition \ref{Def: Euclidean locally covariant theory}, we define:
\begin{equation}\label{Eqn: Elcft with G}
\mathcal{A}_G:\bkgg_G\to\alg,\quad\textrm{such that}\quad\mathcal{A}_G=\mathcal{A}\circ\pi_G.
\end{equation}

\noindent With a slight abuse of notation we will refer to $\mathcal{A}_G$ still as an Euclidean locally covariant theory. In view of the new structures that we have introduced, we can now bypass the problem outlined at the beginning of the section as follows.
\begin{Definition}\label{Def: local and covariant algebra of interacting observables}
	Let $\mathcal{A}$ be a local and covariant algebra as in Proposition \ref{Prop: local and covariant algebra of local polynomials} and let $\mathcal{A}_{G}$ be the counterpart as per Equation \eqref{Eqn: Elcft with G}.
	For $(M;h,G)\in\mathsf{Obj}(\bkgg_G)$ let $V=V^*\in\mathcal{A}[M;h]$. 
	For all $F\in\mathcal{A}_G[M;h,G]$ we define $\mathsf{R}_V(F)\in\mathcal{A}_G[M;h,G][[\lambda]]$ as 
	\begin{align}\label{Eqn: Moeller operator for closed manifold}
		\mathsf{R}_V(F)[P]\vcentcolon=Z_V[P]^{\cdot_{P-G} -1}\cdot_{P-G}\big(Z_V[P]\cdot_P F[P]\big)\,,
	\end{align}
	where $Z_V[P]\vcentcolon=\exp_{\cdot_P}(\lambda V)$. We define the $\ast$-algebra of interacting observable on $(M,h)$ to be the $\ast$-subalgebra $\mathcal{A}_V[M;h,G]$ generated by $\mathsf{R}_V(\mathsf{P}_{\mathrm{loc}}[M;h,G])\subseteq\mathcal{A}_G[M;h,G][[\lambda]]$.
\end{Definition}
\noindent Observe that formula \eqref{Eqn: Moeller operator for closed manifold} defines a local and covariant observable as per definition \ref{Def: locally covariant observable}. As a direct consequence of the properties of the structures introduced, the following statement holds true:
\begin{proposition}\label{Prop: the algebra of interacting observable is a local and covariant theory}
	Under the hypothesis of Definition \ref{Def: local and covariant algebra of interacting observables}, $\mathcal{A}_V\colon\bkgg_G\to\alg$ is a Euclidean locally covariant theory in the sense of Definition \ref{Def: Euclidean locally covariant theory} and Equation \eqref{Eqn: Elcft with G}.
\end{proposition}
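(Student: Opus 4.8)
The plan is to check directly the two data defining a Euclidean locally covariant field theory as per Definition \ref{Def: Euclidean locally covariant theory}: first, that $\mathcal{A}_V\colon\bkgg_G\to\alg$ is a covariant functor into unital commutative $\ast$-algebras, and second, that it carries a family of scaling natural isomorphisms $\{\varsigma_{\lambda,\mu}\}$ obeying the cocycle conditions. The argument mirrors Theorem \ref{Theorem: locally covariant algebra of interest} and Proposition \ref{Prop: local and covariant algebra of local polynomials}, the genuinely new input being that the augmented category $\bkgg_G$ supplies every object with a covariantly compatible fundamental solution $G$, which is exactly what renders $\mathsf{R}_V$ local and covariant, as already observed after Definition \ref{Def: local and covariant algebra of interacting observables}. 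The heart of the proof is thus to promote that pointwise observation into functoriality.

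At the object level, for each $(M;h,G)\in\obj(\bkgg_G)$ the space $\mathcal{A}_V[M;h,G]$ is by construction a $\ast$-subalgebra of $\mathcal{A}_G[M;h,G][[\lambda]]$. Since $\mathcal{A}[M;h]$ is unital and commutative by Proposition \ref{Prop: local and covariant algebra of local polynomials}, so is the associated formal power series algebra, and hence so is the subalgebra; the unit lies in it because $\mathsf{R}_V$ maps the unit to the unit, as setting $F=1$ in \eqref{Eqn: Moeller operator for closed manifold} gives $Z_V[P]\cdot_P 1=Z_V[P]$ and thus $\mathsf{R}_V(1)=1$. Therefore $\mathcal{A}_V[M;h,G]\in\obj(\alg)$.

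At the arrow level, define $\mathcal{A}_V(\chi)$ to be the restriction to $\mathcal{A}_V[M;h,G]$ of the $\ast$-homomorphism $\mathcal{A}_G(\chi)$, extended coefficientwise in the coupling. The crux is that this restriction preserves the interacting subalgebra, i.e. that $\mathcal{A}_G(\chi)$ intertwines the Møller operators: for $\chi\colon(M;h,G)\to(M';h',G')$ with $\chi^*h'=h$ and $\chi^*G'=G$, and for the local and covariant interaction transported to the target as $V'=\mathcal{A}(\chi)V$, one verifies on generators that
\begin{align*}
	\mathcal{A}_G(\chi)\big(\mathsf{R}_V(F)\big)=\mathsf{R}_{V'}\big(\mathcal{A}_G(\chi)F\big)\,.
\end{align*}
This follows by expanding \eqref{Eqn: Moeller operator for closed manifold} and using that the pullback $\chi^*$ on parametrices respects the fiberwise products $\cdot_P$ and $\cdot_{P-G}$ (covariance of $\mathcal{A}$, \textit{cf.} Proposition \ref{Prop: local and covariant algebra of local polynomials}) together with the decisive fact that $\chi^*G'=G$, so that the distinguished fundamental solution entering $\cdot_{P-G}$ is transported consistently. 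Hence generators are mapped to generators and the image sits inside $\mathcal{A}_V[M';h',G']$. Injectivity and the $\ast$-homomorphism property are inherited from $\mathcal{A}_G(\chi)$, while $\mathcal{A}_V(\chi'\circ\chi)=\mathcal{A}_V(\chi')\circ\mathcal{A}_V(\chi)$ and $\mathcal{A}_V(\mathrm{Id})=\mathrm{Id}$ descend from the corresponding properties of $\mathcal{A}_G$ via $(\chi'\circ\chi)^*=\chi^*\circ\chi'^*$.

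Finally, the scaling structure lifts from $\mathcal{A}$. The $\R_+$-action on $\obj(\bkgg_G)$ extends \eqref{Eqn: Scaling Transformation} by $(M;g,A,c,G)\mapsto(M;\lambda^{-2}g,A,\lambda^2c,\lambda^{-2}G)$, the law $G\mapsto\lambda^{-2}G$ being forced by $E\mapsto\lambda^2E$ exactly as for parametrices in Remark \ref{Remark: scaling map}. The maps $\varsigma_{\lambda,\mu}$ are those of Theorem \ref{Theorem: locally covariant algebra of interest} extended coefficientwise, and one checks that they restrict to $\ast$-isomorphisms of the interacting subalgebras and inherit the cocycle relations $\varsigma_{\lambda,\rho}\circ\varsigma_{\rho,\mu}=\varsigma_{\lambda,\mu}$ and $\varsigma_{\lambda,\lambda}=\mathrm{Id}$. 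I expect the main obstacle to be precisely the intertwining identity of the previous paragraph: it is the one step that genuinely uses the augmentation, and the impossibility of a covariant choice of $G$ inside $\bkgg$ (discussed before Definition \ref{Def: Augmented background category}) is exactly what forces the passage to $\bkgg_G$.
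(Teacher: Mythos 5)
Your proof is correct and follows exactly the route the paper intends: the paper in fact offers no written argument, asserting the proposition as a ``direct consequence of the properties of the structures introduced,'' and your details---unitality via $\mathsf{R}_V(1)=1$, functoriality via the intertwining $\mathcal{A}_G(\chi)\big(\mathsf{R}_V(F)\big)=\mathsf{R}_{\mathcal{A}(\chi)V}\big(\mathcal{A}_G(\chi)F\big)$ resting crucially on $\chi^*G'=G$, and the scaling action extended by $G\mapsto\lambda^{-2}G$ so that $\varsigma_\lambda$ restricts to the interacting subalgebras---are precisely what that assertion is meant to compress. The only caveat, which you share with the paper rather than introduce, is that functoriality presupposes the interaction is assigned coherently across objects of $\bkgg_G$ (your choice $V'=\mathcal{A}(\chi)V$ makes this explicit), a point Definition \ref{Def: local and covariant algebra of interacting observables} leaves implicit.
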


\begin{remark}
	The M\o ller operator introduced in Equation \eqref{Eqn: Moeller operator for closed manifold} is an intertwiner between $E$ and $E_V\varphi:=E\varphi+V^{(1)}[\varphi]$.
	Indeed, let consider $F[P,\varphi]:=\int fE\varphi$, where $f\in C^\infty_{\mathrm{c}}(M)$.
	A direct computation yields
	\begin{align*}
		\mathsf{R}_V(F+V^{(1)}[\varphi](f))[P]=F[P]\,.
	\end{align*}
\end{remark}

\begin{remark}
	We observe that the problem of a local and covariant choice of a fundamental solution bears a similarity to the failure of isotony in Abelian gauge theories when discussing general local covariance. In this scenario, it was observed in \cite[Sec. 6]{Benini:2013ita} that a possible way to circumvent this problem consists of choosing a subcategory of the background data which possesses a terminal object. At the level of algebras this specialization leads to the identification of a Haag-Kastler net of observables. We could have adopted such viewpoint also in the analysis of the case in hand choosing a terminal object in $\bkgg$ rather than defining an additional background datum as in $\bkgg_G$. It is not difficult to realize from Definition \ref{Def: Augmented background category} that our choice includes the first as a special case. Hence Haag-Kastler nets of observables are all realized in Definition \ref{Def: local and covariant algebra of interacting observables} and in Proposition \ref{Prop: the algebra of interacting observable is a local and covariant theory}.
\end{remark}

\appendix

\section{The Peetre-Slov\'ak Theorem}\label{Appendix: Peetre-Slovak}
In this section we briefly review the Peetre-Slov\'ak theorem together with all the ancillary notions. The interested reader may refer to \cite{Navarro-Sancho-14} and to \cite{Khavkine:2014zsa} for a more in detail discussion.

\begin{remark}
Let $E\overset{\pi_E}{\to}B$ be a bundle over the smooth manifold $B$. With $J_rE$, $r\in\N$, we denote the $r$-jet bundle over the base $B$ \cite{Kolar-Michor-Slocvak-93}.
\end{remark}

\begin{Definition}\label{Def: Differential operators}
Let $E\overset{\pi_E}{\to}B$ and $F\overset{\pi_F}{\to}B$ be bundles over the same smooth manifold $B$. Consider a map $D:\Gamma(E)\to\Gamma(F)$, we say that:
\begin{enumerate}
\item $D$ is a \emph{differential operator of globally bounded order} $r\in\N$ if there exists a smooth map $d\colon J_rE\to F$ such that $\pi_F\circ d=\pi_{J_rE}$ and 
	\begin{align}\label{Eqn: definition of differential operator of globally bounded order}
	D(\varepsilon)=d(j_r\varepsilon),\qquad\forall\varepsilon\in\Gamma(E)\,,
	\end{align}
	with $j_r\varepsilon\in\Gamma(j_rE)$ denoting the $r$-jet extension of $\varepsilon$;
\item $D$ is a \emph{differential operator of locally bounded order} if, for any $x\in B$ and $\varepsilon\in\Gamma(E)$ there exist:
	\begin{itemize}
	\item a relatively compact open set $U\subset B$ containing $x$;
	\item an integer $r\in\mathbb{N}$, as well as a neighbourhood $Z^r\subseteq J_rE$ of $j_r\varepsilon_0(U)$ such that $\pi_{J_rE}Z^r=U$,
		\item a smooth map $d\colon Z^r\to F$ such that $\pi_F\circ d=\pi_{J_rE}$
   so that
	\begin{align}\label{Eqn: definition of differential operator of locally bounded order}
	D(\varepsilon)(x)=d(j_r\varepsilon)(x)\,,
	\end{align}
	for any $x\in U$ and $\varepsilon\in\Gamma(E)$ with $j_r\varepsilon(U)\subseteq Z^r$.
	\end{itemize}
\end{enumerate}
\end{Definition}

In this setting, the Peetre-Slov\'ak Theorem is a result giving sufficient condition for a differential operator to be of locally bounded order.

\begin{remark}
Denoting with $\pi_d\colon B\times\mathbb{R}^d\to B$ the canonical projection to $B$, the pull-back bundle $\pi_d^*E\stackrel{\pi_{\pi_d^*E}}{\to}B\times\mathbb{R}^d$ is the smooth bundle defined by
\begin{align}\label{Eqn: pull-back bundle}
\pi^*E\vcentcolon=\{(s,x,e)\in\mathbb{R}^d\times B\times E|\;\pi_E(e)=\pi_d(s,x)\}\simeq\mathbb{R}^d\times E\,.
\end{align}
Denoting with $\pi_{d,E}$ the projection $\pi_{d,E}\colon\pi_d^*E\to E$, each smooth section $\zeta\in\Gamma(\pi_d^*E)$ induces a smooth family of sections $\{\zeta_s\}_{s\in\mathbb{R}^d}$ in $\Gamma(E)$ defined by $\zeta_s(x)\vcentcolon=\pi_{d,E}\zeta((s,x))$ which, in turn, depends smoothly on the parameter $s\in\mathbb{R}^d$.
\end{remark}
\begin{Definition}\label{Def: smooth compactly supported d-dimensional family of variations}
	Let $d\in\N$ and consider a smooth family of sections $\{\zeta_s\}_{s\in\mathbb{R}^d}$ in $\Gamma(E)$ induced by a smooth section $\zeta\in\Gamma(\pi_d^*E)$.
	We say that $\{\zeta_s\}_{s\in\mathbb{R}^d}$ is a smooth compactly supported $d$-dimensional family of variations if there exists a compact $K\subseteq B$ such that $\zeta(s,x)=\zeta(s',x)$ for all $x\notin K$ and for all $s,s'\in\mathbb{R}^d$.
\end{Definition}
\begin{Definition}\label{Def: weak regularity condition}
	A map $D\colon\Gamma(E)\to\Gamma(F)$ is called weakly-regular if, for any $d\in\N$ and for all smooth compactly supported $d$-dimensional families of variations $\{\zeta_s\}_{s\in\mathbb{R}^d}$ -- see Definition \ref{Def: smooth compactly supported d-dimensional family of variations} --, $\psi_s\vcentcolon=D\zeta_s$ is a smooth compactly supported $d$-dimensional family of variations.
\end{Definition}
\begin{theorem}[Peetre-Slov\'ak]\label{Theorem: Peetre-Slovak's theorem}
	Let $D\colon\Gamma(E)\to\Gamma(F)$ be a smooth map such that
	\begin{itemize}
		\item
		for all $\varepsilon\in\Gamma(E)$ and for all $x\in B$, $D\varepsilon(x)$ depends only on the germ of $\varepsilon$ at $x\in B$, i.e. $(D\varepsilon)(x)=(D\widetilde{\varepsilon})(x)$ for all $\widetilde{\varepsilon}\in\Gamma(E)$ which coincides with $\varepsilon$ in a neighbourhood of $x$;
		\item
		$D$ is weakly regular as per Definition \ref{Def: weak regularity condition}.
	\end{itemize}
	Then $D$ is a differential operator of locally bounded order as per Definition \ref{Def: Differential operators}.
\end{theorem}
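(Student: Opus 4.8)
The plan is to follow the classical strategy of Peetre's theorem in the nonlinear form due to Slov\'ak and streamlined by Navarro--Sancho, whose two pillars are (i) upgrading the germ-dependence to dependence on a \emph{finite} jet, locally uniformly, and (ii) realizing the resulting factorization through the jet bundle by a genuinely smooth fibre map $d$. Since the conclusion is local, I would fix $x_0\in B$ and work in trivializing charts for $E$ and $F$ over a relatively compact neighbourhood $U\ni x_0$, identifying sections with smooth vector-valued maps. The first hypothesis already gives that $(D\varepsilon)(x)$ depends only on the germ of $\varepsilon$ at $x$; the whole content of the statement is that this germ-dependence is actually a dependence on a jet $j_r\varepsilon(x)$ of some finite order $r$, uniform for $x$ near $x_0$ and for $\varepsilon$ near a fixed $\varepsilon_0$.

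The crux is this finiteness, which I would prove by contradiction using weak regularity. Assume that near $(x_0,\varepsilon_0)$ no finite order suffices: then for every $r\in\mathbb{N}$ there exist a point and two local sections sharing the same $r$-jet at that point but on which $D$ differs. Extracting a sequence, one obtains points $x_n\to x_0$ and pairs $\alpha_n,\beta_n$ of sections supported in pairwise disjoint neighbourhoods $U_n$ of $x_n$ shrinking to $x_0$, with $j_n\alpha_n(x_n)=j_n\beta_n(x_n)$ but $(D\alpha_n)(x_n)\neq(D\beta_n)(x_n)$. Taylor's theorem with remainder, together with the shrinking disjoint supports, lets me rescale these data so that the sections obtained by glueing them to $\varepsilon_0$ lie in $C^\infty$. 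The decisive step is to organise the $\alpha_n,\beta_n$ into a single smooth, compactly supported family of variations $\{\zeta_s\}_{s\in\mathbb{R}}$ in the sense of Definition \ref{Def: smooth compactly supported d-dimensional family of variations} which, for two interleaved parameter sequences $s_n,s_n'\to 0$, restricts near $x_n$ to $\alpha_n$ respectively to $\beta_n$ and equals $\varepsilon_0$ outside $\bigcup_n U_n$. By germ-dependence, $(D\zeta_{s_n})(x_n)$ and $(D\zeta_{s_n'})(x_n)$ then take the distinct values $(D\alpha_n)(x_n)$ and $(D\beta_n)(x_n)$; by compressing the parameter gap between $s_n$ and $s_n'$ faster than the gap between these $D$-values, the difference quotient of $s\mapsto(D\zeta_s)(x_n)$ is forced to diverge. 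This contradicts the smoothness of $(s,x)\mapsto(D\zeta_s)(x)$ demanded by weak regularity, since $\{D\zeta_s\}_s$ must again be a family of variations. Hence a finite, locally bounded order $r$ exists on some relatively compact $U\ni x_0$.

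With finiteness in hand I would build the fibre map. For a neighbourhood $Z^r\subseteq J_rE$ of $j_r\varepsilon_0(U)$ with $\pi_{J_rE}Z^r=U$, define $d\colon Z^r\to F$ by $d(j):=(D\varepsilon)(x)$, where $x=\pi_{J_rE}(j)$ and $\varepsilon$ is any local section with $j_r\varepsilon(x)=j$; such representatives always exist (realize the jet by a polynomial in the chart), and the finiteness of the order guarantees that the value is independent of the chosen representative, so $d$ is well defined and $\pi_F\circ d=\pi_{J_rE}$ holds by construction. Smoothness of $d$ is the last point: one realizes the jets in $Z^r$ by a single smooth section of the appropriate pull-back bundle, i.e. a smooth family of sections depending smoothly on the jet coordinates, so that $d$ precomposed with this realization coincides with $D$ evaluated on a smooth family and is therefore smooth by the weak regularity of $D$; since the realization is a submersion onto $Z^r$, $d$ itself is smooth. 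The identity $(D\varepsilon)(x)=d(j_r\varepsilon)(x)$ for $x\in U$ with $j_r\varepsilon(U)\subseteq Z^r$ is then precisely the locally bounded order condition of Definition \ref{Def: Differential operators}, and covering $B$ by such neighbourhoods completes the argument.

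I expect the main obstacle to be the second paragraph: carrying out the $C^\infty$ bookkeeping — the Taylor remainder estimates, the disjoint-support glueing, and the rate at which the parameter interval is compressed — so that the accumulating family $\{\zeta_s\}_s$ is genuinely a smooth compactly supported family of variations while the difference quotient of the $D$-values blows up. It is exactly this tension between smoothness in the parameter and the hypothesised unbounded order that produces the contradiction, and controlling it rigorously in the nonlinear setting is the delicate part.
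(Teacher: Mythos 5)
There is no proof in the paper to compare against: Appendix \ref{Appendix: Peetre-Slovak} only recalls the statement and the ancillary definitions, and explicitly defers to the references [NS14] (Navarro--Sancho) and [KM16] (Khavkine--Moretti) for the actual argument. Your proposal is therefore best measured against those sources, and at the level of strategy it reproduces them faithfully: (i) use germ-dependence plus weak regularity to force, near a fixed $(x_0,\varepsilon_0)$, factorization through a finite-order jet, arguing by contradiction with a smooth compactly supported family of variations; (ii) define the fibre map $d$ on a neighbourhood $Z^r\subseteq J_rE$ by evaluating $D$ on polynomial (or otherwise canonical) representatives of jets, and deduce smoothness of $d$ from weak regularity applied to a jet-parametrized family of sections (after cutting off so that the family is genuinely a compactly supported variation, a point you should state explicitly since Definition \ref{Def: weak regularity condition} requires it).

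The genuine gap is the one you yourself flag: the contradiction argument in your second paragraph is a placeholder, and it is precisely where the entire difficulty of the theorem is concentrated. Two specific obstructions are glossed over. First, the interleaving scheme is overconstrained: smoothness of the input family $\{\zeta_s\}$ forces $\|\alpha_n-\beta_n\|_{C^k(U_n)}$ to be small compared with every power of the parameter gap $|s_n-s_n'|$, while the blow-up of the difference quotient requires $|s_n-s_n'|$ to be small compared with $\delta_n=|(D\alpha_n)(x_n)-(D\beta_n)(x_n)|$; since $D$ is nonlinear, you cannot rescale $\alpha_n-\beta_n$ to reconcile these demands without losing all control of $\delta_n$, so the pairs $(\alpha_n,\beta_n)$ and the points $x_n$ must be selected by a much more careful process (this is the content of Slov\'ak's ``whirl'' construction, respectively of the key lemma in Navarro--Sancho). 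Second, the negation of ``locally bounded order'' in Definition \ref{Def: Differential operators} quantifies over the order $r$, the neighbourhood $U$, \emph{and} the jet neighbourhood $Z^r$, so the sequence extraction at the start of your argument needs a diagonal procedure over all three, not just over $r$; as written, the sentence ``for every $r$ there exist a point and two local sections sharing the same $r$-jet'' does not yet follow from the negated statement. None of this makes your outline wrong --- it is the right skeleton, and it is the same skeleton the paper's cited references flesh out --- but as a self-contained proof it stops exactly at the step that distinguishes the Peetre--Slov\'ak theorem from a routine verification.
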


\section*{Acknowledgments}

We are grateful to Matteo Capoferri, Federico Faldino, Klaus Fredenhagen, Igor Khavkine, Valter Moretti and Nicola Pinamonti for the valuable comments.

\end{document}